\newtheorem*{theorem*}{Theorem}
\newtheorem{theorem}{Theorem}
\newtheorem{lem}{Lemma}
\newtheorem*{remark*}{Remark}
\newtheorem{example}{Example}
\newtheorem{definition}{Definition}
\newcommand{\RR}{\mathbb{R}}
\newcommand{\NN}{\mathbb{N}}
\newcommand{\EE}{\mathbb{E}}
\newcommand{\RRnn}{\mathbb{R}_{\geq 0}}
\newcommand{\RRe}{\mathbb{R}_{\geq 0}^\infty}
\newcommand{\cI}{\mathcal{I}}
\newcommand{\pgcl}{\textsc{pGCL}}
\newcommand{\pwhile}{\textsc{pWhile}}
\newcommand{\prhl}{\textsc{pRHL}}
\newcommand{\eprhl}{$\mathbb{E}$\textsc{pRHL}}
\newcommand{\Kant}[1]{#1^\#}
\newcommand{\cA}{\mathcal{A}}
\newcommand{\cP}{\mathcal{P}}
\newcommand{\cR}{\mathcal{R}}
\newcommand{\cS}{\mathcal{S}}
\newcommand{\Abort}{\mathbf{abort}}
\newcommand{\Skip}{\mathbf{skip}}
\newcommand{\Assn}[2]{#1 \leftarrow #2}
\newcommand{\Rand}[2]{#1 \stackrel{\raisebox{-.25ex}[.25ex]%
 {\tiny $\mathdollar$}}{\raisebox{-.2ex}[.2ex]{$\leftarrow$}} #2}
\newcommand{\Cond}[3]{\mathbf{if}\ #1\ \mathbf{then}\ #2\ \mathbf{else}\ #3}
\newcommand{\Condt}[2]{\mathbf{if}\ #1\ \mathbf{then}\ #2}
\newcommand{\WWhile}[2]{\mathbf{while}\ #1\ \mathbf{do}\ #2}
\newcommand{\Exp}{\mathcal{E}}
\newcommand{\Inv}{\mathcal{I}}
\newcommand{\Pred}{\mathcal{P}}
\newcommand{\Dist}{\mathbf{Dist}}
\newcommand{\Mem}{\mathbf{State}}
\newcommand{\EXP}{\mathbf{Exp}}
\newcommand{\denot}[1]{\llbracket #1 \rrbracket}
\newcommand{\ind}[1]{[#1]}
\newcommand{\gcpsymbol}{\widetilde{\mathit{rpe}}}
\newcommand{\gcp}[2]{\gcpsymbol(#1 , #2)}
\newcommand{\sgcp}[2]{\mathit{rpe}(#1 ,\, #2)}
\newcommand{\wpe}[2]{\mathit{wpe}(#1 ,\, #2)}
\newcommand{\sidel}{\langle 1\rangle}
\newcommand{\sider}{\langle 2\rangle}
\newcommand{\wpel}[2]{\mathit{wpe}\sidel(#1 ,\, #2)}
\newcommand{\wper}[2]{\mathit{wpe}\sider(#1 ,\, #2)}
\newcommand{\dunit}[1]{\delta(#1)}
\newcommand{\dbind}[2]{\EE_{#1}[#2]}
\newcommand{\supp}{\text{supp}}
\newcommand{\wpc}{\mathit{wpe}}
\newcommand{\unif}[1]{U(#1)}
\newcommand{\Val}{\mathbf{Val}}
\newcommand{\Var}{\mathbf{Var}}
\newcommand{\ehl}[5]{\{ #3 \}\ #1 \sim_{#5} #2\ \{ #4\}}
\newif\ifcomments
\newcommand{\gb}[1]{\ifcomments\textit{\color{green}[GB] : #1}\fi}
\begin{document}

\title{A Pre-Expectation Calculus for Probabilistic Sensitivity}


\author{Alejandro Aguirre}
\affiliation{%
   \institution{IMDEA Software Institute/Universidad Polit\'ecnica de Madrid}}

\author{Gilles Barthe}
\affiliation{%
 \institution{Max Planck Institute for Security and Privacy/IMDEA Software Institute}}

\author{Justin Hsu}
\affiliation{%
 \institution{University of Wisconsin--Madison}}

\author{Benjamin Lucien Kaminski}
\affiliation{%
 \institution{RWTH Aachen University}}

\author{Joost-Pieter Katoen}
\affiliation{%
 \institution{RWTH Aachen University}}

\author{Christoph Matheja}
\affiliation{%
 \institution{RWTH Aachen University/ETH Zurich}}


\begin{abstract}
Sensitivity properties describe how changes to the input of a
program affect the output, typically by upper bounding the distance
between the outputs of two runs by a monotone function of the
distance between the corresponding inputs. When programs are
probabilistic, the distance between outputs is a distance between
distributions. The Kantorovich lifting provides a general way of
defining a distance between distributions by lifting the distance of
the underlying sample space; by choosing an appropriate distance on
the base space, one can recover other usual probabilistic distances,
such as the Total Variation distance. We develop a relational
pre-expectation calculus to upper bound the Kantorovich distance
between two executions of a probabilistic program. We illustrate our
methods by proving algorithmic stability of a machine learning algorithm,
convergence of a reinforcement learning algorithm, and fast mixing for card
shuffling algorithms. We also consider some extensions: proving lower bounds
on the Total Variation distance and convergence to the uniform distribution.
Finally, we describe an asynchronous extension of our calculus to reason about
pairs of program executions with different control flow.
\end{abstract}




\maketitle


\section{Introduction}
\label{sec:intro}

\emph{Sensitivity properties} describe how changes in program inputs
affect program outputs, with respect to particular distances on
program inputs and program outputs. By varying these distances,
sensitivity properties are relevant in many application areas,
including:
\begin{inparaenum}[(i)]
\item numerical computations, where distances are taken between real numbers,
\item numerical queries, where program inputs are databases, and the
  distance between them is the number of differing entries, and
\item learning algorithms, where the distance between two training
  sets is the number of differing examples, and the distance between
  outputs measures the difference in errors labeling unseen examples.
\end{inparaenum}
This paper is concerned with sensitivity properties of probabilistic
programs. As such programs return distributions over their output
space, the corresponding notions of sensitivity use distances over
distributions. The \emph{Total Variation} (TV) distance (a.k.a.\,
statistical distance), for example, is a widely used notion of distance that
measures the maximal difference of probabilities for two
distributions. One key benefit of the TV distance is that it is
defined for distributions over arbitrary spaces. However, it is
sometimes desirable to consider distances inherited from the
underlying space. It is common to consider classes of distances on
distributions that are obtained by lifting a distance on an
underlying space. This lifting is defined by the so-called
\emph{Kantorovich metric}, which yields a family of probabilistic
metrics obtained by lifting a distance $\Exp$ on a ground set $X$ to a
distance $\Kant{\Exp}$ on distributions over $X$. The class of
Kantorovich metrics cover many notions of distance, including the TV
distance which can be obtained by applying the Kantorovich lifting to
the discrete distance.

\subsubsection*{Approach}
We develop a \emph{relational pre-expectation calculus} for reasoning
about sensitivity of probabilistic computations under the Kantorovich
metric. Relational pre-expectations are mappings expressing a
quantitative relation (e.g., a distance or metric) between states, and
are modelled as maps of the form $\Mem \times \Mem \to [0, \infty]$.  Our
calculus takes as input a probabilistic program $c$ written in a core
imperative language and a pre-expectation $\Exp$ between output states
and determines a pre-expectation $\sgcp{c}{\Exp}$ between input states.
The calculus is a sound approximation of sensitivity, in the sense
that running the program $c$ on inputs at distance smaller than
$\sgcp{c}{\Exp}$ yields output distributions at distance smaller than
$\Kant{\Exp}$.

Technically, our calculus is inspired by early work on probabilistic dynamic logic due to \mbox{\citet{Kozen85}}
in which maps $\Exp\colon \Mem \to [0, \infty]$ serve as quantitative counterparts of Boolean predicates $P\colon \Mem \to \{ 0, 1 \}$.
\citet{DBLP:series/mcs/McIverM05} later coined the term \emph{expectation}---not to be confused with expected values---for such maps $\Exp$.
Moreover, they developed a weakest pre-expectation calculus for the probabilistic imperative language \pgcl. Their calculus was designed as a generalization of Dijkstra's weakest pre-conditions supporting both probabilistic and non-deterministic choice.  The basic idea is to define an operator
$\wpc(c, \Exp)$ that transforms an expectation $\Exp$ averaged over
the \emph{output} distribution of a program $c$ into an expectation
evaluated over the \emph{input} state. In this way, the expectation is
transformed by the effects of the probabilistic program in a backwards
fashion, much like how predicates are transformed through Dijkstra's
weakest pre-conditions.

Our pre-expectation calculus operates similarly, but---as it aims to measure distances between distributions of outputs in terms of inputs---manipulates \emph{relational} expectations instead. We next motivate
the need for relational expectations, and explain why they are challenging.

\paragraph*{Why do we need relational pre-expectations?}
The classical weakest pre-expectation calculus enjoys strong theoretical
properties: in particular, it is both sound and complete (in an extensional sense)
w.r.t.\, common program semantics (cf.\ \citet{GretzKM14}). 
Therefore, weakest pre-expectations can---in principle---be applied to reason about bounds on the Total Variation distance: Given a program $c$, 
\begin{inparaenum}[(i)]
\item take a copy $c'$ over a fresh set of program variables---e.g. if variable $x$ appears in $c$, substitute it by $x'$ in $c'$---and 
\item determine the weakest pre-expectation $\wpc(c;c', \Exp)$, where the expectation $\Exp$ measures the distance between variables in $c$ and their counterparts in $c'$.
\end{inparaenum}

However, this na\"ive approach is not practical for analyzing sensitivity:
the TV distance, for example, is defined as a
maximum of a difference of probabilities over all events of the output
space.
While the output space---and thus potentially the TV distance---is often \emph{unbounded}, the calculus of \citet{DBLP:series/mcs/McIverM05} is restricted to bounded expectations.
Moreover, the above approach pushes the difficulty of reasoning about sensitivity properties into the task of finding suitable invariants for probabilistic programs---a highly challenging task on its own.
In particular, finding invariants may involve reasoning about probabilistic independence, which is not readily available when using weakest pre-expectations.
%
%
In fact, mathematicians have long observed that reasoning about the TV
distance or the Kantorovich metric directly from their definition is
inappropriate. Rather, they rely on \emph{probabilistic
  couplings}~\citep{Villani08}, a mathematical tool for relating two
different distributions. Relational pre-expectations naturally connect
with probabilistic couplings, and capture well-established proof principles used by
mathematicians for reasoning about the TV distance.

\paragraph*{Challenges of relational pre-expectations}
Relational pre-expectations pose a number of specific challenges
compared to their unary counterpart. First, the Kantorovich distance
cannot be defined inductively on the structure of programs. More
specifically, the Kantorovich distance between two runs of $c; c'$ is
not a simple combination of the Kantorovich distances between two runs
of $c$ and two runs of $c'$ (we provide a counterexample in
Section~\ref{sec:couplings}). Instead, we define a pre-expectation
calculus $\gcp{c}{\Exp}$ that can compute a compositional
\emph{upper-bound} of the Kantorovich distance---this is sufficient
for proving sensitivity properties.

Second, proofs of soundness and continuity for our relational
pre-expectation calculus are significantly more involved than for the
usual weakest pre-expectation calculus, and use non-elementary results
from optimal transport theory. In particular, we are only able to prove
continuity for finitely supported distributions, and soundness for
discrete distributions.

Third, relational calculi are naturally better suited to reason about
two executions that follow the same control-flow. We offer useful
support for reasoning about executions with different control-flow,
through a careful generalization of the rules for conditionals and
loops. While our rules do not suffice for arbitrary examples (it
remains an open problem to develop relatively complete verification
approaches for relational properties of probabilistic programs), they
suffice for non-trivial examples that exhibit asynchronous behavior.



\paragraph*{Applications}
We demonstrate our technique on several applications. In our first application,
we formalize an \emph{algorithmic stability} property of machine-learning
algorithms. Informally, algorithmic stability describes how much the output
parameters from a learning algorithm are affected when one input training
example is changed; this notion of probabilistic sensitivity is known to imply
generalization and prevent overfitting~\citep{BousquetE02}. We use our calculus
for proving algorithmic stability of a commonly-used learning algorithm:
stochastic gradient descent (SGD). We use these examples to constrast our
approach with prior work.

Then, we consider a pair of applications showing convergence properties.  We
first formalize convergence of a reinforcement learning
algorithm~\citep{DBLP:journals/ml/Sutton88}, following a recent analysis
by~\citet{DBLP:conf/aistats/AmortilaPPB20}. 
Then, we show convergence and rapid mixing of several
card shuffling algorithms~\cite{aldous1983random}.  
We show that the TV distance between the outputs of two
probabilistic loops decreases to 0 as the number of loop iterations
increases---that is, the output distributions from any two inputs
\emph{converge} to the same distribution. Moreover, our technique is
precise enough to describe the rate of this convergence. Upper bounds
on convergence speed are key properties in algorithms that generate
samples form complex distributions, such as Markov Chain Monte
Carlo.

\paragraph*{Extensions: uniformity and lower bounds}
Next, we show how to formalize other properties complementing our
bounds on convergence rate. First, we prove with our system that some
card shuffling examples converge to the uniform distribution. Second,
we study lower bounds--a task already challenging in the non-relational $\wpc$ calculus~\cite{DBLP:journals/pacmpl/HarkKGK20}. 
While upper bounds on convergence speed are
often the main focus of formal analyses of probabilistic processes,
lower bounds are also useful to understand how far apart the output
distributions must be. The Monge-Kantorovich theorem provides a
general method for proving lower bounds on the Kantorovich metric by
using the unary $\wpc$ transformer from
\citet{DBLP:series/mcs/McIverM05}. However, proving lower bounds poses
some challenges~\cite{DBLP:journals/pacmpl/HarkKGK20}: we have to find a separating event and establish both
upper and lower bounds on its probability. We show how to solve these
challenges for card shuffling examples.

\paragraph*{Extensions: asynchronous reasoning}
Finally, we describe extensions to our calculus for asynchronous reasoning. We
show how to prove relational properties when pairs of program executions have
different control flow. We demonstrate our asynchronous extensions to reason
about a program generating a binomial distribution.

\subsubsection*{Contributions and outline.}
After introducing preliminaries on probability theory and the Kantorovich
distance (\S~\ref{sec:prelims}), we present our main contributions:
\begin{itemize}
  \item We define a sound, compositional, relational pre-expectation calculus
    for computing upper-bounds on the Kantorovich distance. We introduce
    convenient proof rules for sampling commands and loops, and we show that the
    core fragment of probabilistic relational Hoare logics, namely
    \prhl~\cite{BartheGZ09} and \eprhl~\cite{BartheEGHS18}, can
    be embedded into our calculus (\S~\ref{sec:couplings}).
  \item We apply our calculus to three case studies.
    As a warmup example, we use our calculus to provide a clean proof of
    algorithmic stability of stochastic gradient descent~\citep{HardtRS16}
    (\S~\ref{sec:ex-stability}).
    Second, we formalize convergence of TD(0), an algorithm from the Reinforcement
    Learning literature~\citep{DBLP:journals/ml/Sutton88} (\S~\ref{sec:ex-rl}).
    Third, we apply our calculus to show rapid convergence of random walks and card
    shuffling algorithms~\citep{aldous1983random}
    (\S~\ref{sec:ex-shuffle}).
  \item We show two complementary extensions to the previous examples: we use
    the weakest pre-expectation transformer from McIver and Morgan to compute
    lower bounds for the distance between distributions, and we use our calculus
    to show that the limiting distribution is uniform (\S~\ref{sec:extensions}).
  \item We present proof rules for reasoning about programs with asynchronous
    control flow (\S~\ref{sec:async}).
\end{itemize}
Finally, we survey related work (\S~\ref{sec:rw}) and conclude
(\S~\ref{sec:conc}).

\section{Mathematical Preliminaries} \label{sec:prelims}
We briefly recap the foundations required for 
relational reasoning about sensitivity properties:
\begin{inparaenum}[(1)]
  \item probability theory,
  \item probabilistic programming languages, and
  \item distances on probability distributions.
\end{inparaenum}
A comprehensive treatment of these topics is found, e.g., in the
textbooks~\cite{ash2000probability,DBLP:series/mcs/McIverM05,Villani08}.
\subsection{Basic probability concepts}
We will use sub-distributions to model probabilistic behavior.
A \emph{sub-distribution}
over a countable set $A$ is a function $\mu \colon A \to [0, 1]$ assigning a
probability to each element of $A$.  Probabilistic \emph{events} are
subsets $B \subseteq A$; the probability of $B$ is denoted
$\mu(B)$ and defined by $\mu(B) = \sum_{b \in B} \mu(b)$. 
The \emph{support} of $\mu$ is the set of all events $a \in A$ with $\mu(a) > 0$.
Moreover, we let
$|\mu|= \mu(A)$. As usual, the probabilities in any sub-distribution
must sum up to at most $1$: $|\mu| \leq 1$. We call $\mu$ a distribution if $|\mu| = 1$.
We let $\Dist(A)$ denote the set of \emph{sub-distributions} over $A$.

Given a sub-distribution $\mu \in \Dist(A_1 \times A_2)$ over a product, 
its left and right \emph{marginals}, $\pi_1(\mu)$ and $\pi_2(\mu)$, are sub-distributions 
over $A_1$ and $A_2$, respectively, which are given by
$
  \pi_1(\mu)(x_1) = \sum_{x_2 \in X} \mu(x_1, x_2),
$
and
$
  \pi_2(\mu)(x_2) = \sum_{x_1 \in X} \mu(x_1, x_2)\,.
$

The \emph{Dirac distribution} $\dunit{a} \in
\Dist(A)$ is the point distribution at $a \in A$, $\dunit{a}(a') = [a = a']$,
where the right-hand-side is an \emph{Iverson-bracket} which evaluates to $1$ if
the formula inside (in this case, $a =a'$) evaluates to true, and to $0$ otherwise. 
If $f \colon A \to \RRe$ is a function mapping into the extended reals, we can take its
\emph{expected value} $\EE_{\mu}[f]$  with respect to some sub-distribution $\mu \in \Dist(A)$:
$
  \EE_{\mu}[f] = \sum_{a \in A} f(a) \cdot \mu(a) .
$
If the sum diverges, the expected value is $\infty$.
We assume that addition and multiplication are extended in the natural way, with
the convention $0 \cdot \infty = \infty \cdot 0 = 0$.

\subsection{Programming language and semantics}
We work with a standard probabilistic imperative language
\pwhile. This language has commands defined by the following grammar:
\begin{align*}
	c &\coloneqq \Skip
         \mid \Assn{x}{e}
         \mid \Rand{x}{d}
         \mid c ; c
	       \mid \Cond{e}{c}{c}
         \mid \WWhile{e}{c}~.
\end{align*}
Variables $x$ are drawn from an arbitrary but finite set $\Var$ of variable
names. Expressions $e$ are largely standard, formed from variables and basic
operations (e.g., integer addition, boolean conjunction). To handle programs
with (static) arrays, we assume expressions include basic array operations for
accessing and updating. For instance, when $a$ is an array variable we have
syntactic sugar:
\[
	a[e] \triangleq \mathbf{Lookup}(a, e) \quad\text{\textit{(expression)}} \qquad\textnormal{and}\qquad
	\Assn{a[e]}{e'} \triangleq \Assn{a}{\mathbf{Update}(a, e, e')} \quad\text{\textit{(command)}}
\]
The random sampling command $\Rand{x}{d}$ takes a sample from some primitive
distribution $d$ and stores it in $x$. For simplicity, we assume that primitive
distributions do not have free program variables, and we interpret them as full
distributions $\denot{d} \colon \Dist(D)$ over some countable set $D$, possibly
different for different distributions. We will often use the uniform
distribution $\unif{S}$ when $S$ is a finite, non-empty set; for instance, for a
positive integer $N$ we will write $[N]$ for the set of integers $\{ 0, \dots, N
  - 1 \}$, so that $\Rand{x}{\unif{[N]}}$ samples each number in $[N]$ with
probability $1/N$ and stores it in $x$. The distributions can also be
parameterized by some more complex expression, for instance in $\Rand{x}{[y]}$
for a program variable $y$.

\pwhile\ programs transform \emph{states}, which are finite maps $s \colon \Var
\to D$; we write $\Mem$ for the set of all states.  The semantics of a program
$c$ is a map $\denot{c} \colon \Mem \to \Dist(\Mem)$ assigning a
sub-distribution over possible outputs to each input. For example, for the
random sampling command, we define 
\[
  (\denot{\Rand{x}{d}} s)(s') \triangleq 
  \begin{cases}
    s(d)(s'(x)) &: s(y) = s'(y) \text{ for all } y \neq x \\
    0 &: \text{otherwise}
  \end{cases}
\]%
The semantics of the remaining language constructs is standard and deferred to
the appendix. As we only work with discrete primitive distributions and states
have finitely many variables, output distributions programs always have
countable support. 

To express properties about pairs of states we use \emph{relational expectations},
which are maps of type $\Mem \times \Mem \to \RRe$; we write $\EXP$ for the set
of all relational expectations. 
This set is equipped with the pointwise order
inherited from the order on $\RRe$, i.e., $\Exp \leq \Exp'$ if and only if
$\Exp(s_1, s_2) \leq \Exp'(s_1, s_2)$ for all pairs $(s_1,s_2)$ of states. Since $\RRe$ is a
complete lattice and $\EXP$ has the pointwise order, $\EXP$ is also a complete
lattice; the top and bottom elements are the constant relational expectations
$\infty$ and $0$, which send all pairs of states to $\infty$ and $0$ respectively.

For denoting specific relational expectations, we borrow notation from
relational Hoare logic~\citep{Benton04}: We tag variables with $\sidel$ or
$\sider$ to refer to their value in the first or the second state, respectively.
For instance, $[x\sidel=x\sider]$ is a relational expectation encoding the
predicate $\lambda \langle s_1,s_2\rangle.~ [s_1(x)=s_2(x)]$.

\subsection{Distances between probability distributions}
Various notions of distances between distributions allow us to specify sensitivity
properties of probabilistic programs. A popular example is the following: 
\begin{definition}[Total Variation distance]
The Total Variation (TV) distance between $\mu_1, \mu_2 \in \Dist(X)$ is defined as:
  $
    \mathit{TV}(\mu_1,\, \mu_2) \triangleq \frac{1}{2} \sum_{x \in X} \bigl| \mu_1(x) - \mu_2(x) \bigr|~.
  $
\end{definition}
\noindent{}%
The term distance (or \emph{metric}) is justified as 
$\mathit{TV}(\mu_1,\mu_2)$ is symmetric, satisfies the triangle inequality, 
and maps to zero if and only if $\mu_1 = \mu_2$.
The normalization factor of $\smash{\tfrac{1}{2}}$ ensures that the
TV distance is within $[0, 1]$. Roughly speaking, the TV distance measures the largest
difference in probabilities of any event between two given distributions. 
%

Note that the TV distance does not require a metric space, i.e., the underlying set $X$ is not 
necessarily equipped with any metric.
If $X$ is a metric space, we can define:
\begin{definition}[Kantorovich distance]\label{def:kantorovich}
Let $X$ be a (extended) metric space with a distance
$\Exp \colon X\times X \to \RRe$. The \emph{Kantorovich distance} is a canonical
lifting of $\Exp$ to a function $\Kant{\Exp} \colon \Dist(X) \times \Dist(X) \to
\RRe$ that defines a metric on $\Dist(X)$. This distance is defined as
\[
  \Kant{\Exp}(\mu_1, \mu_2) = \inf_{\mu \in \Gamma(\mu_1, \mu_2)} \EE_{\mu}[ \Exp ], 
\]
where $\Gamma(\mu_1, \mu_2)$ is the set of \emph{probabilistic couplings} of $\mu_1, \mu_2$, given by
\[
  \Gamma(\mu_1, \mu_2) = \{ \mu \in \Dist(X \times X) \mid \pi_i(\mu) = \mu_i,
  \text{ for } i = 1, 2 \} .
\]
The set $\Gamma(\mu_1, \mu_2)$ is non-empty provided $|\mu_1|=|\mu_2|$. Otherwise, $\Gamma(\mu_1, \mu_2) = \emptyset$
and $\Kant{\Exp}(\mu_1, \mu_2) = \infty$. 
\end{definition}%
\noindent{}%
The coupling-based definition of the Kantorovich distance is more abstract than
other distances between distributions, but its generality turns out to be a
strength. First, we can recover the TV distance as a lifting of the
discrete metric:
\begin{theorem}[Total variation and Kantorovich distance] \label{thm:tv}
  Let $\mu_1, \mu_2 \in \Dist(X)$ such that $|\mu_1|=|\mu_2|=1$.  
  If the discrete metric $\Exp \colon X \times X \to \{ 0, 1 \}$ is given by
  $\Exp(x_1, x_2) = [x_1 \neq x_2]$, then
  $
  \mathit{TV}\bigl(\mu_1,\, \mu_2\bigr) = \Kant{\Exp} \bigl( \mu_1,\, \mu_2 \bigr) .
  $
\end{theorem}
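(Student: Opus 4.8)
The plan is to establish the equality $\mathit{TV}(\mu_1,\mu_2) = \Kant{\Exp}(\mu_1,\mu_2)$ by proving two inequalities. Since $|\mu_1| = |\mu_2| = 1$, the coupling set $\Gamma(\mu_1,\mu_2)$ is non-empty, and for the discrete metric $\Exp(x_1,x_2) = [x_1 \neq x_2]$ we have, for any coupling $\mu \in \Gamma(\mu_1,\mu_2)$, the identity $\EE_\mu[\Exp] = \mu(\{(x_1,x_2) : x_1 \neq x_2\}) = 1 - \mu(\{(x,x) : x \in X\})$. So $\Kant{\Exp}(\mu_1,\mu_2) = 1 - \sup_{\mu \in \Gamma(\mu_1,\mu_2)} \sum_{x \in X} \mu(x,x)$, and it suffices to show that the maximal "mass on the diagonal" achievable by a coupling equals $1 - \mathit{TV}(\mu_1,\mu_2) = \sum_{x \in X} \min(\mu_1(x), \mu_2(x))$.

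For the lower bound on the diagonal mass (equivalently, $\Kant{\Exp} \le \mathit{TV}$), I would exhibit an explicit coupling. Write $p = \mathit{TV}(\mu_1,\mu_2)$ and define $\nu(x) = \min(\mu_1(x),\mu_2(x))$, so that $\sum_x \nu(x) = 1 - p$. When $p = 0$ the distributions are equal and the diagonal coupling works; otherwise, set the "defect" distributions $\rho_i(x) = (\mu_i(x) - \nu(x))/p$, which are genuine distributions because $\sum_x (\mu_i(x) - \nu(x)) = 1 - (1-p) = p$. Then define the coupling $\mu(x_1,x_2) = \nu(x_1)\cdot[x_1 = x_2] + p\cdot \rho_1(x_1)\cdot\rho_2(x_2)$; a short check using $\mu_i(x) = \nu(x) + p\,\rho_i(x)$ confirms $\pi_1(\mu) = \mu_1$ and $\pi_2(\mu) = \mu_2$. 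For this coupling $\EE_\mu[\Exp] \le p$ (the product part contributes at most $p$ and the diagonal part contributes $0$), so $\Kant{\Exp}(\mu_1,\mu_2) \le \mathit{TV}(\mu_1,\mu_2)$.

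For the reverse inequality $\mathit{TV} \le \Kant{\Exp}$, I would argue that no coupling can put more than $\sum_x \min(\mu_1(x),\mu_2(x))$ mass on the diagonal: for any $\mu \in \Gamma(\mu_1,\mu_2)$ and any $x$, $\mu(x,x) \le \pi_1(\mu)(x) = \mu_1(x)$ and likewise $\mu(x,x) \le \mu_2(x)$, hence $\mu(x,x) \le \min(\mu_1(x),\mu_2(x))$; summing over $x$ gives $\sum_x \mu(x,x) \le 1 - p$, i.e. $\EE_\mu[\Exp] \ge p$ for every coupling, and therefore $\Kant{\Exp}(\mu_1,\mu_2) \ge \mathit{TV}(\mu_1,\mu_2)$. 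Combining the two bounds yields the claimed equality. The only mildly delicate point is the bookkeeping in the explicit coupling construction (and handling the degenerate case $p = 0$, or more generally ensuring the infimum in Definition~\ref{def:kantorovich} is actually attained by the constructed coupling); the rest is elementary manipulation of sums, so I anticipate no real obstacle.
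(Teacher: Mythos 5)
Your proof is correct. The direction $\Kant{\Exp} \leq \mathit{TV}$ is handled by essentially the same optimal coupling as the paper: your $\mu(x_1,x_2) = \nu(x_1)[x_1=x_2] + p\,\rho_1(x_1)\rho_2(x_2)$ coincides with the paper's coupling (the product term vanishes on the diagonal since $\rho_1(x)\rho_2(x)=0$ for every $x$), and the paper likewise handles $p=0$ by the convention $0/0 \coloneqq 0$.

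For the reverse direction $\mathit{TV} \leq \Kant{\Exp}$, you take a genuinely different route. You work pointwise on the diagonal: for any coupling $\mu$, $\mu(x,x) \le \min(\mu_1(x),\mu_2(x))$, hence $\sum_x \mu(x,x) \le 1-p$ and $\EE_\mu[\Exp] \ge p$. This leans on the identity $\mathit{TV}(\mu_1,\mu_2) = 1 - \sum_x \min(\mu_1(x),\mu_2(x))$. The paper instead uses the event-based characterization $\mathit{TV} = \sup_S |\mu_1(S)-\mu_2(S)|$ and shows $|\mu_1(S)-\mu_2(S)| = |\mu(S\times(X\setminus S)) - \mu((X\setminus S)\times S)| \le \mu(\{x_1\neq x_2\}) = \EE_\mu[\Exp]$. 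Both rest on a standard equivalent formulation of TV; yours is arguably the more elementary of the two since it avoids the set-algebra manipulations. One small note: your closing concern about whether the infimum is \emph{attained} is a non-issue — you only need the two inequalities, not attainment (though attainment does in fact follow from them).
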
%
%
%
%
%
\noindent{}Another advantage of the Kantorovich distance is that it is defined as an infimum.
For our goal of proving continuity, it suffices to compute an upper bound of the
distance, which corresponds to determining $\EE_{\mu}[ \Exp ]$ for some particular coupling $\mu$.

Traditionally, the definition of $\Kant{\Exp}$ is restricted to functions $\Exp$
defining a metric on $X$. However, the definition of $\Kant{\Exp}$ extends \emph{mutatis mutandis} to arbitrary functions $\Exp$. We abuse
terminology and use the term Kantorovich distance also in the more general case.
For instance, we can use this more general notion to bound the difference
between the expected values of two functions on the outputs of two program runs:%
\begin{theorem}[Absolute expected difference] \label{thm:abs-diff}
	Let $\mu_1,\mu_2\in\Dist(X)$ such that $|\mu_1|=|\mu_2|=1$,
        and let $f_1, f_2 \colon X \to \RRe$. Let $\Exp\colon
        X\times X \to \RRe$ be defined by
	$	\Exp(x_1, x_2) = |f_1(x_1) - f_2(x_2)|~.$
	Then
	$
		\bigl| \EE_{\mu_1}[ f_1 ] - \EE_{\mu_2}[ f_2 ] \bigr|
		\leq \Kant{\Exp}\bigl(\mu_1, \mu_2\bigr)~.
	$
\end{theorem}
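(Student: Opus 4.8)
The plan is to use the variational definition of the Kantorovich distance (Definition~\ref{def:kantorovich}): it suffices to prove that for \emph{every} coupling $\mu \in \Gamma(\mu_1,\mu_2)$ one has $\bigl|\EE_{\mu_1}[f_1] - \EE_{\mu_2}[f_2]\bigr| \leq \EE_{\mu}[\Exp]$, and then to take the infimum over $\mu$ on the right. Note $\Gamma(\mu_1,\mu_2)$ is non-empty since $|\mu_1| = |\mu_2| = 1$ (the independent product $\mu_1 \otimes \mu_2$ is always a coupling when the masses agree), so the infimum is over a non-empty set. If $\Kant{\Exp}(\mu_1,\mu_2) = \infty$ the inequality is vacuous, so we may assume it is finite and, in the argument below, restrict attention to couplings $\mu$ with $\EE_\mu[\Exp] < \infty$.

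The first step is to rewrite the two marginal expectations as expectations against a common coupling $\mu$. Since $\pi_1(\mu) = \mu_1$ and $f_1$ is non-negative, Tonelli's theorem for series (all summands are in $\RRe$) gives
\[
  \EE_{\mu_1}[f_1] \;=\; \sum_{x_1 \in X} f_1(x_1)\,\mu_1(x_1)
  \;=\; \sum_{x_1 \in X} f_1(x_1) \sum_{x_2 \in X} \mu(x_1,x_2)
  \;=\; \sum_{(x_1,x_2) \in X \times X} f_1(x_1)\,\mu(x_1,x_2)
  \;=\; \EE_{\mu}[f_1 \circ \pi_1]\,,
\]
and symmetrically $\EE_{\mu_2}[f_2] = \EE_{\mu}[f_2 \circ \pi_2]$. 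Writing $g_i = f_i \circ \pi_i$, the remaining step is the pointwise triangle inequality $|g_1 - g_2| \geq g_1 - g_2 \geq -|g_1 - g_2|$ together with linearity and monotonicity of $\EE_\mu[\cdot]$, which yield $\bigl|\EE_{\mu}[g_1] - \EE_{\mu}[g_2]\bigr| \leq \EE_{\mu}\bigl[\,|g_1 - g_2|\,\bigr]$; and by definition $|g_1 - g_2|(x_1,x_2) = |f_1(x_1) - f_2(x_2)| = \Exp(x_1,x_2)$. Taking the infimum over $\mu \in \Gamma(\mu_1,\mu_2)$ then gives exactly $\bigl|\EE_{\mu_1}[f_1] - \EE_{\mu_2}[f_2]\bigr| \leq \Kant{\Exp}(\mu_1,\mu_2)$.

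The one point requiring care — and the main obstacle — is bookkeeping with the extended reals: linearity of $\EE_\mu[\cdot]$ and the identity $\EE_\mu[g_1] - \EE_\mu[g_2] = \EE_\mu[g_1 - g_2]$ are only unproblematic when the quantities are finite, and $g_1 - g_2$ is not a non-negative function. I would dispatch this with the case split above: once we fix a coupling $\mu$ with $\EE_\mu[|g_1-g_2|] < \infty$, the set where one of $g_1, g_2$ is $\infty$ and the other is finite is $\mu$-null, so $\{g_1 = \infty\}$ and $\{g_2 = \infty\}$ agree up to a null set; consequently $\EE_\mu[g_1]$ and $\EE_\mu[g_2]$ are simultaneously finite or simultaneously infinite, making the left-hand side well-defined (using the conventions of Section~\ref{sec:prelims}), and on the a.e.-finite part the pointwise triangle inequality integrates without incident. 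In the generic case where all expectations in sight are finite, the whole argument is routine; the infinite cases are only a matter of spelling out conventions.
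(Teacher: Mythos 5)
Your proof is correct and follows essentially the same route as the paper's: use the marginal conditions to rewrite both expectations against a fixed coupling $\mu$, apply $|\EE_\mu[g]| \leq \EE_\mu[|g|]$ pointwise, and take the infimum over couplings. Your additional case analysis over the extended reals (checking that $\EE_\mu[g_1] - \EE_\mu[g_2]$ is well-defined when $\EE_\mu[\Exp] < \infty$) is a welcome bit of rigor that the paper's terse proof elides.
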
%
%
%
%
%
%
%
\noindent{}We can also obtain bounds on the TV distance when lifting other base
distances that assign a minimum, non-zero distance to all pairs of distinct
elements.%
\begin{theorem}[Scaled TV distance] \label{thm:scaled-tv}
  Let $\mu_1,\mu_2\in\Dist(X)$ with $|\mu_1|=|\mu_2|=1$, let $\Exp_\rho \colon X
  \times X \to [0, 1]$, and let $\rho \in \RR_{>0}$ be a strictly positive
  constant with $\Exp_\rho(x_1, x_2) \geq \rho \cdot [x_1 \neq x_2]$. Then,
  $
    \mathit{TV} \bigl( \mu_1,\, \mu_2 \bigr) \leq \frac{1}{\rho} \cdot \Kant{\Exp_\rho} \bigl( \mu_1,\, \mu_2 \bigr)~.
  $
\end{theorem}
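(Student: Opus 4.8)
The plan is to derive the bound from Theorem~\ref{thm:tv} together with two elementary properties of the generalized Kantorovich lifting, namely monotonicity and positive homogeneity in the expectation argument. First I would record these: if $\Exp \leq \Exp'$ pointwise then $\Kant{\Exp}(\mu_1,\mu_2) \leq \Kant{\Exp'}(\mu_1,\mu_2)$, since $\EE_{\mu}[\Exp] \leq \EE_{\mu}[\Exp']$ for every coupling $\mu \in \Gamma(\mu_1,\mu_2)$ and the infimum is monotone; and for any constant $c \in \RRnn$ we have $\Kant{c\cdot\Exp}(\mu_1,\mu_2) = c\cdot\Kant{\Exp}(\mu_1,\mu_2)$, because $\EE_{\mu}[c\cdot\Exp] = c\cdot\EE_{\mu}[\Exp]$ and scaling by a nonnegative constant commutes with the infimum. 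Since $|\mu_1|=|\mu_2|=1$, the set $\Gamma(\mu_1,\mu_2)$ is nonempty, so these infima are taken over a nonempty set and homogeneity is not spoiled by an empty infimum.

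With these in hand, let $D\colon X\times X \to \{0,1\}$ be the discrete metric $D(x_1,x_2) = [x_1\neq x_2]$. The hypothesis $\Exp_\rho(x_1,x_2) \geq \rho\cdot[x_1\neq x_2]$ is precisely the pointwise inequality $\rho\cdot D \leq \Exp_\rho$. Applying monotonicity and then homogeneity gives
\[
  \rho\cdot\Kant{D}(\mu_1,\mu_2) \;=\; \Kant{\rho\cdot D}(\mu_1,\mu_2) \;\leq\; \Kant{\Exp_\rho}(\mu_1,\mu_2)~.
\]
Dividing by $\rho > 0$ and invoking Theorem~\ref{thm:tv}, which identifies $\Kant{D}(\mu_1,\mu_2)$ with $\mathit{TV}(\mu_1,\mu_2)$ (applicable since both $\mu_1,\mu_2$ are full distributions), yields $\mathit{TV}(\mu_1,\mu_2) \leq \tfrac{1}{\rho}\cdot\Kant{\Exp_\rho}(\mu_1,\mu_2)$, as desired. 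Equivalently, one can argue without factoring through monotonicity: for any coupling $\mu\in\Gamma(\mu_1,\mu_2)$ we have $\rho\cdot\EE_\mu[D] \leq \EE_\mu[\Exp_\rho]$, hence $\EE_\mu[D] \leq \tfrac{1}{\rho}\EE_\mu[\Exp_\rho]$, and taking the infimum over $\mu$ gives $\Kant{D}(\mu_1,\mu_2) \leq \tfrac{1}{\rho}\Kant{\Exp_\rho}(\mu_1,\mu_2)$; then Theorem~\ref{thm:tv} finishes the argument in the same way.

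I do not expect a genuine obstacle here: the result is essentially a pointwise domination argument pushed through the infimum defining $\Kant{\cdot}$, plus an appeal to the already-established coincidence of $\mathit{TV}$ with the Kantorovich lifting of the discrete metric. The only points that warrant a sentence of care are checking that $\Gamma(\mu_1,\mu_2)\neq\emptyset$ (guaranteed by $|\mu_1|=|\mu_2|=1$) so that the infima and the homogeneity step are well behaved, and confirming that Theorem~\ref{thm:tv} is invoked under its full-distribution hypothesis, which indeed holds. Note also that $\Exp_\rho \leq 1$ makes $\Kant{\Exp_\rho}(\mu_1,\mu_2) \leq 1 < \infty$, so no degenerate infinite quantities arise.
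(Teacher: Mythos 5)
Your proof is correct and follows essentially the same route as the paper: the paper also derives the bound from Theorem~\ref{thm:tv} together with monotonicity ($\Exp \leq \Exp'$ implies $\Kant{\Exp} \leq \Kant{\Exp'}$) and positive homogeneity ($\Kant{\rho\cdot\Exp} = \rho\cdot\Kant{\Exp}$) of the Kantorovich lifting. You simply spell out the justifications for these two properties and the nonemptiness of $\Gamma(\mu_1,\mu_2)$ in more detail than the paper's one-line argument.
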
%
%
%
%

\section{Bounding expected sensitivity with relational pre-expectations} \label{sec:couplings}

As we have seen, the Kantorovich distance encompasses many specific distances on
distributions. To reason about probabilistic and expected sensitivity, we would
like to bound the Kantorovich distance between two output distributions in terms
of the distance between two program inputs. In this section, we develop a
relational pre-expectation operation to prove these bounds.

\subsection{A first unsuccessful attempt: a relational pre-expectation for exact bounds}
Since we want to reason about the Kantorovich distance lifting of a
relational expectation $\Exp \colon \Mem \times \Mem \to \RRe$ between output distributions of
a program $c$, an initial idea is to define a relational pre-expectation
operator $\sgcp{c}{\Exp}$ coinciding exactly with the Kantorovich distance:
\[
  \sgcp{c}{\Exp}(s_1,\, s_2) = \Kant{\Exp} \bigl( \denot{c}s_1,\, \denot{c}s_2 \bigr)~,
\]
and then prove bounds of the form $\sgcp{c}{\Exp_{out}} \leq \Exp_{in}$ in order
to bound the Kantorovich distance between outputs by some distance between
inputs. While this definition is appealing, it turns out to be inconvenient for
formal reasoning because it does not behave well under sequential composition:
the expected sequence rule
$
\sgcp{c;c'}{\Exp} = \sgcp{c}{\sgcp{c'}{\Exp}}
$
does \emph{not} hold. Roughly, this is because choosing local infima on each step does
not necessarily amount to a global infimum. In fact, in some cases \emph{no
local choice} amounts to a \emph{global infimum}.
%
%
\begin{example} \label{ex:non-comp}
  The \emph{Bernoulli} distribution $B(p)$ with bias $p$ returns $1$ with
  probability $p$ and $0$ with probability $1{-}p$. Consider the following
  programs:
  \begin{align*}
    c &= \Cond{b}{\Rand{x}{B(\sfrac{1}{2})}}{\Rand{y}{B(\sfrac{1}{2})}} \\
    c' &= \Cond{b}{\Rand{y}{B(\sfrac{1}{2})}}{\Rand{x}{B(\sfrac{1}{2})}}~.
  \end{align*}
  Moreover, consider the relational expectation
  $
  \Exp = [x\sidel \neq x\sider \vee y\sidel \neq y\sider].
  $
  If we fix $b\sidel = \mathsf{true}$ and $b\sider = \mathsf{false}$ throughout, then
  \[
	  \sgcp{c'}{\Exp}(s_1', s_2') = \inf_{\Gamma(\denot{\Rand{y}{B(\sfrac{1}{2})}} s_1', \denot{\Rand{x}{B(\sfrac{1}{2})}} s_2') } \EE[\Exp]~.
  \]
  To compute the above relational pre-expectation, we first need to understand the possible couplings.
  Hence, we compute the marginals of the involved distributions:
  %
  \begin{align*}
	  \mu_1 \triangleq \denot{\Rand{y}{B(\sfrac{1}{2})}}s_1' &= 
    \begin{cases}
	\tfrac{1}{2} \colon x \mapsto s'_1(x), y \mapsto 0 \\
      \tfrac{1}{2} \colon x \mapsto s'_1(x), y \mapsto 1
    \end{cases}
    \\
    \mu_2 \triangleq \denot{\Rand{x}{B(\sfrac{1}{2})}}s_2' &=
    \begin{cases}
      \tfrac{1}{2} \colon x \mapsto 0, y \mapsto s'_2(y) \\
      \tfrac{1}{2} \colon x \mapsto 1, y \mapsto s'_2(y)~.
    \end{cases}
  \end{align*}
  The marginal conditions for couplings (Def.~\ref{def:kantorovich}) then yield that any
  coupling in $\Gamma(\mu_1, \mu_2)$ is of the form
  \begin{align*}
          \mu_{\rho}(s_1,s_2) &= \rho \cdot [s_1(x) = s'_1(x) \wedge s_1(y) = 1] \cdot [s_2(x) = 1 \wedge s_2(y) = s'_2(y)] \\
          &+ \left( \tfrac{1}{2} - \rho \right) \cdot [s_1(x) = s'_1(x) \wedge s_1(y) = 1] \cdot [s_2(x) = 0 \wedge s_2(y) = s'_2(y)] \\
          &+ \left( \tfrac{1}{2} - \rho \right) \cdot [s_1(x) = s'_1(x) \wedge s_1(y) = 0] \cdot [s_2(x) = 1 \wedge s_2(y) = s'_2(y)] \\
          &+ \rho \cdot [s_1(x) = s'_1(x) \wedge s_1(y) = 0] \cdot [s_2(x) = 0 \wedge s_2(y) = s'_2(y)]~.
  \end{align*}
  %
  for some $0 \leq \rho \leq \tfrac{1}{2}$ and the previously fixed states $s'_1$ and $s'_2$. Hence, 
  \begin{align*}
    \EE_{\mu_\rho}[\Exp] &= \rho \cdot \ind{s'_1(x) \neq 1 \vee s'_2(y) \neq 1}
		      + \left( \tfrac{1}{2} - \rho \right)\ind{s'_1(x) \neq 0 \vee s'_2(y) \neq 1} \\
		      &+ \left( \tfrac{1}{2} - \rho \right)\ind{s'_1(x) \neq 1 \vee s'_2(y) \neq 0}
		      + \rho \cdot \ind{s'_1(x) \neq 0 \vee s'_2(y) \neq 0}~.
  \end{align*}
  Since $\sgcp{c'}{\Exp}$ takes the minimum over all couplings, i.e., the
  minimum over all $\rho \in [0, \tfrac{1}{2}]$, by simple computation we get
  that $\sgcp{c'}{\Exp}(s_1', s_2') = 1/2$, setting $\rho=1/2$ if $s_1'(x) =
  s_2'(y)$ and $\rho=0$ otherwise.
  Since $s_1'(x), s_2'(y)$ are sampled from $\denot{c} s_1$ and $\denot{c} s_2$, for any way to couple them
  $\sgcp{c}{\sgcp{c'}{\Exp}}(s_1, s_2) = \tfrac{1}{2} > 0$.
  However, $\denot{c; c'}s_1$ and $\denot{c;
  c'}s_2$ have the same marginal distributions for $(x, y)$ and thus distance $0$. Therefore, 
  \[
      0 = \sgcp{c; c'}{\Exp}(s_1, s_2) < \sgcp{c}{\sgcp{c'}{\Exp}}(s_1, s_2) = \tfrac{1}{2} ~.
  \]
\end{example}%
\noindent{}Fortunately, we generally do not need to compute the exact
Kantorovich distance to prove sensitivity properties: an upper bound suffices.
Since the Kantorovich distance is an infimum over \emph{all} couplings, we can
establish upper bounds by exhibiting a \emph{specific} coupling---of course, the
tightness of these upper bounds will depend on the particular coupling we chose.
Crucially, couplings \emph{can} be constructed compositionally: a coupling for a
sequential composition $c; c'$ can be obtained by combining a coupling for $c$
with a coupling for $c'$. We leverage this observation into our compositional
relational pre-expectation calculus, which provides upper bounds on the
Kantorovich distance.

\subsection{Compositional upper bounds by relational pre-expectation}
To facilitate compositional reasoning, we define an upper bound $\gcp{c}{\Exp}$
of the Kantorovich distance~$\Exp$ with respect to program $c$.  Technically,
$\gcp{c}{\Exp}$ is a relational pre-expectation calculus defined by induction on
the structure of $c$, similarly to the calculus by McIver and
Morgan~\cite{Morgan96}. The rules of our calculus are shown in
Figure~\ref{fig:wp2-rules}. We take the indicator expectation $\ind{\Pred}$ to
be $1$ if $\Pred$ is true, otherwise $0$, and we define addition and
multiplication on expectations pointwise.
The cases of skipping, assignments and sequential composition are
straightforward and apply the backwards semantics of commands.
The relational pre-expectation of sampling is expressed directly in terms of the
Kantorovich distance, i.e., an infimum is taken over the set of all couplings,
which is not always possible in practice. We give more details on this problem
in Section~\ref{sec:gcp-rules}.
The relational pre-expectation for conditionals assumes the two runs are
synchronized. If not, $\ind{e\sidel \neq e\sider} = 1$ and the distance is
(trivially) upper bounded by $\infty$, since the branches may not terminate with
the same probability, so the set of couplings may be empty.
Finally, in the case of while loops, we take the least fixed point of the
characteristic functional $\Phi_{\Exp, c}$ of the loop.
It is not hard to show that $\Phi_{\Exp, c}(-) \colon \EXP \to \EXP$ is monotonic
(see Lemma~\ref{lem:monotonic} in the Appendix), so by the Knaster-Tarski
theorem the least fixed point is well-defined.
As in the previous case, the relational pre-expectation returns $\infty$ when
runs are not synchronized, i.e., only one loop guard is true.
Computing the least fixed point is usually not possible. We present an
invariant-based rule in Section~\ref{sec:gcp-rules}.
\begin{figure*}[t]
  \scalebox{0.85}{\parbox{\linewidth}{%
\begin{align*}
  \gcp{\Skip}{\Exp} &\triangleq \Exp \\[.5em]
  \gcp{\Assn{x}{e}}{\Exp} &\triangleq \Exp \{ e\sidel, e\sider / x\sidel, x\sider \} \\
                          & \triangleq \lambda s_1 s_2.\Exp(s_1[x \mapsto e\langle1\rangle], s_2[x \mapsto e\langle2\rangle]) \\[.5em]
  \gcp{\Rand{x}{d}}{\Exp} &\triangleq \lambda s_1 s_2.\, \Kant{\Exp}(\denot{\Rand{x}{d}}s_1, \denot{\Rand{x}{d}}s_2)~,
                          \text{where}\ \Kant{\Exp}(\mu_1, \mu_2) \triangleq \smash{\inf_{\mu \in \Gamma(\mu_1, \mu_2)} \EE_{\mu} [ \Exp ]} \\[.5em]
  \gcp{c; c'}{\Exp} &\triangleq \gcp{c}{\gcp{c'}{\Exp}} \\[.5em]
  \gcp{\Cond{e}{c}{c'}}{\Exp} &\triangleq 
  	\ind{e\sidel \!\land \!e\sider} \cdot \gcp{c}{\Exp}
  	+ \ind{\neg e\sidel\! \land \!\neg e\sider} \cdot \gcp{c'}{\Exp}
    + \ind{e\sidel \!\neq\! e\sider} \cdot \infty \\[.5em]
    \gcp{\WWhile{e}{c}}{\Exp} & \triangleq {\sf lfp} X. \Phi_{\Exp, c}(X), \\
    \text{where}\ \Phi_{\Exp, c}(X) &\triangleq
 		 \ind{e\sidel \!\land\! e\sider} \cdot \gcp{c}{X}
     + \ind{\neg e\sidel \!\land\! \neg e\sider} \cdot \Exp
     + \ind{e\sidel \!\neq\! e\sider} \cdot \infty
\end{align*}
}}
\caption{Definition of the relational pre-expectation operator $\gcp{c}{\Exp}$.}

\label{fig:wp2-rules}
\end{figure*}%
\paragraph*{Remark (Synchronous vs.~asynchronous control flow).}
In contrast to the Kantorovich distance operator $\sgcp{c}{\Exp}$, cour
compositional relational pre-expectation operator $\gcp{c}{\Exp}$ only gives
useful (i.e., finite) bounds when the control flows in the two executions of $c$
can be \emph{synchronized}.  For deterministic guards, this means that pairs of
related executions always take the same branches; for randomized guards, this
means that we can relate the random samplings so that pairs of related
executions always take the same branches. In \cref{sec:async}, we describe
extensions of our calculus that can give more useful bounds when reasoning
asynchronously.

\paragraph*{Remark (Tightness of bounds).}
It is also complicated to estimate the exact loss between $\gcp{c}{\Exp}$ and
$\sgcp{c}{\Exp}$, since lower bounds on $\sgcp{c}{\Exp}$ are not given by a
witness coupling.  Nonetheless, in our setting this limitation is not exclusive
to our technique---in the statistical literature, lower bounds for stochastic
processes such as the ones we analyze in Section~\ref{sec:ex-shuffle} are in
general hard to compute and so the exact distance is often not known. We will
return to this topic in Section~\ref{sec:extensions}.

We now study the metatheory of our calculus.  Our first result is that our
calculus is sound: it correctly upper bounds the Kantorovich distance.%
\begin{theorem}[Soundness of $\gcpsymbol$] \label{thm:sound} 
Let $c$ be a \pwhile\ program and $\Exp \in \EXP$ be a relational
  expectation. Then $\sgcp{c}{\Exp} \leq \gcp{c}{\Exp}$, i.e., if
  $\gcp{c}{\Exp}(s_1, s_2) < \infty$ for $s_1, s_2 \in \Mem$ then
  \[
    \EE_{\mu_{s_1, s_2}} [ \Exp] \leq \gcp{c}{\Exp}(s_1, s_2) \quad 
    \mbox{ for some coupling } \quad \mu_{s_1, s_2} \in \Gamma(\denot{c}s_1, \denot{c}s_2)~.
  \]
\end{theorem}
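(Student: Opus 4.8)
The plan is to prove $\sgcp{c}{\Exp} \leq \gcp{c}{\Exp}$ by induction on the structure of $c$, and more precisely to prove the stronger, explicitly witnessed statement: whenever $\gcp{c}{\Exp}(s_1,s_2) < \infty$, there exists a coupling $\mu_{s_1,s_2} \in \Gamma(\denot{c}s_1, \denot{c}s_2)$ with $\EE_{\mu_{s_1,s_2}}[\Exp] \leq \gcp{c}{\Exp}(s_1,s_2)$. Carrying the witness coupling through the induction (rather than just the inequality on the infimum) is what makes the sequencing case go through. The base cases $\Skip$ and $\Assn{x}{e}$ are immediate from the definitions in Figure~\ref{fig:wp2-rules}: the relevant output distributions are Dirac, so the unique coupling is $\dunit{(s_1,s_2)}$ (resp.\ the Dirac at the updated states), and the claimed bound holds with equality. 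The sampling case $\Rand{x}{d}$ is also essentially by definition: $\gcp{\Rand{x}{d}}{\Exp}(s_1,s_2)$ \emph{is} the Kantorovich distance $\Kant{\Exp}(\denot{\Rand{x}{d}}s_1, \denot{\Rand{x}{d}}s_2) = \inf_{\mu \in \Gamma} \EE_\mu[\Exp]$, so I need that this infimum is actually attained (or approached) by some coupling; since the supports are countable, I would invoke a standard existence-of-optimal-coupling fact from optimal transport (attainment in the discrete/countable setting, as the paper flags it relies on non-elementary results from optimal transport theory), giving the witness $\mu_{s_1,s_2}$.

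For the sequence rule $\gcp{c;c'}{\Exp} = \gcp{c}{\gcp{c'}{\Exp}}$: assume the left side is finite at $(s_1,s_2)$. By the induction hypothesis for $c$ applied to the expectation $\gcp{c'}{\Exp}$, there is a coupling $\nu \in \Gamma(\denot{c}s_1, \denot{c}s_2)$ with $\EE_{\nu}[\gcp{c'}{\Exp}] \leq \gcp{c}{\gcp{c'}{\Exp}}(s_1,s_2) < \infty$; in particular $\gcp{c'}{\Exp}(t_1,t_2) < \infty$ for $\nu$-almost-every $(t_1,t_2)$. For each such pair the induction hypothesis for $c'$ yields a coupling $\mu_{t_1,t_2} \in \Gamma(\denot{c'}t_1, \denot{c'}t_2)$ with $\EE_{\mu_{t_1,t_2}}[\Exp] \leq \gcp{c'}{\Exp}(t_1,t_2)$. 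I then define the composite coupling $\mu_{s_1,s_2} \triangleq \dbind{(t_1,t_2)\sim\nu}{\mu_{t_1,t_2}}$, i.e.\ $\mu_{s_1,s_2}(u_1,u_2) = \sum_{t_1,t_2} \nu(t_1,t_2)\,\mu_{t_1,t_2}(u_1,u_2)$ (using some fixed arbitrary coupling on the measure-zero set where $\gcp{c'}{\Exp}$ is infinite, or noting it contributes nothing). A computation of its marginals shows $\mu_{s_1,s_2} \in \Gamma(\denot{c;c'}s_1, \denot{c;c'}s_2)$ — this uses that $\denot{c;c'}s_i = \dbind{t_i \sim \denot{c}s_i}{\denot{c'}t_i}$ and that $\nu$ has the right marginals. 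Finally, by linearity of expectation (Tonelli, all terms nonnegative), $\EE_{\mu_{s_1,s_2}}[\Exp] = \EE_{(t_1,t_2)\sim\nu}\big[\EE_{\mu_{t_1,t_2}}[\Exp]\big] \leq \EE_\nu[\gcp{c'}{\Exp}] \leq \gcp{c;c'}{\Exp}(s_1,s_2)$.

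The conditional case splits on whether $e\sidel = e\sider$ at $(s_1,s_2)$: if they disagree the right-hand side is $\infty$ and there is nothing to prove; if both are true (resp.\ false), then $\denot{\Cond{e}{c}{c'}}s_i = \denot{c}s_i$ (resp.\ $\denot{c'}s_i$) and the claim reduces to the induction hypothesis for the taken branch. The while loop is the main obstacle. Here $\gcp{\WWhile{e}{c}}{\Exp} = \mathsf{lfp}\,\Phi_{\Exp,c}$, and I would obtain it as the supremum of the iterates $\Phi_{\Exp,c}^{(n)}(0)$, which requires knowing that $\Phi_{\Exp,c}$ is not just monotone (already stated, Lemma~\ref{lem:monotonic}) but $\omega$-continuous on $\EXP$ — and the paper explicitly warns continuity is delicate and only available for finitely supported distributions. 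Assuming continuity (or arguing via a transfinite/approximation route valid for discrete distributions), each finite iterate corresponds to the loop semantics truncated at $n$ unrollings; by induction on $n$, using the conditional and sequence cases already established, I get a coupling $\mu^{(n)}_{s_1,s_2} \in \Gamma$ for the $n$-th truncation with expectation bounded by $\Phi^{(n)}_{\Exp,c}(0)(s_1,s_2)$. The remaining work is a limiting argument: the truncated loop semantics converge (monotonically, in the CPO of sub-distributions) to $\denot{\WWhile{e}{c}}$, the bounds converge to $\gcp{\WWhile{e}{c}}{\Exp}(s_1,s_2)$ by continuity of $\Phi$, and I need to extract a limit coupling $\mu_{s_1,s_2}$ whose expectation respects the limit bound — this is where tightness/lower-semicontinuity of $\mu \mapsto \EE_\mu[\Exp]$ and compactness of the coupling set (again optimal-transport input) are needed, and it is the step I expect to be most technically demanding and to rely most heavily on the non-elementary ingredients mentioned in the introduction.
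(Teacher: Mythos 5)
Your proof for the cases of $\Skip$, assignment, sampling, sequencing, and conditionals is essentially identical in structure to the paper's: you carry the witness coupling through the induction, invoke attainment of the Kantorovich infimum for countable-support distributions (Lemma~\ref{lem:realize-min}), and compose couplings via the monadic bind for sequencing, with the marginal check and Tonelli. All of that is correct.

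The gap is in your treatment of the loop, and it is a gap you partially flag yourself but do not resolve. You propose to identify $\gcp{\WWhile{e}{c}}{\Exp} = \mathsf{lfp}\,\Phi_{\Exp,c}$ with $\sup_n \Phi_{\Exp,c}^{(n)}(0)$, and you explicitly note that this identity requires $\omega$-continuity of $\Phi_{\Exp,c}$, which the paper only establishes for finitely-supported primitive distributions (Theorem~\ref{thm:fin-continuity}) — yet the soundness theorem is claimed for all discrete primitives. ``Assuming continuity'' is therefore not a valid move here: you would be proving soundness only under the finite-support restriction, which is strictly weaker than what the theorem asserts. The resolution, and the key point you are missing, is that you never need the \emph{equality} $\mathsf{lfp}\,\Phi = \sup_n \Phi^{(n)}(0)$; the one-sided inequality $\Phi^{(n)}(0) \leq \mathsf{lfp}\,\Phi$ for every $n$ suffices, and this holds for any monotone $\Phi$ by a trivial induction (this is exactly the paper's Lemma~\ref{claim:below-lfp}, a Park-induction fact requiring no continuity). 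With that, your own limiting argument — extract a pointwise-convergent subsequence of couplings for the truncated loops (Theorem~\ref{thm:conv-couplings}) and apply Fatou's lemma to the expectations — closes the proof: you obtain $\EE_{\tilde\mu}[\Exp] \leq \lim_n \Phi^{(n)}(0)(s_1,s_2) \leq \mathsf{lfp}\,\Phi(s_1,s_2)$, and the final bound is what is required even though the two sides of the second inequality may not coincide. So the structure of your limiting step is right (compactness of couplings via Bolzano--Weierstrass on $[0,1]^{\mathcal S}$, plus Fatou for the expectation); you simply should not invoke continuity of the characteristic functional, and instead use the cheap monotonicity bound.
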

\begin{proof}[Proof Sketch]
  By induction on $c$. The most challenging cases are for sampling and loops.
  The case for sampling requires first showing that there exists a coupling
  realizing the infimum defining the Kantorovich distance; such existence
  results belong to the theory of optimal transport~\cite{Villani08}.
  
  The case for loops is challenging for another reason: it is not clear how
  to show that the pre-expectation operator is continuous in its second
  argument (but see Thm.~\ref{thm:fin-continuity}). Instead, our
  proof relies on extracting a convergent sequence of couplings. We defer the
  details to Appendix D.
\end{proof}%
\noindent{}While it is not clear whether our relational pre-expectation operator is continuous
for \emph{all} programs, continuity does hold for programs that sample from
\emph{finite} distributions. Note that such programs can still produce
distributions with infinite support by sampling in a loop.%
\begin{theorem}[Continuity of $\gcpsymbol$] \label{thm:fin-continuity}
  Let $c$ be a \pwhile\ program where all primitive distributions have \emph{finite}
  support, and let $\Exp_n \in \EXP$ for $n \in \mathbb{N}$ be a monotonically
  increasing chain of relational expectations converging pointwise to $\Exp
  \in \EXP$. Then,
  \[
	  \gcp{c}{\Exp} = \sup_{n\in\NN} \gcp{c}{\Exp_n} .
  \]
\end{theorem}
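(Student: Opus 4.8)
The plan is to proceed by induction on the structure of $c$, proving the stronger statement that $\gcpsymbol$ commutes with suprema of arbitrary monotone chains. The base cases ($\Skip$, assignment) are immediate since the corresponding transformers just precompose $\Exp$ with a fixed state update, which manifestly commutes with pointwise suprema. The sequencing case $\gcp{c;c'}{\Exp} = \gcp{c}{\gcp{c'}{\Exp}}$ follows by applying the induction hypothesis for $c'$ to push the supremum through, and then the induction hypothesis for $c$; here one must first check that $(\gcp{c'}{\Exp_n})_n$ is itself a monotone chain, which follows from monotonicity of the transformer (Lemma~\ref{lem:monotonic}). For the conditional, the three summands are each of the form (indicator) $\cdot$ (something), and since the indicators $\ind{e\sidel \land e\sider}$, $\ind{\neg e\sidel \land \neg e\sider}$, $\ind{e\sidel \neq e\sider}$ do not depend on $n$ and are finite where relevant, and the last summand's coefficient $\infty$ is constant, one distributes the supremum over the (finite) sum and invokes the induction hypothesis on the two branches.

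The genuinely interesting case is sampling, $\Rand{x}{d}$, where the finiteness hypothesis is essential. Here we must show $\Kant{\Exp}(\mu_1,\mu_2) = \sup_n \Kant{\Exp_n}(\mu_1,\mu_2)$ for $\mu_1, \mu_2$ the output distributions, which (by the shape of the semantics of $\Rand{x}{d}$ and since $d$ has finite support) are \emph{finitely supported}. The inequality $\geq$ is just monotonicity of $\Kant{(-)}$. For $\leq$: because $\mu_1,\mu_2$ have finite support, the set $\Gamma(\mu_1,\mu_2)$ of couplings is a compact subset of a finite-dimensional simplex, and each map $\mu \mapsto \EE_\mu[\Exp_n]$ is continuous (indeed linear) on it, so by Thm.~\ref{thm:sound}'s underlying optimal-transport fact each infimum $\Kant{\Exp_n}(\mu_1,\mu_2) = \EE_{\nu_n}[\Exp_n]$ is attained at some coupling $\nu_n \in \Gamma(\mu_1,\mu_2)$. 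By compactness we extract a convergent subsequence $\nu_{n_k} \to \nu_\infty \in \Gamma(\mu_1,\mu_2)$. Then, using monotonicity of the $\Exp_n$ and that convergence of finitely-supported $\nu_{n_k}$ implies pointwise convergence of probabilities, one shows $\EE_{\nu_\infty}[\Exp_m] \leq \liminf_k \EE_{\nu_{n_k}}[\Exp_{n_k}] \leq \sup_n \Kant{\Exp_n}(\mu_1,\mu_2)$ for every fixed $m$; letting $m \to \infty$ and using monotone convergence on the finite support gives $\EE_{\nu_\infty}[\Exp] \leq \sup_n \Kant{\Exp_n}(\mu_1,\mu_2)$, and since $\nu_\infty$ is a particular coupling this bounds $\Kant{\Exp}(\mu_1,\mu_2)$ from above. (If some $\Gamma$ is empty, i.e. $|\mu_1| \neq |\mu_2|$, both sides are $\infty$ and there is nothing to prove.)

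For the while loop, write $\gcp{\WWhile{e}{c}}{\Exp} = \mathsf{lfp}\,X.\,\Phi_{\Exp,c}(X)$ and similarly for each $\Exp_n$. By Kleene's fixed-point theorem — applicable because $\Phi_{\Exp,c}$ is monotone and, crucially, $\omega$-continuous on $\EXP$, where $\omega$-continuity of $\Phi_{\Exp,c}$ follows from the induction hypothesis applied to the loop body $c$ together with continuity of multiplication-by-indicators and of addition — we have $\gcp{\WWhile{e}{c}}{\Exp_n} = \sup_{j} \Phi_{\Exp_n,c}^{\,j}(0)$. One then argues $\sup_n \sup_j \Phi_{\Exp_n,c}^{\,j}(0) = \sup_j \Phi_{\Exp,c}^{\,j}(0)$ by exchanging the two suprema (legitimate since both are over directed sets / monotone chains) and showing, by an inner induction on $j$, that $\sup_n \Phi_{\Exp_n,c}^{\,j}(0) = \Phi_{\Exp,c}^{\,j}(0)$ — the inductive step uses the induction hypothesis on $c$ and the conditional-style distribution of suprema over the three summands of $\Phi$.

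The main obstacle is the sampling case: establishing that the Kantorovich infimum is attained and that a limit of optimal couplings for the $\Exp_n$ remains a valid coupling whose cost bounds $\Kant{\Exp}$ — this is exactly where finiteness of the primitive distributions is used, and it is why the theorem is stated only for finitely-supported primitives. A secondary subtlety is justifying the $\omega$-continuity of $\Phi_{\Exp,c}$ needed to unfold the least fixed point as a Kleene chain; this reduces to the sampling case via the induction hypothesis on the loop body, so it is not an independent difficulty.
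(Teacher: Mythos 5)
Your proposal is correct and follows essentially the same route as the paper's proof (structural induction, with the sampling case handled by extracting a limiting optimal coupling from the compact set $\Gamma(\mu_1,\mu_2)$ and passing to the limit via a Fatou-style bound, and the loop case by exchanging the Kleene iteration of the characteristic functional with the supremum over~$n$). One minor technical nitpick: since $\Exp_n$ may take the value $\infty$, the map $\mu \mapsto \EE_\mu[\Exp_n]$ on $\Gamma(\mu_1,\mu_2)$ is only lower semicontinuous rather than continuous or linear as you assert, but lower semicontinuity together with compactness still guarantees that the infimum is attained, so the argument goes through unchanged.
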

\begin{proof}[Proof Sketch]
  By induction on the structure of $c$. The most challenging case is for
  sampling instructions, where the proof depends on a continuity property for
  the Kantorovich distance. We establish this property for distributions with
  finite support, and complete the proof of continuity for relational
  pre-expectations. We defer details to Appendix D.
\end{proof}%

\begin{figure*}
  \scalebox{0.85}{
\begin{mathpar}
  \inferrule*[right=Mono]
  { \Exp \leq \Exp' }
  { \gcp{c}{\Exp} \leq \gcp{c}{\Exp'} }
  \and
  \inferrule*[right=Const]
  { FV(\Exp') \cap MV(c) = \emptyset }
  { \gcp{c}{\Exp + \Exp'} \leq \gcp{c}{\Exp} + \Exp' }
  \\
  \inferrule*[right=SupAdd]
  { }
  { \gcp{c}{\Exp} + \gcp{c}{\Exp'} \leq \gcp{c}{\Exp + \Exp'} }\and
  \inferrule*[right=Scale]
  { f : \RRnn \to \RRnn \text{ linear, with } f(\infty) \triangleq \infty }
  { \gcp{c}{f \circ \Exp} = f \circ \gcp{c}{\Exp} }
  \\
  \and
  \inferrule*[right=Samp]
  { M\colon \Mem \times \Mem \to \Gamma(\denot{d}, \denot{d}) }
  { \gcp{\Rand{x}{d}}{\Exp} \leq \EE_{(v_1, v_2) \sim M(-, -)} [ \Exp \{ v_1, v_2 / x\sidel, x\sider \} ] }
  \and
  \inferrule*[right=Unif]
  { f\colon \Mem \times \Mem \to (D \to D) \text{ bijection} }
  { \gcp{\Rand{x}{\unif{D}}}{\Exp} \leq \frac{1}{|D|} \sum_{v \in D} \Exp \{ v, f(-,-)(v)/ x\sidel, x\sider \} }
  \and
  \inferrule*[right=Inv]
  { \ind{e\sidel \land e\sider} \cdot \gcp{c}{\Inv}
  + \ind{\neg e\sidel \land \neg e\sider} \cdot \Exp
  + \ind{e\sidel \neq e\sider} \cdot \infty \leq \Inv }
  { \gcp{\WWhile{e}{c}}{\Exp} \leq \Inv }
\end{mathpar}
}
\caption{Properties of relational pre-expectation operator $\gcp{c}{\Exp}$.}
  \label{fig:aux-rules}
\end{figure*}

\subsection{Reasoning with relational pre-expectations}
\label{sec:gcp-rules}

The definition of $\gcpsymbol$ in \cref{fig:wp2-rules} is sufficient to prove
relational properties of probabilistic programs in theory, but there are some
practical obstacles:
\begin{itemize}
  \item Comparing different relational pre-expectations for the same program is
    difficult---using the definition to compute each relational pre-expectation
    separately is tedious.
  \item Computing the relational pre-expectation for random sampling is
    difficult: it requires computing a minimum over all couplings.
  \item Computing the relational pre-expectation for loops is also difficult: in
    general, it is not possible to compute the least fixed point in closed form.
\end{itemize}
To make our operator easier to use, we introduce a collection of auxiliary
properties in \cref{fig:aux-rules}. We briefly describe the rules below.

\paragraph*{Basic properties.} 
The first four rules are basic properties of relational pre-expectations.  Rule
\textsc{Mono} states that the $\gcpsymbol$ transformer is monotone, and
\textsc{Const} intuitively states that the relational pre-expectation of $\Exp$
is $\Exp$ if $c$ doesn't modify $\Exp$; the rule is carefully stated to behave
correctly when $\gcp{c}{\Exp}$ is infinite.

The next two rules encode linearity-like properties of relational
pre-expectations. \textsc{SupAdd} states that the property is super-additive:
the relational pre-expectation of a sum can be greater than the sum of the
relational pre-expectations. Intuitively, this is because $\gcp{c}{\Exp}$
involves an infimum for random sampling, and the infimum of a sum is always less
than the sum of the infima.  \textsc{Scale} states that the relational
pre-expectation is preserved by scaling. The requirement that the scaling
function satisfies $f(\infty) = \infty$ is needed for correctly handle scaling
by $0$: $\gcp{c}{\Exp}$ may be infinite, even if $\Exp$ is identically zero.

\paragraph*{Bounding the pre-expectation for sampling.} 
Using the Kantorovich distance for defining the relational pre-expectation of a
sampling command $\Rand{x}{d}$ is theoretically clean, but inconvenient in
practice for two reasons. First, the set of couplings
$\Gamma(\denot{\Rand{x}{d}}s_1, \denot{\Rand{x}{d}}s_2)$ over which the infimum
is computed is a set of distributions over pairs of states. Given denotations of
primitive distributions $\denot{d} \in \Dist(D)$, it would be more convenient to
reason about the set $\Gamma(\denot{d},\denot{d})$---this is a set of
distributions over pairs of sampled values $D \times D$, rather than pairs of
memories. Second, computing the infimum is often difficult, and moreover
unnecessary for establishing upper bounds.

Corresponding to the \textsc{Samp} rule, the following result states that we can actually upper bound this Kantorovich distance by picking \emph{any} coupling of the
primitive distribution with itself; we call such a function $M \colon \Mem
\times \Mem \to \Gamma(\denot{d}, \denot{d})$ a \emph{coupling function} (on $d$).%
%
%
\begin{proposition} \label{prop:sampling-sound}
  Let $d$ be a primitive distribution, and let $M$ be a coupling function on $d$. For any relational
  expectation $\Exp \in \EXP$, we have:
  \[
    \gcp{\Rand{x}{d}}{\Exp} \leq \EE_{(v_1, v_2) \sim M(-, -)} [ \Exp \{ v_1, v_2 / x\sidel, x\sider \} ] ~.
  \]
\end{proposition}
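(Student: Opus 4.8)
The plan is to unfold the definition of $\gcp{\Rand{x}{d}}{\Exp}$ from Figure~\ref{fig:wp2-rules}, which is literally the Kantorovich distance between the two output distributions of the sampling command, and then observe that the coupling function $M$ supplies a concrete coupling of those output distributions whose expected cost is exactly the right-hand side. Since the Kantorovich distance is an infimum over \emph{all} couplings, any particular coupling gives an upper bound, which is precisely the inequality we want.

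Concretely, fix states $s_1, s_2 \in \Mem$. First I would observe that the output distribution $\denot{\Rand{x}{d}}s_i$ is the pushforward of $\denot{d}$ along the update map $v \mapsto s_i[x \mapsto v]$; that is, $\denot{\Rand{x}{d}}s_i = \dbind{v \sim \denot{d}}{\dunit{s_i[x\mapsto v]}}$. Next, given the coupling $M(s_1,s_2) \in \Gamma(\denot{d}, \denot{d})$ on values, I would construct the candidate coupling on states $\mu^{s_1,s_2} \triangleq \dbind{(v_1,v_2)\sim M(s_1,s_2)}{\dunit{(s_1[x\mapsto v_1],\, s_2[x\mapsto v_2])}}$. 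The key routine check is that $\mu^{s_1,s_2} \in \Gamma(\denot{\Rand{x}{d}}s_1,\, \denot{\Rand{x}{d}}s_2)$: its $i$-th marginal is the pushforward of the $i$-th marginal of $M(s_1,s_2)$, which is $\denot{d}$, along $v \mapsto s_i[x\mapsto v]$, hence equals $\denot{\Rand{x}{d}}s_i$. This uses that marginalization commutes with pushforward along the respective coordinate maps, and that updating coordinate $s_1$ does not touch $s_2$ and vice versa. Then, by definition of the Kantorovich distance as an infimum over $\Gamma$, we have $\Kant{\Exp}(\denot{\Rand{x}{d}}s_1, \denot{\Rand{x}{d}}s_2) \leq \EE_{\mu^{s_1,s_2}}[\Exp]$, and a change-of-variables computation through the binds shows $\EE_{\mu^{s_1,s_2}}[\Exp] = \EE_{(v_1,v_2)\sim M(s_1,s_2)}[\Exp\{v_1,v_2/x\sidel,x\sider\}]$, which is exactly the right-hand side evaluated at $(s_1,s_2)$. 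Since $s_1,s_2$ were arbitrary, this gives the pointwise inequality, i.e. the claimed inequality in $\EXP$.

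I do not expect a serious obstacle here: the statement is essentially an instance of ``the infimum is a lower bound'' packaged with a coordinatewise pushforward argument. The only mildly delicate points are (i) being careful that the construction makes sense even when $|\denot{d}| < 1$ (sub-distributions), where one still has $\Gamma(\denot{d},\denot{d}) \neq \emptyset$ since the marginals have equal mass, so the pushforward coupling is well-defined and has the right marginals; and (ii) handling the extended-reals arithmetic when $\Exp$ or the expectation is $\infty$, where the inequality holds trivially. The Iverson-bracket / substitution bookkeeping ($\Exp\{v_1,v_2/x\sidel,x\sider\}$ denoting $\lambda s_1 s_2.\, \Exp(s_1[x\mapsto v_1], s_2[x\mapsto v_2])$) is purely notational and follows the same convention already used for the assignment rule. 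So the proof is short: unfold the definition, exhibit the pushforward coupling, verify its marginals, and invoke the infimum bound.
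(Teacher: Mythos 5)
Your proof is correct and is exactly the argument the paper intends: the paper states the proposition without proof, but the coupling-via-pushforward and infimum-as-lower-bound reasoning you give is precisely what justifies the inequality, as confirmed by the derivation of the \textsc{Unif} rule that follows the proposition. One minor remark: the paper assumes primitive distributions $\denot{d}$ are full distributions (no free program variables, total mass $1$), so the sub-distribution case you flag in concern (i) does not actually arise, though your treatment of it would be fine if it did.
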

%
%
\noindent{}We can reuse common couplings of primitive distributions across
different proofs. For example, let $D$ be a finite, non-empty set and let $f
\colon \Mem \times \Mem \to (D \to D)$ map pairs of program states to bijections
on~$D$. Then the \emph{bijection coupling} $M_f$, the coupling function on  $\unif{D}$ is defined by
\[
  f(s_1, s_2)(x_1, x_2) = \begin{cases}
    1/|D| &: f(s_1, s_2)(x_1) = x_2 \\
    0 &: \text{otherwise}
  \end{cases}~,
\]
where $x_1$ and $x_2$ are elements in $D$. Specialized to this case,
Proposition~\ref{prop:sampling-sound} gives \textsc{Unif}:
\begin{align*}
  \gcp{\Rand{x}{\unif{D}}}{\Exp}
  &{}\leq \gcp{\Rand{x}{d}}{\Exp} \leq \EE_{(v_1, v_2) \sim M_f(-, -)} [ \Exp \{ v_1, v_2 / x\sidel, x\sider \} ] \\
  &{}\leq \EE_{v \sim \denot{\unif{D}}} [ \Exp \{ v, f(-,-)(v)/ x\sidel, x\sider \} ] \\
                          &{} = \frac{1}{|D|} \sum_{v \in D} \Exp \{ v, f(-,-)(v)/ x\sidel, x\sider \} ~.
\end{align*}
Different coupling functions can give upper bounds of different
strengths---selecting appropriate couplings to show the target property is the
key part of reasoning by couplings. This technique is well-known to probability
theory, where it is called the \emph{coupling method}~\cite{aldous1983random}.

\paragraph*{Bounding the pre-expectation for loops.} 
As in the case of sampling, it may not always be desirable or possible to
compute the fixed point for loops. Instead, we can upper bound the relational
pre-expectation by a relational expectation $\Inv$, called an
\emph{invariant}---intuitively, if the relational pre-expectation of $\Inv$ with
respect to the loop body is at most $\Inv$, then the relational pre-expectation
of the loop is also at most $\Inv$. Formally, this reasoning is captured
by \textsc{Inv} and the following theorem:%
\begin{theorem} \label{thm:loop-sound}
Let $\Inv \in \EXP$ be a relational expectation. If
\[
  \ind{e\sidel \land e\sider} \cdot \gcp{c}{\Inv}
  + \ind{\neg e\sidel \land \neg e\sider} \cdot \Exp
  + \ind{e\sidel \neq e\sider} \cdot \infty
  \leq \Inv ,
\]
then $\gcp{\WWhile{e}{c}}{\Exp} \leq \Inv$.
\end{theorem}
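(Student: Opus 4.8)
The plan is to recognize the hypothesis as exactly the statement that $\Inv$ is a \emph{pre-fixed point} of the loop's characteristic functional $\Phi_{\Exp, c}$, and then invoke the Knaster--Tarski theorem. Concretely, by the definition of $\gcpsymbol$ on while loops in \cref{fig:wp2-rules} we have $\gcp{\WWhile{e}{c}}{\Exp} = {\sf lfp}\, X.\, \Phi_{\Exp, c}(X)$, where
\[
  \Phi_{\Exp, c}(X) = \ind{e\sidel \land e\sider} \cdot \gcp{c}{X}
  + \ind{\neg e\sidel \land \neg e\sider} \cdot \Exp
  + \ind{e\sidel \neq e\sider} \cdot \infty .
\]
The hypothesis of the theorem is precisely the inequality $\Phi_{\Exp, c}(\Inv) \leq \Inv$, i.e.\ $\Inv$ is a pre-fixed point of $\Phi_{\Exp, c}$.

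From there I would assemble the two facts that make the argument immediate. First, $\EXP$ equipped with the pointwise order is a complete lattice (as noted in \cref{sec:prelims}, since $\RRe$ is a complete lattice and the order on $\EXP$ is pointwise). Second, $\Phi_{\Exp, c}$ is a monotone self-map on $\EXP$ (Lemma~\ref{lem:monotonic} in the appendix; monotonicity follows because indicators, pointwise sums, multiplication by the constant $\infty$, and $\gcp{c}{-}$ all preserve the order and map $\EXP$ into $\EXP$). The Knaster--Tarski theorem then characterizes ${\sf lfp}\, X.\, \Phi_{\Exp, c}(X)$ as the infimum of the set of pre-fixed points of $\Phi_{\Exp, c}$; in particular the least fixed point lies below \emph{every} pre-fixed point. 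Instantiating this with the pre-fixed point $\Inv$ gives ${\sf lfp}\, X.\, \Phi_{\Exp, c}(X) \leq \Inv$, that is, $\gcp{\WWhile{e}{c}}{\Exp} \leq \Inv$, which is the claim. (This is exactly the ``Park induction'' rule for least fixed points, and it simultaneously validates the \textsc{Inv} rule of \cref{fig:aux-rules}.)

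There is essentially no hard step here: all the content has already been discharged earlier in the paper --- the definition of $\gcpsymbol$ on loops as a least fixed point, the completeness of the lattice $\EXP$, and the monotonicity of $\Phi_{\Exp, c}$. The only point requiring a moment's care is that $\Phi_{\Exp, c}$ genuinely maps $\EXP$ to $\EXP$, so that Knaster--Tarski applies directly without restricting to a sublattice; this is routine. It is worth emphasizing that \emph{no} continuity of $\gcpsymbol$ is needed for this rule, which is fortunate since continuity is only available for finitely supported primitive distributions (Theorem~\ref{thm:fin-continuity}) --- the invariant rule for loops is therefore unconditional.
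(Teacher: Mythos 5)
Your proposal is correct and follows essentially the same route as the paper's own proof: recognize the hypothesis as $\Phi_{\Exp,c}(\Inv) \leq \Inv$, i.e.\ $\Inv$ is a pre-fixed point, and conclude by Park induction (Knaster--Tarski) that the least fixed point lies below $\Inv$. The only difference is that you spell out the supporting facts (completeness of $\EXP$, monotonicity of $\Phi_{\Exp,c}$ via Lemma~\ref{lem:monotonic}) that the paper leaves implicit.
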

\begin{proof}
  Let $\Phi$ be the characteristic functional of the loop, as defined for the
  relational pre-expectation. The hypothesis implies $\Phi(\Inv) \leq \Inv$, so
  $\Inv$ is a prefixed point of $\Phi$. By Park induction~\cite{park1969fixpoint}, the least
  fixed point $\gcp{\WWhile{e}{c}}{\Exp}$ is less than $\Inv$.
\end{proof}

\subsection{Embedding \eprhl}
\label{sec:embedding_eprhl}

Expectation Probabilistic Relational Hoare Logic (\eprhl) is a
quantitative extension of \textsc{pRHL}~\cite{BartheEGHS18}.
Judgments of \eprhl\ are of the form:
$\{ P; \Exp \}\ c_1 \sim_f c_2\ \{ Q; \Exp' \}$
where $P,Q$ are boolean-valued assertions, $\Exp,\Exp'$ are relational
expectations, $f$ is an affine function of the form $ax+b$, where
$a,b\in\mathbb{R}_{\geq 0}$, and $c_1$ and $c_2$ are \pwhile\ programs.
This judgment states that for every pair of input states $s_1, s_2$ satisfying
the pre-condition $P$, there is a coupling $\mu$ of $\denot{c_1}(s_1),
\denot{c_2}(s_2)$ whose support lies within the post-condition
$Q$, and moreover $\EE_{\mu}[\Exp']\leq f(\Exp(s_1,s_2))$. We can embed the
core inference rules of \eprhl\ in our proof system (see \cref{sec:embeddings} for details).
\begin{theorem}[Embedding \eprhl]\label{thm:eprhl}
  Let $\vdash \ehl{c}{c}{P; \Exp}{Q; \Exp'}{f}$ be a valid \textnormal{\eprhl} 
  judgment derived using the rules of \textnormal{Figure~\ref{fig:eprhl}} in
  Appendix E, with finite $\Exp$ and $\Exp'$. Then:
  \[ \gcp{c}{\Exp' + [\neg Q] \cdot \infty} \leq f(\Exp) + [\neg P] \cdot \infty . \]
  Furthermore, this inequality can be derived using just the definition of
  $\gcp{c}{\Exp}$ for skip, assignment, sequence, and conditionals in
  \textnormal{Figure~\ref{fig:wp2-rules}}, and the auxiliary proof rules in
  \textnormal{Figure~\ref{fig:aux-rules}}.
\end{theorem}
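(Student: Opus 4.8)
The proof will proceed by induction on the derivation of the \eprhl\ judgment $\vdash \ehl{c}{c}{P;\Exp}{Q;\Exp'}{f}$, using the rules of Figure~\ref{fig:eprhl}. For each \eprhl\ inference rule, I would show that the corresponding conclusion $\gcp{c}{\Exp' + [\neg Q]\cdot\infty} \leq f(\Exp) + [\neg P]\cdot\infty$ follows from the induction hypotheses applied to the premises, using only the definitional clauses for skip, assignment, sequence, and conditionals from Figure~\ref{fig:wp2-rules} together with the auxiliary rules (\textsc{Mono}, \textsc{Const}, \textsc{SupAdd}, \textsc{Scale}, \textsc{Samp}, \textsc{Unif}, \textsc{Inv}) from Figure~\ref{fig:aux-rules}. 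The encoding of an \eprhl\ pre/postcondition pair $(P,\Exp)$ as the single relational expectation $\Exp + [\neg P]\cdot\infty$ is the bridge: a bound $\EE_\mu[\Exp'] \leq f(\Exp(s_1,s_2))$ holding whenever $P(s_1,s_2)$, together with $\mathrm{supp}(\mu)\subseteq Q$, is exactly captured by $\EE_\mu[\Exp' + [\neg Q]\cdot\infty] \leq f(\Exp(s_1,s_2)) + [\neg P(s_1,s_2)]\cdot\infty$, and soundness (\Cref{thm:sound}) lets us pass between such coupling-level statements and $\gcpsymbol$.

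First I would dispatch the structural rules. For skip, $\gcp{\Skip}{\Exp'+[\neg Q]\cdot\infty} = \Exp'+[\neg Q]\cdot\infty$, and the \eprhl\ skip rule forces $P=Q$, $\Exp=\Exp'$, $f=\mathrm{id}$. For assignment, the substitution in the definition of $\gcp{\Assn{x}{e}}{-}$ matches the substitution in the \eprhl\ assignment rule exactly. For sequencing $c_1;c_2$, I would apply the induction hypothesis to the second premise to get $\gcp{c_2}{\Exp''+[\neg Q]\cdot\infty} \leq g(\Exp')+[\neg R]\cdot\infty$, then apply \textsc{Mono} and the induction hypothesis to the first premise, and finally compose the affine functions $f = g\circ h$ (using that affine functions compose to affine functions and that $\gcpsymbol$ is compositional for sequences by definition); \textsc{Scale} handles the linear parts and \textsc{Const} handles additive constants when folding the $b$ in $ax+b$. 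The conditional rule is handled by the definitional clause for $\Cond{e}{c}{c}$: since \eprhl's rule for synchronized conditionals requires the precondition to imply $e\sidel \leftrightarrow e\sider$, the $\ind{e\sidel \neq e\sider}\cdot\infty$ summand is dominated by the $[\neg P]\cdot\infty$ term on the right, and on each branch we invoke the respective induction hypothesis.

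The randomized-sampling rule of \eprhl\ is where \textsc{Samp} (Proposition~\ref{prop:sampling-sound}) does the work: the \eprhl\ rule names a coupling of the two primitive distributions whose support respects $Q$ and bounds $\Exp'$ by $f(\Exp)$ in expectation; that coupling \emph{is} precisely a coupling function $M$, and \textsc{Samp} gives $\gcp{\Rand{x}{d}}{\Exp'+[\neg Q]\cdot\infty} \leq \EE_{(v_1,v_2)\sim M}[(\Exp'+[\neg Q]\cdot\infty)\{v_1,v_2/x\sidel,x\sider\}]$, which the support condition and the expectation bound collapse to $f(\Exp)+[\neg P]\cdot\infty$. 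The \textbf{main obstacle} is the loop rule: given an \eprhl\ loop invariant together with its pre/postcondition-with-expectation annotation, I must synthesize a relational expectation $\Inv$ satisfying the hypothesis of \Cref{thm:loop-sound} (equivalently the \textsc{Inv} rule). The natural candidate is $\Inv \triangleq \Theta + [\neg I]\cdot\infty$ where $I,\Theta$ are the \eprhl\ loop invariant and invariant-expectation; checking that $\ind{e\sidel\land e\sider}\cdot\gcp{c}{\Inv} + \ind{\neg e\sidel\land\neg e\sider}\cdot(\Exp'+[\neg Q]\cdot\infty) + \ind{e\sidel\neq e\sider}\cdot\infty \leq \Inv$ requires (i) the synchronization clause of the \eprhl\ loop rule to kill the $e\sidel\neq e\sider$ term, (ii) the body judgment's induction hypothesis to control $\gcp{c}{\Inv}$, and (iii) care with the affine function attached to the loop judgment — typically $f$ must be the identity for loops in the core \eprhl\ fragment, which simplifies matters but should be stated explicitly. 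I would also need to verify that finiteness of $\Exp,\Exp'$ (a hypothesis of the theorem) propagates through the derivation so that the $[\neg\cdot]\cdot\infty$ bookkeeping never produces spurious $\infty - \infty$ situations; this is routine but must be tracked. Finally, I would remark that the "furthermore" clause is automatic, since every step above used only the cited definitional clauses and the auxiliary rules, never the coupling-extraction machinery behind \Cref{thm:sound} or the fixed-point definition of $\gcpsymbol$ for loops.
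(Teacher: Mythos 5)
Your overall strategy---induction on the \eprhl\ derivation, encoding the pre/postcondition pair $(P,\Exp)$ as the single relational expectation $\Exp + [\neg P]\cdot\infty$, discharging assignment/sequence/conditional by the definitional clauses and \textsc{Mono}/\textsc{Scale}/\textsc{Const}, and discharging \textsc{Rand*} via \textsc{Samp} with the bijection coupling---matches the paper's proof. However, your treatment of the loop case has a genuine gap.

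The \eprhl\ \textsc{While} rule in Figure~\ref{fig:eprhl} is not annotated with a single loop invariant and a single invariant-expectation; it is indexed by a ranking variant $v$, with a \emph{family} of expectations $\Exp_k$ and a \emph{family} of affine functions $f_k$, and the conclusion's affine function is the composite $f_1 \circ \cdots \circ f_n$. Your candidate invariant $\Inv = \Theta + [\neg I]\cdot\infty$ cannot encode this: it fixes one expectation for all iterations and so cannot track how the accumulated affine transformation grows as the variant decreases. Your remark that ``$f$ must be the identity for loops in the core \eprhl\ fragment'' is also wrong; the rule explicitly allows non-identity $f_k$'s and produces a composed affine function. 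What the paper actually supplies to \textsc{Inv} is
\[
  \Inv_n \;\triangleq\; \sum_{j = 0}^{n} \bigl[\, P \land v\sidel = j \,\bigr] \cdot f^{(j)}(\Exp_j) \;+\; \bigl[\neg (P \land v\sidel \leq n)\bigr] \cdot \infty~,
\]
where $f^{(j)} = f_1 \circ \cdots \circ f_j$ (and $f^{(0)} = \mathrm{id}$); the variant-indexed case split $[P \land v\sidel = j]$ is essential to select the right $\Exp_j$ and the right accumulated affine function for each reachable variant value, and the side condition $P \to (v\sidel \leq 0 \leftrightarrow \neg e\sidel)$ is what makes the $\ind{\neg e\sidel \land \neg e\sider}$ branch of \textsc{Inv} collapse to the $j=0$ summand. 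Without this finer invariant, the inductive step $[P \land e\sidel \land e\sider \land v\sidel = k] \cdot \gcp{c}{\Inv_n} \leq [P \land v\sidel = k]\cdot f^{(k)}(\Exp_k)$ cannot be established, because the premise of the \eprhl\ rule only relates $\Exp_k$ at variant $k$ to $\Exp_{k-1}$ at variant $k-1$ via $f_k$. Separately, you should also account for the derived structural rules \textsc{Conseq}, \textsc{Case}, and \textsc{Frame-D} appearing in Figure~\ref{fig:eprhl}: they are easy (monotonicity, min of indicators, and \textsc{Const} with the expanding-$f$ hypothesis, respectively) but are part of the induction and must be discharged.
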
 

Intuitively, the bound on the relational pre-expectation captures the validity of
the original \eprhl\ judgment. For any pair of states $(s_1, s_2)$, if $(s_1,
s_2)$ does not satisfy $P$, then the right-hand side is infinite and the bound 
trivially holds. If $(s_1, s_2)$ satisfies $P$, then the right-hand side is finite (since
$\Exp$ is finite) and the relational pre-expectation is finite. This implies
that $Q$ must be satisfied almost surely in the coupling and
$\gcp{c}{\Exp'} \leq f(\Exp)$. This last inequality recovers the \eprhl\
judgment's bound on the output distance in terms of the input distance.
Furthermore, the embedding shows that the bound is derivable in our calculus
without computing infimums over couplings for sampling, or computing least fixed
points for loops.

\section{Warmup Example: Stability of SGD}
\label{sec:ex-stability}

To demonstrate our relational pre-expectation operator, we analyze the stability
of Stochastic Gradient Descent (SGD) as our warmup example. SGD is a core tool
in modern machine learning; SGD is the most common learning algorithm used in
practice for training neural networks. Its stability was first established in
\citet{HardtRS16}, and it was later formalized in a relational program logic
\eprhl~\cite{BartheEGHS18}. The corresponding proof in \eprhl{} involves complex
proof rules---our calculus can establish the same property with significantly
cleaner reasoning.

\subsection{Background}
Let $Z$ be a space of labeled \emph{examples}, e.g., images annotated with the
main subject. A \emph{learning algorithm} $A : S \to \RR^d$ takes a set $S \in
Z^n$ of examples as input and produces (``learns'') \emph{parameters} $w \in
\RR^d$. The algorithm is tailored to a given \emph{loss function} $\ell : Z \to
\RR^d \to [0, 1]$, which describes how well an example is labeled by some
parameters. The goal is to find parameters that have low loss on examples.

In machine learning, \emph{uniform stability} is a useful property for learning
algorithms. In a nutshell, a randomized learning algorithm $A$ is
$\epsilon$-\emph{uniformly stable} if for all pairs $S, S'$ of training sets 
differing in exactly one example, and for all examples $z \in Z$, the expected
losses of $z$ are close:
\[
  | \EE_{A(S)} [ \ell(z) ] - \EE_{A(S')} [ \ell(z) ]| \leq \epsilon~.
\]
Stable learning algorithms \emph{generalize}: their performance on new,
unseen examples is similar to their performance on the training
set~\cite{BousquetE02}. In particular, stability controls how much a
learning algorithm can \emph{overfit} the training set.

\begin{figure}[t]
  \begin{subfigure}[b]{0.49\textwidth}
\[
  \begin{array}{l}
    \mathbf{sgd}(S) \\
    \quad \Assn{w}{w_0}; \\
    \quad \Assn{t}{0}; \\
    \quad \WWhile{t < T}{} \\
    \quad \quad \Rand{s}{[S]}; \\
    \quad \quad \Assn{g}{\nabla \ell(s,-) (w)}; \\
    \quad \quad \Assn{w}{w - \alpha_t \cdot g}; \\
    \quad \quad \Assn{t}{t + 1};
  \end{array}
\]
\caption{Stochastic Gradient Descent (SGD)}
\label{fig:sgd}
\end{subfigure}
\begin{subfigure}[b]{0.49\textwidth}
  \[
    \begin{array}{l}
      \mathbf{TD0}(V) \\
      \quad \Assn{n}{0}; \\
      \quad \WWhile{n < N}{} \\
      \quad \quad \Assn{i}{0}; \\
      \quad \quad \WWhile{i < |\cS|}{} \\
      \quad \quad \quad \Rand{a}{\pi(i)};
      \Rand{r}{\cR(i, a)};
      \Rand{j}{\cP(i, a)}; \\
      \quad \quad \quad \Assn{W[i]}{(1 - \alpha) \cdot V[i] + \alpha \cdot (r + \gamma \cdot V[j])}; \\
      \quad \quad \quad \Assn{i}{i+1} \\
      \quad \quad \Assn{V}{W};
      \Assn{n}{n + 1};
    \end{array}
  \]
  \caption{TD(0) learning algorithm}
\label{fig:RL}
\end{subfigure}
\caption{Example programs: Stability and convergence}
\label{fig:ex:ml}
\end{figure}

\subsection{Verifying stability for stochastic gradient descent}

We consider the program \textbf{sgd} in Figure~\ref{fig:sgd}.  The gradient $\nabla$ is a higher-order function\footnote{This makes our states non-discrete, but the distributions over them will still have discrete support, since they are generated by a composition of discrete samplings.} with type $\nabla : (\RR^d \to [0, 1]) \to (\RR^d \to \RR^d)$; we assume that it is well-defined and given. In SGD, the true gradient of a function is approximated by a gradient $g$ at a single sample $s$. The step sizes $\alpha_t$ (with $t \in \mathbb{N}$) are a sequence of real numbers that control (together with the local gradient $g$) how to adjust the parameters in each iteration of SGD. Following \citet{HardtRS16}, we make the following assumptions:
\begin{enumerate}
	\item The loss function $\ell$ is convex and $L$-Lipschitz in its second argument, i.e., 
	$|\ell(z, w) - \ell(z, w')| \leq L \cdot \| w - w' \|$ for all parameters 
	$w, w' \in \RR^d$.
	\item The gradient $\nabla \ell(z, -) : \RR^d \to \RR^d$ is 
	$\beta$-Lipschitz for every $z \in Z$.
	\item The step sizes satisfy $0 \leq \alpha_t \leq 2/\beta$.
\end{enumerate}
To show uniform stability, for any two training sets $S\sidel, S\sider$
differing in one element and every example $z \in Z$, our proof obligation is
\[
  | \EE_{\mathbf{sgd}(S\sidel)} [ \ell(z) ]
  - \EE_{\mathbf{sgd}(S\sider)} [ \ell(z) ] | \leq \gamma L
\qquad
  \text{where}\
  \gamma \triangleq \frac{2 L}{n} \sum_{t = 0}^{T - 1} \alpha_t\,.
\]
Rather than working with the loss function directly, we will first bound the
pre-expectation of the distance $\| w\sidel - w\sider \|$ and then use the
$L$-Lipschitz property of $\ell$ to conclude uniform stability.
As usual, the main part of the proof is bounding the
pre-expectation of the loop. We use the following loop invariant:
\begin{equation*}
  \Inv \triangleq{} \ind{t\sidel \neq t\sider} \cdot \infty
                   + \ind{t\sidel = t\sider} \cdot \left( \| w\sidel - w\sider \| + \frac{2L}{n} \sum_{j = t\sidel}^{T - 1} \alpha_j \right)~.
\end{equation*}
By the loop rule (Theorem~\ref{thm:loop-sound}), it suffices to show the
following invariant condition:
\begin{equation}
  \label{eq:sgd-inv}
  \ind{e\sidel \land e\sider} \cdot \gcp{\mathit{bd}}{\Inv}
  + \ind{\neg e\sidel \land \neg e\sider} \cdot \| w\sidel - w\sider \|
  + \ind{e\sidel \neq e\sider} \cdot \infty
  \quad \leq \quad \Inv~.
\end{equation}
The main case corresponds to the first term, where both loop guards $e\sidel$ and $e\sider$ are true.
To bound the pre-expectation $\gcp{\mathit{bd}}{\Inv}$, we consider
$\gcp{\mathit{bd}}{\Inv} = \gcp{\Rand{s}{\unif{S}}}{\Inv'}$ where
\begin{align*}
  \Inv' &\triangleq
  \ind{t\sidel{+}1 \neq t\sider{+}1} \cdot \infty 
        + \ind{t\sidel{+}1 = t\sider{+}1} \cdot P, \mbox{ with} \\
  P &\triangleq \frac{2L}{n} \sum_{j = t\sidel + 1}^{T - 1} \!\!\alpha_j 
    + \left\|\!\!\begin{array}{c} (w\sidel - \alpha_{t\sidel} \nabla \ell(s\sidel, -)(w\sidel) ) \\
        - (w\sider - \alpha_{t\sider} \nabla \ell(s\sider, -)(w\sider) )
	 \end{array}\!\!\right\|~.
\end{align*}
To handle the random sampling command, we apply the sampling rule
(Proposition~\ref{prop:sampling-sound}) with the coupling function $M$ for the two
uniform distributions $[S\sidel]$ and $[S\sider]$ induced by the bijection $f :
S\sidel \to S\sider$ mapping the differing example in $S\sidel$ to its
counterpart in $S\sider$, and fixing all other examples. We then have
$\gcp{\Rand{s}{\unif{S}}}{\Inv'} \leq \Inv''$, where
\begin{align*}
  \Inv'' &\triangleq
  \ind{t\sidel{+}1 \neq t\sider{+}1} \cdot \infty + \ind{t\sidel{+}1 = t\sider{+}1} \cdot P', \mbox{ with} \\
  P' &=  \frac{2L}{n} \sum_{j = t\sidel + 1}^{T - 1} \alpha_j
     + \frac{1}{n} \sum_{s \in S\sidel}^{n - 1}
  \left\| \begin{array}{c} (w\sidel - \alpha_{t\sidel} \nabla \ell(s, -)(w\sidel) )\\
    - (w\sider - \alpha_{t\sider} \nabla \ell(f(s), -)(w\sider) ) 
 \end{array}\right\|
\end{align*}
We focus on the terms of the last sum. Using the $L$-Lipschitz property of
$\ell$, when $s$ is the differing example, we can bound the absolute difference
by $\|w\sidel - w\sider \| + 2 \alpha_{t\sidel} L$.
When $s$ is not the differing example, we have $s\sidel = s\sider$. By the
$\beta$-Lipschitz property of $\nabla \ell$, convexity, and $0 \leq \alpha_t
\leq 2/\beta$, we can bound each of the terms by $\| w\sidel - w\sider \|$.
Combining the two cases gives
\[
  \gcp{\mathit{bd}}{\Inv} \leq \left( \| w\sidel - w\sider \| + \frac{2L}{n} \sum_{j = t\sidel}^{T - 1} \alpha_j \right)
\]
for all input states with $t\sidel = t\sider$ and $e\sidel \land e\sider$.
This establishes \eqref{eq:sgd-inv}. Theorem~\ref{thm:loop-sound} gives
\[
  \gcp{\WWhile{e}{\mathit{bd}}}{\| w\sidel - w\sider \|} \leq \Inv .
\]
Finally, taking the pre-expectations of both sides with respect to the initial
assignments yields
\[
  \gcp{\mathbf{sgd}(S)}{\| w\sidel - w\sider \|}
  \leq \frac{2L}{n} \sum_{j = 0}^{T - 1} \alpha_j = \gamma ,
\]
when $S\sidel$ and $S\sider$ differ in exactly one training example.  Since
$\ell$ is $L$-Lipschitz, we conclude
\[
  \gcp{\mathbf{sgd}(S)}{|\ell(z, w)\sidel - \ell(z, w)\sider |}
  \leq \gamma L~,
\]
for any example $z \in Z$. By Theorem~\ref{thm:abs-diff}, the expected losses
are at most $\gamma L$ apart:
\[
  | \EE_{\mathbf{sgd}(S\sidel)} [ \ell(z) ] - \EE_{\mathbf{sgd}(S\sider)} [ \ell(z) ] | \leq \gamma L~,
\]
and so SGD satisfies $\gamma L$-uniform stability.

\begin{remark*}
  This stability bound for SGD was previously verified in the program logic
  \eprhl~\citep{BartheEGHS18}, using a complex rule for sequential composition
  (\textsc{SeqCase}) that required bounding the probability of selecting two
  differing examples. Our proof using $\gcpsymbol$ is much simpler, involving
  just compositional reasoning for sequencing and a loop invariant.
\end{remark*}

\begin{remark*}
While our calculus was designed for probabilistic programs, it is also a useful
tool for proving relational properties of deterministic programs. In the
\cref{sec:pgd}, we show how to prove a sensitivity bound for \emph{projected
gradient descent}, a deterministic version of SGD.
\end{remark*}

\section{Example: Convergence of Reinforcement Learning algorithms}
\label{sec:ex-rl}

In the previous section, the stability guarantee weakens as the program
progresses: starting from two initially-equal parameter settings, the learned
parameters may drift apart as SGD runs for more iterations. In the following two
sections, we will apply our technique to prove a different style of guarantee:
probabilistic convergence of two outputs, starting from two different inputs.
Our first example shows convergence for a classical algorithm from Reinforcement
Learning (RL)~\citep{DBLP:journals/ml/Sutton88}, guided by a novel analysis
by~\citet{DBLP:conf/aistats/AmortilaPPB20}.

\subsection{Background}
In the standard reinforcement learning setting, an agent (the learning
algorithm) repeatedly interacts with the environment, a Markov Decision Process
(MDP) with \emph{state} space $\cS$. At each step, the agent chooses an
\emph{action} from a set $\cA$. The MDP draws a numeric \emph{reward} according
to a function $\cR : \cS \times \cA \to \Dist([0, R])$, and transitions to a new
random state drawn from a \emph{transition} function $\cP : \cS \times \cA \to
\Dist(\cS)$. The current state of the process is known to the learner---imagine
the current position of a chessboard---but the exact reward and transition
functions ($\cR, \cP$) are not known. Given black-box access to $\cR$ and $\cS$,
the goal of the learner is to find a policy map $\pi : \cS \to \cA$ that
maximizes the learner's expected reward when interacting with the unknown MDP
over an infinite time horizon; estimated rewards in the future are reduced by a
discount factor $\gamma \in [0, 1)$ for each step into the future.

For many approaches to learning the optimal policy, an important requirement is
estimating the \emph{value function} $V : \cS \to [0, R]$ of the MDP, i.e., the
expected reward at each state if the agent were to repeatedly act according to
some given policy $\pi$. \emph{Temporal difference (TD)} learning is one
approach to estimating the value function~\citep{DBLP:journals/ml/Sutton88}. In
brief, a TD learner maintains an estimate of $V$ and loops through states in
$\cS$. At each state $s$, the learner selects an action $a \sim \pi(s)$, draws a
reward $r \sim \cR(s, a)$, and draws a transition $s' \sim \cR(s, a)$. Then, the
estimate $V(s)$ is updated by incorporating the observed reward $r$ and the
estimated value $V(s')$ of the new state.

\Cref{fig:RL} shows one simple approach, known as TD(0). We assume that the
program takes only one argument $V$, the initial estimate of the value
function. All other parameters are assumed to be fixed: the current policy
$\pi$, the reward and transition functions $\cR$ and $\cP$, the discount factor
$\gamma$, the step size $\alpha \in (0, 1)$---higher $\alpha$ allows $V$ to
evolve faster--and the number of iterations $N$.

\subsection{Verifying convergence for $\mathbf{TD0}$}

Since the true value function is not known, the initial estimate $V$ chosen with
little information. A natural question is: does the algorithm converge to the
same distribution no matter how $V$ is initialized? If so, how fast does
convergence happen, as a function of the number of iterations $N$? To answer
these questions, we will verify that $\mathbf{TD0}$ is contractive on $V$. More
specifically, we will show the bound
\begin{equation} \label{eq:td0:guarantee}
  \gcp{\mathbf{TD0}(V)}{\| V\sidel - V\sider \|_\infty}
  \leq k^N \cdot \| V\sidel - V\sider \|_\infty ,
\end{equation}
where $k \triangleq (1 - \alpha + \alpha \gamma) < 1$. Before we describe the
verification, we unpack the guarantee. First, the $\infty$-norms are defined by
$\| V\sidel - V\sider \|_\infty \triangleq \max_{i < |\cS|} | V\sidel[i] -
V\sider[i] |$. By \cref{thm:sound}, \cref{eq:td0:guarantee} implies that for any
inputs $V_1$ and $V_2$, there exists a coupling $\mu$ of the output
distributions $\mu_1$ and $\mu_2$ from $\mathbf{TD0}(V\sidel)$ and
$\mathbf{TD0}(V\sider)$, such that:
\begin{align}
  k^N \cdot \| V_1 - V_2 \|_\infty
  &\geq \EE_{(s_1, s_2) \sim \mu} [ \| s_1(V) - s_2(V) \|_\infty ]
  \notag \\
  &\geq \max_{i < |\cS|} \EE_{(s_1, s_2) \sim \mu} [ | \, s_1(V[i]) - s_2(V[i]) \, | ]
  \notag \\
  &\geq \max_{i < |\cS|}~\bigl| \, \EE_{(s_1, s_2) \sim \mu} [ s_1(V[i]) - s_2(V[i])  ] \, \bigr|
  \notag \\
  &= \max_{i < |\cS|}~\bigl| \, \EE_{s_1 \sim \mu_1} [ s_1(V[i]) ] - \EE_{s_2 \sim \mu_2} [ s_2(V[i]) ] \, \bigr|
  \tag{by \cref{thm:abs-diff}}
\end{align}
In words, the right-hand side of the final line is the maximum difference
between the average estimates of $V[i]$ in the two outputs, taking the maximum
over all indices $i$. Since $k < 1$, both sides tend to zero exponentially
quickly from any pair of starting states $V_1$ and $V_2$.

\paragraph*{Inner loop.}
We start by analyzing the inner loop~$w_{in}$. We first show that
\[
    \gcp{w_{in}}{\| W\sidel - W\sider \|_\infty} \leq \cI_{in}
\]
for the invariant $\cI_{in}$:
\begin{multline*}
	\cI_{in} \triangleq \ind{i\sidel \neq i\sider} \cdot \infty \\
                      + \ind{i\sidel = i\sider} \cdot \max_{l < |S|} (\ind{l < i\sidel} \cdot | W\sidel[l] - W\sider[l] |
                      + \ind{i\sidel\leq l}\cdot k\cdot \| V\sidel - V\sider \|_\infty ).
\end{multline*}
Let $c_{in}$ be the body, and $c_{samp}$ be the three sampling statements.
Applying \textsc{Inv}, it suffices to show:
\[
  \ind{i\sidel < |\cS| \land i\sider < |\cS|} \cdot \gcp{c_{in}}{\Inv_{in}}
  + \ind{i\sidel \geq |\cS| \land i\sider \geq |\cS|} \cdot \| W\sidel - W\sider \|_\infty
  + \ind{i\sidel \neq i\sider} \cdot \infty \leq \Inv_{in}
\]
The main case is bounding $\gcp{c_{in}}{\Inv_{in}}$; the other cases are
simpler. We describe the overall idea here, deferring details to \cref{sec:td0}.
To bound the relational pre-expectation for the three sampling instructions, we
apply the sampling rule \textsc{Samp}. Since the relational pre-expectation is
computed in reverse order, we must choose a coupling for sampling $j$ first, then choose
a coupling for sampling $r$, and then finally choose a coupling for sampling
$a$. We aim to take the identity coupling in each case, ensuring $j\sidel =
j\sider$, $r\sidel = r\sider$, and $a\sidel = a\sider$, but there is a small
problem: we can only take the identity coupling when samples are taken from
the same distributions, e.g., $\cR(i\sidel, a\sidel) = \cR(i\sider, a\sider)$.
The invariant assumes $i\sidel = i\sider$, but we can only ensure $a\sidel =
a\sider$ after we have specified the couplings for $j$ and $r$. Accordingly, our
coupling functions for \textsc{Samp} will be of the following form: if $a\sidel
= a\sider$ then we take the identity coupling, otherwise we take the trivial
(independent) coupling.

\paragraph*{Outer loop.}
We now turn to the analysis of the outer loop. Consider the invariant:
\[
  \Inv_{out} \triangleq [n\sidel \neq n\sider] \cdot \infty
  + [n\sidel = n\sider] \cdot k^{(N\ominus n\sidel)} \| V\sidel - V\sider \|_\infty~,
\]
where $N \ominus n$ denotes $\max(N{-}n,0)$.  We compute:
\begin{align*}
  &\gcp{\Assn{i}{0}; w_{in}; \Assn{V}{W}; \Assn{n}{n+1}}{\Inv_{out}} \\
  &= \gcp{\Assn{i}{0}; w_{in}}{[n\sidel \neq n\sider] \cdot \infty + [n\sidel = n\sider]
  \cdot k^{(N\ominus (n\sidel{+}1))} \| W\sidel - W\sider \|_\infty} \\
  &\leq \gcp{\Assn{i}{0}}{[n\sidel \neq n\sider] \cdot \infty
  + [n\sidel = n\sider] \cdot k^{(N \ominus (n\sidel+1))}\cdot \Inv_{in}}\\
  &\leq [n\sidel \neq n\sider] \cdot \infty
  + [n\sidel = n\sider] \cdot k \cdot k^{(N \ominus (n\sidel+1))} \| V\sidel -
  V\sider \|_\infty
  = \Inv_{out}
\end{align*}
where the last step holds because $\Inv_{in} = k \cdot \| V\sidel - V\sider
\|_\infty$ when $i = 0$. This establishes the outer invariant. Computing the
pre-expectation of the first initialization, we conclude:
\[
  \gcp{\mathbf{TD0}(V)}{\| V\sidel - V\sider \|_\infty}
  \leq k^N \cdot \| V\sidel - V\sider \|_\infty~.
\]

\section{Example: Random Walks and Card Shuffles}
\label{sec:ex-shuffle}

In this section, we verify more challenging examples of
probabilistic convergence from the theory of Markov chains, formalizing
arguments by \citet{aldous1983random} in his seminal work introducing the
coupling method. Our use of relational pre-expectations is similar in spirit to
the previous section, but there are two key differences: (1) we aim to prove
convergence under Total Variation (TV) distance, which is the standard notion of
distance in this field, and (2) our arguments will require selecting more
complex couplings, instead of just the identity coupling.

\subsection{Preliminaries: Card shuffling and Markov chain mixing}
Distributions that are easy to describe can be surprisingly difficult to sample
from. For instance, consider the uniform distribution over all permutations of a
deck of playing cards. It is not clear how to sample from this
distribution---i.e., perform a \emph{perfect shuffle}---but we can implement a
card shuffle algorithm that executes a sequence of simple randomized steps (e.g.
swapping pairs of cards) and hope that after a small number steps, we will
produce a shuffle that is close to uniform.

Abstracting a bit, card shuffling algorithms are a representative example of
random walks for approximating complex distributions. This is a technique with a
long history, combining elements of probability theory with statistical physics;
and it is the basis of many heuristic algorithms used today, e.g., Markov Chain
Monte Carlo (MCMC). From a theoretical perspective, the central question is:
\emph{how fast do these processes converge to their target distribution}?  How
many steps do we need to get within $\epsilon$ distance of the uniform
distribution on shuffles?

Random walks and card shuffling algorithms are classical examples of
\emph{Markov chains}. A finite, discrete-time Markov chain is defined by a
finite state space $\Sigma$ and a transition function $P \colon \Sigma \to \Dist(\Sigma)$. Given an initial state $\sigma$, the associated Markov
process $\{ X^\sigma_k \}_{k \in \NN}$ is a sequence of distributions such that
$X^\sigma_0 = \delta(\sigma)$ and $X^\sigma_{k+1}(\tau') = \sum_\tau
X^\sigma_k(\tau) \cdot P(\tau,\tau')$. For example, the state space $\Sigma$
could be the set of all permutations of a deck of cards, and the transition
function $\tau$ could describe randomly splitting the deck and interleaving the
halves.

Consider the TV distance $v(k)$ between two state distributions
after running $k$ steps from two states $\sigma, \tau$, i.e.,
$
  v(k) \triangleq \max_{\sigma, \tau} TV(X^\sigma_k, X^\tau_k)~.
$
If $v(k)$ tends to 0, then there exists a unique \emph{stationary} distribution
$\eta$ such that $\eta(\sigma)\cdot P(\sigma,\sigma') = \eta(\sigma')$;
typically, $\eta$ will be the target distribution we are trying to sample from.
Furthermore, $v(k)$ provides an upper bound on the distance between the state
distribution after $k$ steps to the stationary distribution $\eta$:
\[
  \max_\sigma TV(X^\sigma_t, \eta) \leq v(k)~.
\]
While it is usually not possible to derive $v(k)$ exactly, we can upper-bound
$v(k)$ by constructing couplings of $(X^\sigma_t, X^\tau_t)$ and applying
Theorems~\ref{thm:tv} and~\ref{thm:scaled-tv}. In this way, we can prove bounds
on the number of steps needed to get within some distance of the target
distribution.

\subsection{Warmup: Hypercube walk}
\label{ex:hypercube}

We start off with a (rather naive) random walk for sampling $N$ uniformly random
bits, which serves as a toy version of the more complex random walks we will see
later.  Our \emph{position} is a string of $N$ bits (which can be regarded as a
vertex of an $N$-dimensional \emph{hypercube}). On every iteration of the walk
we uniformly sample from $\{ 0, \dots, N \}$. Note that there are $N + 1$
possible draws, but only $N$ coordinates: if we sample $0$, then we do not move,
otherwise we reverse the sampled coordinate $i$ in the current position. We will
show that starting from any two positions, the process \emph{mixes rapidly},
i.e.\ starting from any position we will quickly reach the uniform distribution
over positions.

Let $e(i) = (0, \dots, 1, \dots, 0) \in \{0,1\}^N$ be the position where all
coordinates are set to zero except for coordinate $i$, which is set to one. We
also write $\oplus$ for xor applied coordinate-wise. We can model $K$ steps of
the random walk with the following simple \pwhile\ program:
\[
  \begin{array}{l}
    \mathbf{hWalk}(\mathit{pos}, N, K) \\
    \quad \Assn{k}{0}; \\
    \quad\WWhile{k < K}{} \\
    \quad\quad \Rand{i}{\unif{[N {+} 1]}}; \\
    \quad\quad \Condt{i \neq 0}{}
    \Assn{\mathit{pos}}{\mathit{pos} \oplus e(i)}; \\
    \quad\quad \Assn{k}{k + 1} \\
  \end{array}
\]
%
Consider two program runs, started at $\mathit{pos}\sidel$ and
$\mathit{pos}\sider$ respectively. Let $d_H$ be normalized Hamming distance
between the two positions:
\[
	d_H \triangleq \frac{1}{N} \sum_{i = 1}^N [\mathit{pos}\sidel[i] \neq \mathit{pos}\sider[i]]~.
\]
That is, $d_H$ equals the fraction of coordinates where $\mathit{pos}\sidel$ and
$\mathit{pos}\sider$ differ. Let $C(\mathit{pos}\sidel, \mathit{pos}\sider)
\subseteq [N]$ be the set of differing coordinates. We specify a coupling on
$\unif{[N {+} 1]}$ by giving a bijection on $[N {+} 1]$. There are three cases:
\begin{enumerate}
  \item $d_H \geq 2/N$: Let $C(\mathit{pos}\sidel, \mathit{pos}\sider) = \{ i_0,
    \dots, i_{m - 1} \}$.  Take the bijection that behaves like the identity on $[N{+}1] \setminus
    C(\mathit{pos}\sidel, \mathit{pos}\sider)$ and that, for all $0 \leq n \leq m$, maps $i_n$ to $i_{n + 1}$,
    where we set $i_m = i_0$.
  \item $d_H = 1/N$: Take the bijection exchanging the differing coordinate
    and $0$.
  \item $d_H = 0$: Take the identity bijection.
\end{enumerate}
The coupling captures the following intuition.  When $d_H \geq 2/N$, the distance
decreases by $2/N$ if we select a differing coordinate; otherwise, it
remains unchanged. Likewise when $d_H = 1/N$, if we select the differing
coordinate or $0$, then the distance decreases by $1/N$ (to $0$); otherwise, the
distance remains unchanged.

We can analyze the program $\mathbf{hWalk}$ using our relational pre-expectation
calculus. Let the target relational expectation be $d_H$.  The main step in the
reasoning is to select a relational invariant for the loop. We define:
\[
  \Inv \triangleq \ind{k\sidel \neq k\sider} \cdot \infty
      + \ind{k\sidel = k\sider} \cdot d_H \cdot \left(\frac{N-1}{N + 1}\right)^{K\ominus k\sidel} .
\]
Then, we can verify for the loop $\WWhile{k < K}{bd}$ of program $\mathbf{hWalk}$ that
\[
\begin{array}{rl}
&\ind{(k\sidel < K\sidel) \land (k\sider < K\sider)} \cdot \gcp{\mathit{bd}}{\Inv} \\
  + &\ind{(k\sidel \geq K\sidel) \land (k\sider \geq K\sider)} \cdot d_H \\
  + &\ind{(k\sidel < K\sidel) \neq (k\sider < K\sider)} \cdot \infty \quad\qquad \leq \quad \Inv,
\end{array}
\]
and conclude by the loop rule (Theorem~\ref{thm:loop-sound}):
\[
  \gcp{\WWhile{k < K}{\mathit{bd}}}{d_H} \leq \Inv .
\]
The main step here is showing that 
\[\ind{(k\sidel < K\sidel) \land (k\sider < K\sider)} \cdot \gcp{\mathit{bd}}{\Inv}
\ \leq \ 
\ind{(k\sidel < K\sidel) \land (k\sider < K\sider)} \cdot \Inv~, \]
where we use the fact that the coupling described above makes $d_H$ decrease.

Pushing the invariant past the initialization instruction $\Assn{k}{0}$ yields:
\[
  \gcp{\mathbf{hWalk}(\mathit{pos}, N, K)}{d_H}
  \leq \gcp{\Assn{k}{0}}{\cI}
  = \left(\frac{N-1}{N + 1}\right)^K~.
\]
Since the distance $d_H$ takes distance at least $1/N$ on pairs of distinct
positions, by Theorem~\ref{thm:scaled-tv} the TV distance between the
distributions over positions satisfies
\begin{align*}
v(K,N) = & \max_{p_1,p_2 \in \{0,1\}^N } TV(\denot{\mathbf{hWalk}}(p_1,N,K), \denot{\mathbf{hWalk}}(p_2,N,K)) \\
\leq & \ N \left( 1 - \frac{2}{N{+}1} \right)^K .
\end{align*}
Plugging in specific values gives concrete bounds between the two output
distributions. Let $\rho > 1$. To achieve a bound of $O(1/\rho)$ on the
right hand side, we need to take $K \geq (1/2) N \log (N\rho)$. 
The inequality above also gives useful asymptotic information;
if we set $\rho = N$, and take $K \geq N \log N$, the right-hand side is
asymptotically bounded by $O(1/N)$ for large $N$.
We can show that this converges to the uniform distribution over vectors.
We provide more details in Section~\ref{sec:extensions}. In summary, we 
have shown the following:

\begin{theorem}
	Let $K = N \log N$.
	For any initial position $\mathit{pos}$,
	\[TV \left( \mathbf{hWalk}(\mathit{pos},N,K), U(\{0,1\})^N)\right) \in \mathcal{O}(1/N)~. \] 
\end{theorem}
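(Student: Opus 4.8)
The plan is to combine the convergence-rate bound already established in this section with two structural facts about the Markov chain underlying $\mathbf{hWalk}$. Recall that, via the relational pre-expectation calculus, we have shown $v(K,N) = \max_{p_1,p_2 \in \{0,1\}^N} TV(\mathbf{hWalk}(p_1,N,K), \mathbf{hWalk}(p_2,N,K)) \leq N(1 - 2/(N{+}1))^K$. I would (i) identify the stationary distribution $\eta$ of the walk, (ii) invoke the general inequality $\max_\sigma TV(X^\sigma_K, \eta) \leq v(K)$ relating distance-to-stationarity with pairwise mixing, and then (iii) specialize to $K = N\log N$ and carry out the asymptotic estimate.

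For (i): the transition kernel of $\mathbf{hWalk}$ maps $\mathit{pos}$ to $\mathit{pos} \oplus e(i)$ with probability $1/(N{+}1)$ for each $i \in \{1,\dots,N\}$ and keeps $\mathit{pos}$ with probability $1/(N{+}1)$; since $e(i) \oplus e(i) = 0$, the kernel is symmetric, hence doubly stochastic, so $U(\{0,1\}^N)$ is stationary. Irreducibility (any vertex reaches any other by flipping one coordinate at a time) and aperiodicity (the self-loop has positive probability) make it the unique stationary distribution, and $v(K,N) \to 0$ forces convergence to it; this is the convergence-to-uniform claim elaborated in Section~\ref{sec:extensions}. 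For (ii): since $\eta = \sum_\tau \eta(\tau)\, X^\tau_K$ by stationarity and $TV$ is jointly convex, $TV(X^{\mathit{pos}}_K, \eta) \leq \sum_\tau \eta(\tau)\, TV(X^{\mathit{pos}}_K, X^\tau_K) \leq v(K,N)$. Because the program output $\mathbf{hWalk}(\mathit{pos},N,K)$ is exactly the $K$-step distribution $X^{\mathit{pos}}_K$, this yields $TV(\mathbf{hWalk}(\mathit{pos},N,K),\, U(\{0,1\}^N)) \leq v(K,N)$ for every initial position.

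For (iii), I plug $K = N\log N$ into the bound on $v$ and use $1 - x \leq e^{-x}$: $v(N\log N, N) \leq N\exp(-2N\log N/(N{+}1)) = N \cdot N^{-2N/(N+1)} = N^{2/(N+1)}/N$. Since $N^{2/(N+1)} = \exp(2\log N/(N{+}1)) \to 1$ and is bounded above by a constant for all $N$, the right-hand side is $\mathcal{O}(1/N)$, which is the claim. The one genuinely delicate point is precisely this last estimate: writing the exponent as $2N/(N{+}1) = 2 - 2/(N{+}1)$ and observing that the ``lost'' factor $N^{2/(N+1)}$ stays bounded is what upgrades the crude reading $N^{-1+o(1)}$ into a true $\mathcal{O}(1/N)$ bound. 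Everything else — identifying the program semantics with $X^{\mathit{pos}}_K$, checking that the uniform distribution is stationary, and the convexity step — is routine bookkeeping.
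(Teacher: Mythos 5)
Your proof is correct, and its overall skeleton matches the paper's: (a) the pairwise mixing bound $v(K,N)\le N\bigl(1-\tfrac{2}{N+1}\bigr)^K$ from the invariant argument, (b) the general inequality $\max_\sigma TV(X^\sigma_K,\eta)\le v(K)$ already recalled in the Markov-chain preliminaries, and (c) the estimate $N\bigl(1-\tfrac{2}{N+1}\bigr)^{N\log N}\le N^{2/(N+1)}/N = \mathcal{O}(1/N)$, where your care in showing $N^{2/(N+1)}$ is bounded is exactly the right delicate point.

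Where you genuinely diverge from the paper is in identifying $\eta$ as the uniform distribution. You argue classically: the kernel of the hypercube walk is symmetric, hence doubly stochastic, so uniform is stationary, and irreducibility plus aperiodicity give uniqueness. The paper instead deliberately stays inside the relational pre-expectation calculus: in Section~\ref{sec:extensions} it picks any two target states $R_1,R_2$, considers the indicator expectations $S_1 = [\mathrm{out}=R_1]$, $S_2 = [\mathrm{out}=R_2]$, bounds $|S_1-S_2|$ by a carefully chosen auxiliary distance $d$, and runs a coupling argument (now on two copies started from the \emph{same} initial state) to show the probabilities of all outcomes converge to one another. Your route is shorter and relies on a standard observation that any probabilist would reach for first; the paper's route is longer but demonstrates that uniformity of the limit can itself be certified inside the calculus rather than imported as an external fact, which is part of what the paper is trying to showcase. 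Both prove the theorem; yours is the more elementary and self-contained reading, the paper's is the more methodologically uniform one.

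One small stylistic note: your inline derivation of $TV(X^{\mathit{pos}}_K,\eta)\le v(K,N)$ via $\eta = \sum_\tau \eta(\tau)\,X^\tau_K$ and convexity is a re-proof of a fact the paper simply cites in its preliminaries; you could just invoke it directly, as the paper does.
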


\begin{figure}[t]
  \begin{minipage}[t]{.25\textwidth}
\[
  \begin{array}{l}
    \mathbf{rTop}(\mathit{deck}, N, K) \\
    \quad \Assn{k}{0}; \\
    \quad\WWhile{k < K}{} \\
    \quad\quad \Rand{p}{\unif{[N]}}; \\
    \quad\quad \Assn{\mathit{deck}}{\mathsf{shiftR}(\mathit{deck},p)}; \\
    \quad\quad \Assn{k}{k + 1}; \\
  \end{array}
\]
	\end{minipage}
  \begin{minipage}[t]{.4\textwidth}
\[
  \begin{array}{l}
    \quad\mathbf{rTrans}(\mathit{deck}, N, K) \\
    \quad\quad \Assn{k}{0}; \\
    \quad\quad\WWhile{k < K}{} \\
    \quad\quad\quad \Rand{p}{\unif{[N]}}; \Rand{p'}{\unif{[N]}}; \\
    \quad\quad\quad \Assn{c}{\mathit{deck}[p]}; \Assn{c'}{\mathit{deck}[p']}; \\
    \quad\quad\quad \Assn{\mathit{deck}[p]}{c'}; \Assn{\mathit{deck}[p']}{c}; \\
    \quad\quad\quad \Assn{k}{k + 1}; \\
  \end{array}
\]
	\end{minipage}
  \begin{minipage}[t]{.3\textwidth}
\[
  \begin{array}{l}
    \mathbf{riffle}(deck, N, K) \\
    \quad \Assn{k}{0}; \\
    \quad\WWhile{k < K}{} \\
    \quad\quad \Rand{b}{\unif{\{ 0,1 \}^N}}; \\
    \quad\quad \Assn{top}{deck(\bar{b})}; \\
    \quad\quad \Assn{bot}{deck(b)}; \\
    \quad\quad \Assn{deck}{\mathsf{cat}(top, bot)}; \\
    \quad\quad \Assn{k}{k + 1};
  \end{array}
\]
	\end{minipage}
\caption{Shuffling algorithms}
\label{fig:shuffling}
\end{figure}
\subsection{Random-to-top shuffle}
\label{ex:rand-to-top}

For our shuffling examples, we will need some notation. We view a permutation
$deck$ as a map from positions in $p \in [N]$ to names of cards in $c \in C$;
$deck[p]$ denotes the card at position $p$, while $deck^{-1}(c)$ denotes the
position corresponding to card $c$. Summation over an empty set of indices is
treated as zero, while the product over an empty set of indices is treated as one.
We outline the arguments here; further details are provided in Appendix F.

For our first card shuffling algorithm we consider the \emph{random-to-top}
shuffle. In each iteration, it selects a random position in the deck and moves
the card at that position to the top.\footnote{%
  This algorithm is the time-reversed version of the \emph{top-to-random}
  shuffle, where the top card is moved to a random position. It is known
  that a Markov chain's convergence behavior is equivalent to that of its
reversed process~\cite{aldous1983random}.}
We model this shuffle with program \textbf{rTop} in Figure~\ref{fig:shuffling}.
For a given input deck of size $N$, the program repeats $K$ times the process
of selecting a random card and moving it to the top. The operation
$\mathsf{shiftR}(\mathit{deck},j)$ takes the block
$\mathit{deck}[0],\ldots,\mathit{deck}[j]$ and cycles it one
position to the right (thus moving $\mathit{deck}[j]$ to the top), leaving the rest of the deck intact.

We are interested in bounding the distance between the stationary distribution---which in this case is the uniform distribution---and the output distribution
after $K$ iterations. We will start with two decks of
size $N$ that are both permutations of $[N]$. 
As in the hypercube example, we bound the
pre-expectation of the normalized Hamming distance:
\[
  d_H \triangleq \frac{1}{N} \sum_{i=0}^{N-1} \ind{\mathit{deck}\sidel[i] \neq \mathit{deck}\sider[i]}~.
\]
Note that $d_H$ takes distance at least $1/N$ on pairs of distinct permutations.
If we can show that the pre-expectation of $d_H$ is not too big, then we can apply
Theorem~\ref{thm:scaled-tv} to conclude that the final distributions over
permutations have a close TV distance. It will be convenient to work with
an auxiliary distance:
\[
        d_M \triangleq (1/N) \cdot \left(N - \max_{i} \left(\forall j<i. \mathit{deck}\sidel[j] = \mathit{deck}\sider[j] \right) \right)~.
\]
The idea is that the coupling chooses identical cards on both decks and moves
them to the top. This will form a block of matched cards on the top of both
decks. Intuitively, $d_M$ measures the fraction of the deck that is not part of
this top block. The target distance $d_H$ is upper-bounded by $d_M$, since $d_M$
counts all cards outside the first block as different. Bounds on $d_H$ follow
from bounds on $d_M$. 
To bound the pre-expectation of $d_M$, we take the invariant:
\[
  \Inv \triangleq \ind{k\sidel \neq k\sider} \cdot \infty
  + \ind{k\sidel = k\sider} \cdot d_M \cdot \left(\dfrac{N-1}{N}\right)^{K \ominus k\sidel}~.
\]
We can check that it satisfies the inequality
\[
  \ind{k\sidel < K \land k\sider < K} \cdot \gcp{\mathit{bd}}{\Inv}
  + \ind{k\sidel \geq K \land k\sider \geq K} \cdot d_H
  + \ind{(k\sidel < K) \neq (k\sider < K)} \cdot \infty \leq \Inv,
\]
where $\mathit{bd}$ is the loop body. The main case is to show the inequality
for the first term when both loop guards are true: we need to bound the
pre-expectation of $\Inv$ with respect to $\mathit{bd}$. We can bound
\[
  \gcp{\mathit{bd}}{\Inv} \leq d_M \cdot \left(\dfrac{N-1}{N}\right)^{K \ominus k\sidel}~,
\]
by applying the sampling rule (Proposition~\ref{prop:sampling-sound}) with the
coupling function $M$ that selects the same card in both decks:
\[
  M(s_1, s_2)(p_1, p_2) \triangleq \begin{cases}
    1/N &: \denot{\mathit{deck}}s_1 [p_1] = \denot{\mathit{deck}}s_2 [p_2] \\
    0 &: \text{otherwise} .
  \end{cases}
\]
The idea is that if we pick two cards in the first matched block, which happens
with probability $(1 - d_M)$, then the distance will remain the same. Otherwise,
we will create at least one new matched pair in the first block and the distance
will decrease by $1/N$.
Hence, we can apply the loop rule (Theorem~\ref{thm:loop-sound}) to conclude:
\[
  \gcp{\WWhile{k < K}{\mathit{bd}}}{d_H} \leq \Inv .
\]
Computing the pre-expectation of $\Inv$ with respect to the first
instruction, we have
\[
  \gcp{\mathbf{rTop}(\mathit{deck},N,K)}{d_H}
  \leq \left(\frac{N - 1}{N}\right)^K~,
\]
noting that the distance $d_M$ between the initial decks is at most $1$. Since
$d_H$ assigns pairs of distinct decks a distance at least $1/N$,
Theorem~\ref{thm:scaled-tv} implies that the TV distance between the
distributions over decks satisfies:
\[ v(K,N) = \max_{d_1,d_2 \in [N]} TV(\denot{\mathbf{rTop}}(d_1,N,K), \denot{\mathbf{rTop}}(d_2,N,K))
\ \leq N \ \left( \frac{N - 1}{N} \right)^K~. \]
For example, if we choose $K$ to be $N \log (N\rho)$, then the distance between
permutation distributions is bounded by $O(1/\rho)$ for large $N$ and $\rho > 1$.
By setting $\rho=N$, we have shown the following:

\begin{theorem}
	Let $K = 2 N \log N$, and $\mathit{Perm}([N])$ be the set of permutations over $N$. 
	For any initial permutation of $\mathit{deck}$,
	\[TV(\mathbf{rTop}\left(\mathit{deck},N,K), U(\mathit{Perm}([N])) \right) \in \mathcal{O}(1/N). \] 
\end{theorem}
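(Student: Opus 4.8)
The plan is to derive this corollary from two ingredients already available: (i) the bound on pairwise mixing established just above, namely
\[
  v(K,N) \;=\; \max_{d_1,d_2 \in \mathit{Perm}([N])} TV\bigl(\denot{\mathbf{rTop}}(d_1,N,K),\, \denot{\mathbf{rTop}}(d_2,N,K)\bigr)
  \;\leq\; N\left(\frac{N-1}{N}\right)^{K},
\]
and (ii) the generic Markov-chain fact recalled in the preliminaries of this section: if $v(k)\to 0$ then the chain has a unique stationary distribution $\eta$, and $\max_\sigma TV(X^\sigma_K,\eta)\leq v(K,N)$. Since $N(1-1/N)^K\to 0$ as $K\to\infty$ for every fixed $N$, the bound above already witnesses $v(k)\to 0$, so such an $\eta$ exists and is unique, and the distribution of $\mathit{deck}$ produced by $\mathbf{rTop}(\mathit{deck},N,K)$ is exactly $X^{\mathit{deck}}_K$.

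First I would identify $\eta$ with $U(\mathit{Perm}([N]))$. Each loop iteration of $\mathbf{rTop}$ applies $\mathsf{shiftR}(-,p)$ for a uniformly chosen $p\in[N]$, and for every fixed $p$ the map $\mathsf{shiftR}(-,p)$ is a bijection on $\mathit{Perm}([N])$ — its inverse cycles the top $p{+}1$ cards one position in the opposite direction. Hence the one-step kernel $P(\sigma,\sigma') = \tfrac1N\,\bigl|\{\,p \mid \mathsf{shiftR}(\sigma,p)=\sigma'\,\}\bigr|$ is doubly stochastic: for each $p$ exactly one $\sigma$ is sent to $\sigma'$, so $\sum_\sigma P(\sigma,\sigma') = \tfrac1N\cdot N = 1$. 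A doubly stochastic kernel on a finite set preserves the uniform distribution, so $U(\mathit{Perm}([N]))$ is stationary, and by the uniqueness noted above it must be the stationary distribution $\eta$.

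It then remains to substitute $K = 2N\log N$ into the pairwise bound. Using $(1-1/N)^N \leq e^{-1}$, we get
\[
  v(K,N) \;\leq\; N\left(1-\frac1N\right)^{2N\log N} \;\leq\; N\cdot e^{-2\log N} \;=\; N\cdot N^{-2} \;=\; \frac1N,
\]
so for any initial permutation $\mathit{deck}$,
\[
  TV\bigl(\mathbf{rTop}(\mathit{deck},N,K),\, U(\mathit{Perm}([N]))\bigr)
  \;=\; TV\bigl(X^{\mathit{deck}}_K,\eta\bigr)
  \;\leq\; \max_\sigma TV(X^\sigma_K,\eta)
  \;\leq\; v(K,N)
  \;\leq\; \frac1N \;\in\; \mathcal{O}(1/N).
\]

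Everything here is a direct assembly of results already proved in this section; the one genuinely new point — and hence the main thing to get right — is the verification that the uniform distribution over permutations is the stationary distribution of the random-to-top chain, which I handle via the doubly-stochastic-kernel argument above (equivalently, one may cite the standard fact that any shuffle built from a uniformly random invertible operation has uniform stationary distribution).
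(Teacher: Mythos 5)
Your proof is correct, and the high-level structure (pairwise TV bound $v(K,N)\leq N(1-1/N)^K$, the Markov-chain fact $\max_\sigma TV(X^\sigma_K,\eta)\leq v(K,N)$, and the arithmetic with $K=2N\log N$) matches the paper's intent exactly. The genuine difference is in how you establish that the stationary distribution $\eta$ is $U(\mathit{Perm}([N]))$. The paper at this point simply asserts that the stationary distribution of the random-to-top chain is uniform, and defers a rigorous justification to Section~\ref{sec:extensions}, where uniformity is re-derived \emph{inside} the relational pre-expectation calculus: one takes indicator expectations $S_1,S_2$ for two target permutations, bounds $\bigl|\EE[S_1]-\EE[S_2]\bigr|$ via a tailored coupling and Theorem~\ref{thm:abs-diff}, and shows the difference vanishes. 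Your doubly-stochastic-kernel argument --- each $\mathsf{shiftR}(-,p)$ is a bijection on $\mathit{Perm}([N])$, so the one-step kernel is doubly stochastic, so the uniform distribution is stationary, and uniqueness from $v(k)\to 0$ finishes it --- is a more elementary and fully self-contained route to the same fact. What your approach buys is a clean, classical Markov-chain proof that doesn't lean on the calculus at all for the uniformity step; what the paper's approach buys is a demonstration that the calculus itself can establish uniformity, which is precisely the methodological point Section~\ref{sec:extensions} is making. Both are valid; yours is arguably the cleaner proof of the stated theorem, while the paper's is the one that fits the narrative of the work.
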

%
%
%
\subsection{Random transpositions shuffle}
\label{ex:rand-trans}

Our next shuffle (\textbf{rTrans} in Figure~\ref{fig:shuffling}) repeatedly
selects two positions uniformly at random and swaps the cards, allowing for the
possibility of swapping a card with itself. 
As before, let $d_H$ be the normalized Hamming distance between the two decks.
We aim to bound $\gcp{\mathbf{rTrans}}{d_H}$. As before, the key of the proof is
finding an invariant for the loop. We take:
\[
  \Inv \triangleq \ind{k\sidel \neq k\sider} \cdot \infty
  + \ind{k\sidel = k\sider} \cdot d_H \cdot \left(1 - \frac{1}{N^2}\right)^{K \ominus k\sidel}
\]
%
%
There are two samplings in the loop body,
so we need to provide two couplings.  
For the first sampling $p$, we use the identity coupling. For
the second sampling $p'$, we couple using the bijection induced by the two decks
$\mathit{deck}\sidel$ and $\mathit{deck}\sider$, i.e., the coupling matches
every position $p'\sidel$ with the unique position $p'\sider$ such that
$\mathit{deck}[p']\sidel = \mathit{deck}[p']\sider$. There are three cases:
(1) if cards at $p\sidel, p\sider$ are already matched, $d_H$ remains unchanged;
(2) if positions $p'\sidel, p'\sider$ are equal, $d_H$ remains unchanged; otherwise
(3) $d_H$ decreases by $1$.
%
%
This is enough to show that the invariant decreases. We can conclude:
\[
\begin{array}{l}
  \gcp{\mathbf{rTrans}(\mathit{deck}, N, K)}{d_H}
  \leq \left(1 - \frac{1}{N^2}\right)^K
\end{array}
\]
using the fact that $d_H$ between the inputs is at most $1$. Since $d_H$ takes value of at least $1/N$ for pairs of
distinct decks, by Theorem~\ref{thm:scaled-tv}
\[
v(K,N) = \max_{d_1,d_2 \in [N]}TV(\denot{\mathbf{rTrans}}(d_1,\! N,\! K), \denot{\mathbf{rTrans}}(d_2,\! N,\! K))\! \ \leq \
N \! \left(1\! -\! \frac{1}{N^2}\right)^K~,
\]
so the distance between the deck distribution and the uniform distribution
decreases as $K$ increases. If we take $K \geq N^2 \log(N\rho)$, then the
right-hand side is bounded asymptotically by $O(1/\rho)$ for large $N$.
By setting $\rho = N$, we conclude: 


\begin{theorem}
	Let $K = 2 N^2\log N$, and $\mathit{Perm}([N])$ be the set of permutations over $N$. 
	For any initial permutation of $\mathit{deck}$,
	\[TV(\mathbf{rTrans}(\mathit{deck},N,K), U(\mathit{Perm}([N]))) \in \mathcal{O}(1/N). \] 
\end{theorem}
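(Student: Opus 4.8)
The plan is to obtain this theorem as a short corollary of what has already been established in this subsection, namely the bound $\gcp{\mathbf{rTrans}(\mathit{deck}, N, K)}{d_H} \leq (1 - 1/N^2)^K$ and its consequence $v(K,N) \leq N\,(1 - 1/N^2)^K$ via Theorem~\ref{thm:scaled-tv}, together with the standard Markov-chain facts recalled in \S\ref{sec:ex-shuffle}: if $v(k)\to 0$ then the chain has a unique stationary distribution $\eta$ and $\max_\sigma TV(X^\sigma_k, \eta) \leq v(k)$. So the remaining work is only (i) identifying $\eta$ with the uniform distribution on permutations, and (ii) a one-line asymptotic substitution.

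For (i), I would observe that the random transpositions chain has a symmetric transition matrix: applying the (possibly trivial) transposition drawn on positions $p,p'$ to a deck $\sigma$ to obtain $\sigma'$ is reversed by applying the same transposition to $\sigma'$, which is drawn with the same probability, so $P(\sigma,\sigma') = P(\sigma',\sigma)$. A symmetric transition matrix is doubly stochastic, hence $U(\mathit{Perm}([N]))$ is stationary; irreducibility (transpositions generate the symmetric group) and aperiodicity (swapping a card with itself has positive probability) make it the \emph{unique} stationary distribution. Consequently $\max_\sigma TV(X^\sigma_K, U(\mathit{Perm}([N]))) \leq v(K,N)$, and since each iteration of the loop in $\mathbf{rTrans}$ performs exactly one transposition step, $\mathbf{rTrans}(\mathit{deck},N,K)$ is precisely $X^{\mathit{deck}}_K$.

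For (ii), I would plug $K = 2N^2\log N$ into $v(K,N)\leq N\,(1-1/N^2)^K$ and use $1-x\leq e^{-x}$ with $x = 1/N^2$ to get $(1-1/N^2)^{2N^2\log N}\leq e^{-2\log N} = N^{-2}$, whence $v(K,N)\leq N\cdot N^{-2} = 1/N \in \mathcal{O}(1/N)$. Combining with (i) yields $TV(\mathbf{rTrans}(\mathit{deck},N,K), U(\mathit{Perm}([N]))) \leq v(K,N) \leq 1/N$ for every initial permutation $\mathit{deck}$, which is the claim. The value $K = 2N^2\log N$ is exactly the instance $\rho = N$ of the earlier generic choice $K = N^2\log(N\rho)$ (note $N^2\log(N\cdot N) = 2N^2\log N$), the extra factor of $2$ in the exponent being what absorbs the factor $N$ lost in Theorem~\ref{thm:scaled-tv} because $d_H\geq 1/N$ on distinct decks.

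There is no real obstacle here: the substantive content — the loop invariant $\Inv$, the two couplings for the sampling rule (identity for $p$, the deck-induced bijection for $p'$), the loop rule (Theorem~\ref{thm:loop-sound}), and the passage to TV via Theorem~\ref{thm:scaled-tv} — has already been carried out above. The only points needing a little care are confirming the uniform distribution is the stationary distribution of this particular chain (so that $v(K,N)$ really bounds the distance to $U(\mathit{Perm}([N]))$, not merely the distance between two outputs) and checking that the chosen $K$ indeed makes the $N\,(1-1/N^2)^K$ bound drop to $\mathcal{O}(1/N)$.
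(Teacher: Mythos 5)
Your proof is correct and follows essentially the same route as the paper: the core is the already-established bound $\gcp{\mathbf{rTrans}}{d_H} \leq (1 - 1/N^2)^K$, passage to $v(K,N) \leq N(1 - 1/N^2)^K$ via Theorem~\ref{thm:scaled-tv}, and then the substitution $K = 2N^2\log N$ (which you correctly identify as the $\rho = N$ instance of $K = N^2\log(N\rho)$, since $N^2\log(N\cdot N) = 2N^2\log N$), with $1 - x \leq e^{-x}$ giving $v(K,N) \leq N\cdot N^{-2} = 1/N$. The one place you go beyond the paper is in step (i): the paper in \S\ref{sec:ex-shuffle} simply asserts that $v(k)\to 0$ implies convergence to the uniform stationary distribution, and explicitly acknowledges in \S\ref{sec:extensions} that uniformity of the limit was not actually proved in \S\ref{sec:ex-shuffle} --- it is then verified there only for random-to-top, using the relational pre-expectation calculus and Theorem~\ref{thm:abs-diff} rather than classical Markov-chain theory. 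Your direct argument (the random-transpositions kernel is symmetric, hence doubly stochastic, hence uniform is stationary; irreducibility and aperiodicity give uniqueness) is a perfectly valid, arguably cleaner way to discharge that gap, and it is independent of the paper's calculus-based uniformity proof. Both buy the same thing; yours is more elementary and self-contained, while the paper's \S\ref{sec:extensions} technique has the advantage of staying entirely within the pre-expectation framework.
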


\begin{remark*}
  Aldous'~\cite{aldous1983random} bound is slightly sharper: the TV distance
  between output distributions is bounded by $O(1/N)$ asymptotically already for
  $K \geq CN^2$ for some constant $C$. This discrepancy appears because
  our proofs are carried out compositionally, while Aldous uses a global
  analysis. However, it is possible that a clever choice of coupling or
  loop invariant could let us match Aldous' bound. 
\end{remark*}

\subsection{Uniform riffle shuffle}
\label{ex:riffle}

In this example we will analyze the uniform riffle shuffle, which is a more
realistic model of how cards are shuffled by humans. The shuffle begins by 
dividing the deck in approximately two halves, and then
merges the two halves in an approximately alternating manner. 
The reversed process, program \textbf{riffle} on Figure~\ref{fig:shuffling}
which we analyze, takes a deck, samples a uniform random bit for
each card, and then places all cards labeled with 0 on top of the deck without
altering their relative order.  After repeating this process $k$ times, for
every card $i$ we have sampled a string of bits 
$(b_{i,0}, \dots, b_{i,k-1})$, and card $i$ is on top of card $j$ if, for some $m$,
$b_{i,k} = b_{j,k}, b_{i,k-1} = b_{j,k-1}, \dots, b_{i,m} = b_{j,m}$ and $b_{i,m-1} < b_{j,m-1}$.

The vector $b$ holds $N$ bits, indexed by position; $\bar{b}$ negates each
entry. We use shorthands for partitioning: $deck(b)$ and $deck(\bar{b})$
represent the sub-permutations from taking all positions where $b$ is $0$ and
$1$, respectively. Finally, $\mathsf{cat}$ concatenates two permutations.

We will take the coupling that always samples the same bit for the same card on
both sides: $b(deck^{-1}(c))\sidel = b(deck^{-1}(c))\sider$ for every $c \in C$.
It is not hard to see that this coupling will eventually make the decks match.
However, choosing an appropriate distance takes more care, since the Hamming
distance may not always decrease under this coupling.
For reasons of space, we leave the details of verification to Appendix G. We can show
the following:
\begin{theorem}
	Let $K = 3 \log N$, and $\mathit{Perm}([N])$ be the set of permutations over $N$. 
	For any initial permutation of $\mathit{deck}$,
	\[TV(\mathbf{riffle}(\mathit{deck},N,K), \mathbf{Unif}\{\mathit{Perm}([N])\}) \in \mathcal{O}(1/N). \] 
\end{theorem}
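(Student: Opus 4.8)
The plan is to follow the same recipe used for \textbf{rTop} and \textbf{rTrans}: bound a relational pre-expectation of the loop by an invariant, push it through the initialisation $\Assn{k}{0}$, and convert the resulting bound on $\gcp{\mathbf{riffle}}{\ind{\mathit{deck}\sidel \neq \mathit{deck}\sider}}$ into a bound on Total Variation distance via \cref{thm:sound} and \cref{thm:tv}. The coupling is the one already announced in the text: on both runs, sample the \emph{same} bit for the same card, i.e.\ $b(\mathit{deck}^{-1}(c))\sidel = b(\mathit{deck}^{-1}(c))\sider$ for every card $c$. Under this coupling the two decks become permanently equal once every pair of cards has received different bits in some round, so morally we must bound the probability that some pair of cards is never separated --- a coupon-collector-style quantity.

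The subtle point, as flagged in the text, is that the Hamming distance need not decrease under this coupling. Instead I would take as relational expectation the \emph{number of inverted pairs} between the two decks (Kendall's~$\tau$),
\[
  D \;\triangleq\; \#\bigl\{\, \{c,c'\} \;:\; c \text{ and } c' \text{ occur in opposite relative order in } \mathit{deck}\sidel \text{ and } \mathit{deck}\sider \,\bigr\},
\]
which is a genuine function of the two memories, with $0 \le D \le \binom{N}{2}$. The combinatorial core is the claim that one reversed-riffle step (with the shared-bit coupling) never turns a concordant pair into a discordant one and turns each discordant pair concordant with probability exactly $1/2$ --- namely when the two cards receive different new bits; hence $D$ is non-increasing along a run and $\EE[D_{k+1} \mid \text{state}] = D_k / 2$. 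This motivates the loop invariant
\[
  \Inv \;\triangleq\; \ind{k\sidel \neq k\sider}\cdot\infty \;+\; \ind{k\sidel = k\sider}\cdot D \cdot \left(\frac{1}{2}\right)^{K \ominus k\sidel}.
\]

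For the \textsc{Inv} side condition (\cref{thm:loop-sound}), the synchronised branch $k\sidel = k\sider < K$ asks for $\gcp{\mathit{bd}}{\Inv} \le D \cdot (1/2)^{K - k\sidel}$: using the sampling rule \textsc{Samp} with the shared-bit coupling function, the left-hand side equals $\EE_b[D'] \cdot (1/2)^{K - k\sidel - 1} = (D/2)\cdot(1/2)^{K - k\sidel - 1} = D\cdot(1/2)^{K - k\sidel}$, where $D'$ is the post-reshuffle inversion count, so equality holds; on the exit branch $k\sidel = k\sider \ge K$ we only need $D \ge \ind{\mathit{deck}\sidel \neq \mathit{deck}\sider}$, which holds since distinct permutations always have an inverted pair; the asynchronous branches contribute $\infty$. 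Then \cref{thm:loop-sound} gives $\gcp{\WWhile{k < K}{\mathit{bd}}}{\ind{\mathit{deck}\sidel \neq \mathit{deck}\sider}} \le \Inv$, and pushing past $\Assn{k}{0}$, using $D \le \binom{N}{2}$ on the inputs, yields $\gcp{\mathbf{riffle}(\mathit{deck},N,K)}{\ind{\mathit{deck}\sidel \neq \mathit{deck}\sider}} \le \binom{N}{2} 2^{-K}$. By \cref{thm:sound} this bound is realised by an actual coupling of the two output distributions, so by \cref{thm:tv} we get $\mathit{TV}(\mathbf{riffle}(d_1,N,K), \mathbf{riffle}(d_2,N,K)) \le \binom{N}{2} 2^{-K}$ for all inputs $d_1, d_2$; for $K = 3\log N$ (with $\log$ to base $2$, as the factor $2^{-K}$ indicates) this is at most $1/(2N) = \mathcal{O}(1/N)$. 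Finally, to compare against $\mathbf{Unif}\{\mathit{Perm}([N])\}$ rather than a second run, I would invoke the uniformity argument of \cref{sec:extensions}: the uniform distribution is stationary for the chain underlying $\mathbf{riffle}$, so by joint convexity of $\mathit{TV}$ one has $\mathit{TV}(\mathbf{riffle}(\mathit{deck},N,K), \mathbf{Unif}\{\mathit{Perm}([N])\}) \le \max_{\mathit{deck}'} \mathit{TV}(\mathbf{riffle}(\mathit{deck},N,K), \mathbf{riffle}(\mathit{deck}',N,K))$, and the previous bound applies.

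I expect the main obstacle to be precisely the choice of distance: recognising that Kendall's~$\tau$ (rather than the Hamming distance $d_H$) is the quantity preserved and contracted by the shared-bit coupling, and then proving the one-step monotonicity lemma for the reversed-riffle move. Once that lemma is in place, the loop-invariant computation and the passage to an $\mathcal{O}(1/N)$ bound are routine reruns of the \textbf{rTop}/\textbf{rTrans} arguments; the only remaining care is the bookkeeping of $\ominus$ in the exponents and checking that the loop guards stay synchronised, so that the $\infty$-branches of $\Inv$ are never reached.
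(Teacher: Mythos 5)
Your proposal is correct, but it takes a genuinely different (and arguably cleaner) route than the paper's. The paper's Appendix~G argument works with a block-decomposition distance: it partitions the positions into maximal contiguous blocks preserved by the permutation taking $\mathit{deck}\sidel$ to $\mathit{deck}\sider$, charges each card $|BD(\cdot)(c)| - 1$, and establishes the factor-$\tfrac12$ contraction via a calculation on the second moments of how binomially-chosen bits split each block into two sub-blocks --- this requires several bespoke axioms about filtering, concatenation, and block decompositions. You instead take $D$ to be Kendall's $\tau$ (the number of discordant pairs), for which the one-step lemma is entirely elementary: under the shared-bit coupling, a concordant pair $\{c,c'\}$ stays concordant in every case (if $b_c = b_{c'}$ both relative orders are preserved; if $b_c \neq b_{c'}$ the new order is determined by the bits, identically in both decks), and a discordant pair becomes concordant exactly when $b_c \neq b_{c'}$, which has probability $\tfrac12$; linearity of expectation then gives $\EE_b[D'] = D/2$ with no independence needed and no block bookkeeping. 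Both arguments give the same contraction rate and the same asymptotic $\mathcal{O}(1/N)$ bound (you get $\binom{N}{2}2^{-K}$ versus the paper's $N^2 2^{-K}$, immaterial for $K = 3\log_2 N$); yours additionally sidesteps \cref{thm:scaled-tv} by targeting $\ind{\mathit{deck}\sidel \neq \mathit{deck}\sider}$ directly and using \cref{thm:tv}. One small loose end, shared with the paper's own presentation, is the final passage from $\max_{d_1,d_2} TV$ to distance from $\mathbf{Unif}\{\mathit{Perm}([N])\}$: this needs stationarity of the uniform distribution for the reversed-riffle chain, which neither you nor the appendix re-derives (both defer to the general technique of \cref{sec:extensions}); if you want a self-contained argument you would need to verify that a step of $\mathbf{riffle}$ preserves the uniform measure on permutations.
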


\section{Extensions: Proving Lower Bounds and Uniformity}
\label{sec:extensions}

In this section, we describe two extensions to our random walk examples from
\cref{sec:ex-shuffle}: proving that the limit distribution is uniform, and
proving lower bounds on the TV distance.

\subsection{Convergence to uniform distribution}

In \cref{sec:ex-shuffle}, we showed that the Markov chains correspond
to each example converge to a stationary distribution, but we did not shown that
this distribution is the uniform distribution over states---if we had made an
error in the implementation, the probabilistic program may converge to the wrong
distribution. We can use our relational pre-expectation calculus along with
Theorem~\ref{thm:abs-diff} to show that the limit distribution is indeed
uniform.

We illustrate the technique for the random-to-top shuffle, but the idea is
applicable to all our examples. Consider any two permutations of the deck
$R_1,R_2$, and the unary expectations
\[
  S_1 (\mathit{deck}) \triangleq [\mathit{deck} = R_1]
  \qquad\text{and}\quad
  S_2(\mathit{deck}) \triangleq [\mathit{deck} = R_2] .
\]
To show that the shuffle converges to uniform, we need to show that the expected
values of $S_1$ and $S_2$ converge to the same value. Recall that
Theorem~\ref{thm:abs-diff} states that for any initial states $s_1,s_2$,
\[\bigl| \EE_{\denot{\mathbf{rTop}} s_1}[ S_1 ] - \EE_{\denot{\mathbf{rTop}}
s_2}[ S_2 ] \bigr| \leq \Kant{|S_1 - S_2|}\bigl(\denot{\mathbf{rTop}} s_1,
\denot{\mathbf{rTop}} s_2\bigr)\] so it suffices to show that the right hand
side converges to zero.

Computing the weakest pre-expectation of $|S_1 - S_2|$ directly is difficult, so
we define an alternative distance.  We can see $R_1$ and $R_2$ as defining a
relation (actually, a permutation $\pi$ over $[N]$) of pairs $(R_1[i], R_2[i])$
of cards that are at the same positions. We let $d$ be the distance defined by:
\[
  d(\mathit{deck}\sidel,\mathit{deck}\sider)
  \triangleq \sum_{i=0}^{N-1} [(\mathit{deck}\sidel[i], \mathit{deck}\sider[i]) \not\in \pi]~.
\]
We can show that $|S_1(\mathit{deck}\sidel) - S_2(\mathit{deck}\sider)| \leq
d(\mathit{deck}\sidel,\mathit{deck}\sider)$, since $d$ takes non-negative
integer values, and whenever $d = 0$, then $S_1$ and $S_2$ can only be true
simultaneously. So it suffices to show that the right-hand side converges to
zero. This bound can also be established by our pre-expectation calculus in much
the same way as in our proof for the random-to-top shuffle, but we use a
different coupling. After sampling $p\sidel$ on the first execution we just need
to pick the $p\sider$ on the second such that $(\mathit{deck}\sidel[p\sidel],
\mathit{deck}\sider[p\sider]) \in \pi$. This makes $d$ decrease any time a new
match is formed, and once a match is formed and moved to the top, it is never
undone. By starting from the same permutation $\mathit{deck}\sidel =
\mathit{deck}\sider$, this analysis shows that the rate of convergence---this
time to the \emph{uniform} distribution---is the same as in our previous
analysis of random-to-top: $d$ converges to 0 at rate $(1 - 1/N)^K$.

\subsection{Proving lower bounds}

\gb{we should mention the kantorovich duality theorem}
Previously, we verified upper bounds of the Total Variation distance by using
the Kantorovich distance. It is also interesting to compute \emph{lower bounds}
on the TV distance, describing how far apart the distributions must be.  We
consider how to verify these bounds using the $\wpc$ calculus of McIver and
Morgan~\citep{Morgan96}, summarized in Figure~\ref{fig:wpe-rules}.  We will need
an alternative definition of the TV distance expressed in terms of expected
values rather than sets:%
\begin{proposition}
\label{prop:TV-expected-value-def}
	Let $\mu_1, \mu_2 \in \Dist(X)$. Then,
	\[ \sup_{f \colon X \to [0,1]} |\EE_{\mu_1}[f] - \EE_{\mu_2}[f]| = \sup_{S\subseteq X} |\mu_1(S) - \mu_2(S)| = TV(\mu_1, \mu_2)~.\]
\end{proposition}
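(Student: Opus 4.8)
The plan is to prove both equalities at once by closing the chain
\[
  \sup_{f \colon X \to [0,1]}\bigl|\EE_{\mu_1}[f] - \EE_{\mu_2}[f]\bigr|
  \;\ge\; \sup_{S \subseteq X}\bigl|\mu_1(S) - \mu_2(S)\bigr|
  \;=\; \mathit{TV}(\mu_1,\mu_2)
  \;\ge\; \sup_{f \colon X \to [0,1]}\bigl|\EE_{\mu_1}[f] - \EE_{\mu_2}[f]\bigr|\,,
\]
so that all three quantities coincide. (As in the other theorems involving $\mathit{TV}$, the middle equality presumes $|\mu_1| = |\mu_2|$, which holds in every application here, where the $\mu_i$ are genuine distributions.)

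For the middle equality I would exhibit an explicit optimal event. Put $S^{\star} \triangleq \{\, x \in X \mid \mu_1(x) \ge \mu_2(x) \,\}$. For an arbitrary $S \subseteq X$, writing $\mu_1(S) - \mu_2(S) = \sum_{x \in S}\bigl(\mu_1(x) - \mu_2(x)\bigr)$ and discarding the nonpositive summands gives $\mu_1(S) - \mu_2(S) \le \mu_1(S^{\star}) - \mu_2(S^{\star})$; symmetrically $\mu_2(S) - \mu_1(S) \le \mu_2(X \setminus S^{\star}) - \mu_1(X \setminus S^{\star})$. Since $\sum_{x \in X}\bigl(\mu_1(x) - \mu_2(x)\bigr) = |\mu_1| - |\mu_2| = 0$, the two right-hand sides agree, and each equals $\tfrac{1}{2} \sum_{x \in X}\bigl|\mu_1(x) - \mu_2(x)\bigr| = \mathit{TV}(\mu_1,\mu_2)$. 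Hence $\sup_{S \subseteq X}|\mu_1(S) - \mu_2(S)| = \mathit{TV}(\mu_1,\mu_2)$, the supremum being attained at $S^{\star}$.

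The first inequality of the chain is immediate: each indicator $\ind{x \in S}$ is an admissible test function $X \to \{0,1\} \subseteq [0,1]$ with $\EE_{\mu_i}[\ind{x \in S}] = \mu_i(S)$. For the last inequality I would use the layer-cake representation $f(x) = \int_0^1 \ind{f(x) > t}\,dt$, valid since $f(x) \in [0,1]$. As all summands and integrands are nonnegative, Tonelli's theorem permits swapping the countable sum over $X$ with the integral over $[0,1]$, giving $\EE_{\mu_i}[f] = \int_0^1 \mu_i(S_t)\,dt$ where $S_t \triangleq \{\, x \in X \mid f(x) > t \,\}$. Subtracting the cases $i = 1,2$ and estimating under the integral yields $\bigl|\EE_{\mu_1}[f] - \EE_{\mu_2}[f]\bigr| \le \int_0^1 \bigl|\mu_1(S_t) - \mu_2(S_t)\bigr|\,dt \le \sup_{S \subseteq X}|\mu_1(S) - \mu_2(S)|$, and taking the supremum over $f$ closes the chain.

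The argument is elementary, so there is no substantial obstacle; the step requiring the most care is the last inequality, where one must justify the sum/integral interchange (this is exactly what Tonelli's theorem delivers, everything in sight being nonnegative) and keep track of the equal-mass hypothesis that makes $\sup_{S}|\mu_1(S) - \mu_2(S)|$ match $\mathit{TV}$ as it is defined in the paper.
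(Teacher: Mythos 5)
Your proof is correct and is the standard textbook argument; the paper itself states this proposition without giving a proof, so there is nothing to compare against except correctness. All three steps are sound: the indicator bound gives $\sup_f \geq \sup_S$, the explicit optimal event $S^{\star} = \{x : \mu_1(x) \geq \mu_2(x)\}$ gives $\sup_S = TV$, and the layer-cake/Tonelli argument gives $\sup_f \leq \sup_S$, closing the chain. You are also right to flag that the middle equality needs $|\mu_1| = |\mu_2|$ (for sub-distributions of unequal mass, e.g.\ $\mu_1 = \tfrac12\delta_a$ and $\mu_2 = 0$, one gets $\sup_S|\mu_1(S)-\mu_2(S)| = \tfrac12$ but $TV = \tfrac14$), a hypothesis the paper's statement leaves implicit but that holds in every application made of it.
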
%
%
\noindent{}%
%
%
Thus, it suffices to pick \emph{any} $f \colon X \to [0,1]$ and compute its
expected values wr.t.\ $\mu_1$ and $\mu_2$ 
to get \emph{a} lower bound on $TV(\mu_1, \mu_2)$.  This
is performed with the $\wpc$ operator, which takes a program $c$
and an \emph{expectation} $\mathcal{F} \colon \Mem \to [0,\infty]$ and
computes:%
\[ \wpc(c, \mathcal{F}) = \lambda s. \EE_{x \sim \denot{c} s}[\mathcal{F}(c)]\]
\begin{figure*}[t]
  \scalebox{0.85}{\parbox{\linewidth}{%
	\begin{align*}
  \wpe{\Skip}{\Exp} &\triangleq \Exp \\
  \wpe{\Assn{x}{e}}{\Exp} &\triangleq \Exp \{ e / x \} \\
  \wpe{\Rand{x}{d}}{\Exp} &\triangleq \lambda s. \EE_{x\sim d}[\Exp \{e / x \} ] \\
  \wpe{c; c'}{\Exp} &\triangleq \wpe{c}{\wpe{c'}{\Exp}} \\
  \wpe{\Cond{e}{c}{c'}}{\Exp} &\triangleq 
  	\ind{e} \cdot \wpe{c}{\Exp}
  	+ \ind{\neg e} \cdot \wpe{c'}{\Exp} \\
  \wpe{\WWhile{e}{c}}{\Exp} &\triangleq {\sf lfp} X.
 		 \ind{e} \cdot \wpe{c}{X}
		  + \ind{\neg e} \cdot \Exp
\end{align*}
}}
\caption{Definition of the weakest pre-expectation operator $\wpe{c}{\Exp}$}
\label{fig:wpe-rules}
\end{figure*}%
Computing lower bounds using $\wpc$ poses some technical challenges. First, we
need to find some expectation $f$ that will achieve a lower bound that is as
large as possible. Second, as we are computing a difference of two expected
values, we need to be able to compute exact pre-expectations---standard
invariant rules for $\wpc$ only produce \emph{upper bounds} on the expectation.
Overcoming the first problem requires ingenuity, but the second problem can be
addressed by using a technical result of \citet{KaminskiKMO16}. We start by
defining upper and lower invariants.

\begin{definition}
  A family of unary expectations $\{I_n\}_{n\in\mathbb{N}}$ is an upper
  $\omega$-invariant of the loop $\WWhile{b}{c}$ with respect to the expectation
  $\mathcal{F}$ if
  \[
		[\neg b] \cdot \mathcal{F} \leq I_0
    \qquad\text{and}\qquad
		[b] \cdot \wpc(c, I_n) + [\neg b] \cdot \mathcal{F} \leq I_{n+1}~.
  \]
  The definition of lower $\omega$-invariant is analogous, reversing the two
  inequalities.
\end{definition}%
\noindent{}These invariants can be used to compute exact weakest
pre-expectations of loops.%
\begin{theorem}[{\citet[Theorem 5]{KaminskiKMO16}}]
	Let $I_n$ and $J_n$ be an upper and a lower $\omega$-invariant of
	$\WWhile{b}{c}$ with respect to $\mathcal{F}$, respectively. If the limits $\lim_{n\to\infty} I_n$
	and $\lim_{n\to\infty} J_n$ exist, then
	\[ \lim_{n\to\infty} J_n \leq \wpc(\WWhile{b}{c}, \mathcal{F}) \leq \lim_{n\to\infty} I_n~. \]%
When the limits coincide, the weakest pre-expectation is determined exactly.
\end{theorem}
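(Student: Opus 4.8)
As this theorem is quoted from \citet{KaminskiKMO16}, I reconstruct the argument behind it. Write $\Phi$ for the characteristic functional of the loop on the $\wpc$ side, $\Phi(X) \triangleq \ind{b}\cdot\wpc(c,X) + \ind{\neg b}\cdot\mathcal{F}$, so that by definition $\wpc(\WWhile{b}{c},\mathcal{F}) = {\sf lfp}\,X.\,\Phi(X)$. Two standard facts drive the proof. First, $\Phi$ is monotone on the lattice of unary expectations $\Mem\to\RRe$, since $\wpc(c,-)$ is monotone and multiplication by a fixed indicator and addition are monotone. Second, $\wpc(c,-)$ is $\omega$-continuous (a routine structural induction on $c$, with nested loops handled by the same Kleene iteration), hence so is $\Phi$; by Kleene's fixed-point theorem $\wpc(\WWhile{b}{c},\mathcal{F}) = \sup_{n\in\NN}\Phi^n(0)$, and since $(\Phi^n(0))_n$ is an ascending chain this supremum equals $\lim_{n\to\infty}\Phi^n(0)$. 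Also note $\Phi^1(0) = \ind{\neg b}\cdot\mathcal{F}$, using $\wpc(c,0)=0$.

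The plan is then to sandwich the Kleene iterates between the two $\omega$-invariants by induction on $n$. For the upper bound I would show $\Phi^{n+1}(0) \le I_n$: the base case $\Phi^1(0) = \ind{\neg b}\cdot\mathcal{F} \le I_0$ is the first clause of the upper $\omega$-invariant, and the inductive step uses monotonicity of $\Phi$ together with the recursive clause, $\Phi^{n+2}(0) = \Phi(\Phi^{n+1}(0)) \le \Phi(I_n) = \ind{b}\cdot\wpc(c,I_n)+\ind{\neg b}\cdot\mathcal{F} \le I_{n+1}$. Symmetrically, for the lower bound I would show $J_n \le \Phi^{n+1}(0)$: the base case is the reversed clause $J_0 \le \ind{\neg b}\cdot\mathcal{F} = \Phi^1(0)$, and the step uses monotonicity of $\wpc(c,-)$, $J_{n+1} \le \ind{b}\cdot\wpc(c,J_n)+\ind{\neg b}\cdot\mathcal{F} \le \ind{b}\cdot\wpc(c,\Phi^{n+1}(0))+\ind{\neg b}\cdot\mathcal{F} = \Phi^{n+2}(0)$. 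Together these give $J_n \le \Phi^{n+1}(0) \le I_n$ for every $n$.

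Finally, passing to the limit: $(\Phi^{n+1}(0))_n$ is ascending, so its limit is $\sup_m\Phi^m(0) = \wpc(\WWhile{b}{c},\mathcal{F})$; taking limits in $J_n \le \Phi^{n+1}(0) \le I_n$ (using that $\lim_n I_n$ and $\lim_n J_n$ are assumed to exist) yields $\lim_{n\to\infty}J_n \le \wpc(\WWhile{b}{c},\mathcal{F}) \le \lim_{n\to\infty}I_n$, and if the two limits agree the inequalities collapse to equality. The main obstacle here is conceptual rather than computational: a least fixed point cannot in general be lower-bounded by a single Park-style prefixed point, so the real content is the realisation that a whole family $\{J_n\}$ approximating the Kleene iteration from below suffices; the rest is bookkeeping about suprema versus limits in $\RRe$ --- where the hypothesis that the two limits exist is genuinely needed to rule out a $\liminf$/$\limsup$ gap --- together with the $\omega$-continuity of $\wpc$ invoked for Kleene's theorem.
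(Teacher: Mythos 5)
The paper does not prove this theorem itself---it is quoted as an external result from \citet{KaminskiKMO16}, with no proof in the body or appendices. Your reconstruction is therefore being checked against the original source's argument rather than anything in this paper.

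Your argument is correct and is essentially the proof from the cited source. The structure is right: characterize $\wpc(\WWhile{b}{c},\mathcal{F})$ as the supremum of the Kleene chain $\Phi^n(0)$ via $\omega$-continuity, then sandwich $J_n \le \Phi^{n+1}(0) \le I_n$ by induction on $n$ (base case using $\Phi^1(0)=[\neg b]\cdot\mathcal{F}$ since $\wpc(c,0)=0$, step using monotonicity of $\Phi$ together with the recursive clause of the $\omega$-invariant definition), and pass to the limit. Your remark at the end is also the right conceptual point: the least fixed point can be upper-bounded by a single Park-style prefixed point, but lower bounds require approximating the Kleene chain itself term-by-term, which is exactly what the family $\{J_n\}$ provides. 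The only thing I would add for full rigor is that the off-by-one shift ($\Phi^{n+1}(0)$ versus $I_n$, $J_n$) must be maintained consistently, which you do, and that the existence hypothesis on the limits is indeed only needed to make the statement about $\lim I_n$ and $\lim J_n$ meaningful---the chain $\Phi^n(0)$ always has a limit by monotonicity.
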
%
\noindent{}We illustrate our technique on the example from Section~\ref{ex:hypercube}.  Let $w_H$
be the expression denoting the normalized Hamming weight of a vector, i.e.  $w_H
\triangleq (1/N) \sum_{i=0}^N \mathit{pos}[i]$.
We can use the expected value of $w_H$ to compute a lower bound for the $TV$
distance between two runs of $\mathbf{hWalk}$.
We will compute this expected value by using the usual (unary) pre-expectation
calculus.
To compute \emph{exact} weakest preconditions, we need to find upper and lower
invariants.
We consider the following $\omega$-invariant $\cI_n \triangleq [K - n \leq k ](
\mathcal{B}_k + \mathcal{A}_k \cdot w_H)$, where
\[
\mathcal{A}_k \triangleq \left( \frac{N-1}{N+1} \right)^{K \ominus k}
\quad\text{and}\quad
\mathcal{B}_k \triangleq \dfrac{1}{N+1} \cdot \sum_{i = 0}^{K-k-1} \left (\dfrac{N-1}{N+1} \right)^i~.
\]
We can check that $\cI_n$ is both an upper and a lower $\omega$-invariant,
therefore the weakest pre-expectation for the loop is exactly $\lim_{n\to\infty}
\cI_n = \mathcal{B}_k + \mathcal{A}_k \cdot w_H$.
After computing its pre-expectation with respect to the first assignment, we get:
\[
  \wpc(\mathbf{hWalk}, w_H) = \mathcal{B}_0 + \mathcal{A}_0 \cdot w_H
\]
%
%
Now, let $W(p) \triangleq (1/N) \sum_{i=0}^N p[i]$, and let $s_1, s_2$ be any
two initial states such that $s_1(N) = s_2(N)$ and $s_1(K) = s_2(K)$. By
\cref{prop:TV-expected-value-def}, we can lower bound the TV distance as
follows:
\begin{align}
\label{eq:lower-bound}
TV(\denot{\mathbf{hWalk}}(\mathit{s_1(\mathit{pos})},\!N,\!K), \denot{\mathbf{hWalk}}(\mathit{s_2(\mathit{pos})},\!N,\!K))
  &\geq |\wpc(\mathbf{hWalk}, w_H)(s_1) - \wpc(\mathbf{hWalk}, w_H)(s_2)|
  \notag \\
  &= \left(\dfrac{N\!-\!1}{N\!+\!1}\right)^{K} \!\!\!\!\!\cdot |W(s_1(\mathit{pos})) \!-\! W(s_2(\mathit{pos}))|~.
  \notag
\end{align}
By selecting the initial positions appropriately---essentially, picking
worst-case inputs---we can derive useful lower bounds on the Total Variation
distance between output distributions. For instance, taking $s_1(\mathit{pos})$
to be the all-zeros vector and $s_2(\mathit{pos})$ to be the all-ones vector
gives:
\[
TV(\denot{\mathbf{hWalk}}(\mathit{s_1(\mathit{pos})},\!N,\!K), \denot{\mathbf{hWalk}}(\mathit{s_2(\mathit{pos})},\!N,\!K))
\ \geq \ \left(\dfrac{N\!-\!1}{N\!+\!1}\right)^{K}~.
\]
Using standard algebraic bounds, the right-hand side (and the TV distance
between the two output distributions) is at least $\rho > 0$ when $K < (N/2)
\log \rho$.


Verifying precise lower bounds is highly challenging---for many simple examples
of randomized processes, exact lower bounds are not known. Nonetheless, efforts
in this direction could provide useful, complementary information when analyzing
probabilistic programs.

\section{Extensions: Rules for Asynchronous Reasoning}
\label{sec:async}

Our relational pre-expectation operator $\gcp{c}{\Exp}$ can often derive useful
upper bounds on the Kantorovich distance $\sgcp{c}{\Exp}$, but it gives a
trivial bound of infinity when the program $c$ can take different branches on
the two inputs. In this section, we develop techniques to give more useful
bounds in the asynchronous case.

\subsection{Asynchronous rules for bounding the Kantorovich distance}

Our asynchronous bounds will use one-sided relational operators $\wpel{c}{\Exp}$
(resp.  $\wper{c}{\Exp}$) that transform relational expectations by holding the
left (right) state constant and then computing the unary weakest pre-expectation
$\wpe{c}{\Exp}$. We use the following soundness lemma for the left version of
the operator, the one for the right version being analogous.

\begin{lem} \label{lem:wpe:sound}
Let $c$ be a \pwhile\ program that is almost surely terminating, i.e.,
$\wpe{c}{1} = 1$. Then, for all $s_1,s_2$, $\EE_{s_1' \sim
\denot{c}s_1}[\Exp(s_1', s_2)] \leq \wpel{c}{\Exp}(s_1, s_2)$.
\end{lem}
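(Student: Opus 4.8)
The plan is to reduce this relational statement to the soundness of the \emph{unary} weakest pre-expectation calculus. Fix the right-hand state $s_2 \in \Mem$ and form the section $\mathcal{F}_{s_2} \triangleq \lambda s.\, \Exp(s, s_2)$, an ordinary unary expectation $\Mem \to \RRe$. By construction the one-sided operator is nothing but the unary transformer applied in the left coordinate, i.e.\ $\wpel{c}{\Exp}(s_1, s_2) = \wpe{c}{\mathcal{F}_{s_2}}(s_1)$, so the claim becomes exactly $\EE_{s_1' \sim \denot{c}s_1}[\mathcal{F}_{s_2}(s_1')] \leq \wpe{c}{\mathcal{F}_{s_2}}(s_1)$. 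This is the soundness direction of the standard characterisation $\wpe{c}{\mathcal{F}} = \lambda s.\, \EE_{x \sim \denot{c}s}[\mathcal{F}(x)]$ recalled just above Figure~\ref{fig:wpe-rules}, so granting that characterisation the proof is immediate (in fact with equality).

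To keep the development self-contained I would also re-derive that characterisation from the inductive clauses of Figure~\ref{fig:wpe-rules}, by induction on the structure of $c$. The cases $\Skip$, assignment, random sampling, sequencing, and conditionals are routine: each one unfolds the denotational semantics against the corresponding defining clause and uses linearity of $\EE$ together with the conventions on $0 \cdot \infty$. The one delicate case, and the step I expect to be the only real obstacle, is the $\WWhile{e}{c'}$ loop. There $\wpe{\WWhile{e}{c'}}{\mathcal{F}}$ is $\mathsf{lfp}\,\Psi$ for $\Psi(X) = [e]\cdot \wpe{c'}{X} + [\neg e]\cdot\mathcal{F}$, whereas $\denot{\WWhile{e}{c'}}s$ is the supremum of the sub-distributions produced by the finite loop unrollings. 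I would first show by an inner induction that the $n$-th Kleene iterate $\Psi^n(0)$ agrees with $\lambda s.\, \EE[\mathcal{F}]$ taken over the $n$-th unrolling, and then pass to the limit: on the transformer side this uses $\omega$-continuity of $\Psi$ (so $\mathsf{lfp}\,\Psi = \sup_n \Psi^n(0)$), and on the semantic side it uses the monotone convergence theorem to pull the supremum over the increasing chain of sub-distributions out of $\EE_{(-)}[\mathcal{F}]$, which is legitimate precisely because $\mathcal{F} \geq 0$.

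The almost-sure-termination hypothesis $\wpe{c}{1} = 1$ is what guarantees that $\denot{c}s_1$ is a genuine distribution rather than a strict sub-distribution; this is what makes Lemma~\ref{lem:wpe:sound} usable inside the asynchronous coupling rules of this section, where a full left marginal is needed to build couplings on the synchronized part. For the displayed inequality on its own the hypothesis is inessential, since the characterisation above holds for arbitrary $c$; I would record this as a short remark.
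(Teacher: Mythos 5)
The paper leaves Lemma~\ref{lem:wpe:sound} without an explicit proof; it is invoked in the proof of Theorem~\ref{thm:async} as though it were immediate from the characterisation $\wpc(c, \mathcal{F}) = \lambda s.\,\EE_{x \sim \denot{c}s}[\mathcal{F}(x)]$ stated just before Figure~\ref{fig:wpe-rules}, so there is no paper proof to compare against directly. Your argument makes that implicit reduction explicit and is correct: holding $s_2$ fixed, applying the unary characterisation to the section $\mathcal{F}_{s_2} = \lambda s.\,\Exp(s, s_2)$, and unfolding the definition of $\wpel{c}{\Exp}$ yields the stated bound, in fact with equality. Your self-contained rederivation of the characterisation is also sound; the loop case via Kleene iterates, $\omega$-continuity of the characteristic functional, and monotone convergence (the latter using $\mathcal{F} \geq 0$) is the standard argument from the McIver--Morgan / Kaminski et al.\ line of work the paper cites. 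Your observation about the almost-sure-termination hypothesis is accurate and worth keeping: the displayed inequality holds for arbitrary $c$, but the hypothesis is what guarantees $|\denot{c}s_1| = 1$ when the lemma is invoked inside Theorem~\ref{thm:async}, so that $\Gamma(\denot{c}s_1, \dunit{s_2})$ is nonempty and the product coupling realizing the bound actually exists.
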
%
\noindent{}%
Now we can present our asynchronous rules.
\begin{theorem} \label{thm:async}
Let $c$ be a program that is almost surely terminating. Then:
\begin{align*}
  \sgcp{\Cond{e}{c}}{\Exp} &\leq \ind{e\sidel \land e\sider} \cdot \gcp{c}{\Exp}
  + \ind{e\sidel \land \neg e\sider}\cdot \wpel{c}{\Exp} \\
                            &+ \ind{\neg e\sidel \land e\sider}\cdot \wper{c}{\Exp} 
                            + \ind{\neg e\sidel \land \neg e\sider}\cdot \Exp
\end{align*}
Let $\WWhile{e}{c}$ be an almost surely terminating loop, $\rho_i(s)$ be the
probability that the loop does not terminate after executing the body at most
$i$ times starting from state $s$, and:
\[
  M_i(\Exp, s_1, s_2) = \max \{ \Exp(t_1, t_2) \mid t_1 \in
  \supp(\denot{c_i}s_1), t_2 \in \supp(\denot{c_i}s_2) \}
\]
where $c_i$ is the first $i$ iterations of the loop. If $\rho_i$ and $M_i$
satisfy:
\[
  \lim_{i \to \infty} (\rho_i(s_1) + \rho_i(s_2)) \cdot M_i(\Exp, s_1, s_2) = 0
\]
for any two states $(s_1, s_2)$, and if $\Inv$ is an invariant satisfying
\begin{align*}
  &\ind{e\sidel \land e\sider} \cdot \gcp{c}{\Inv}
  + \ind{e\sidel \land \neg e\sider}\cdot \wpel{c}{\Inv} \\
  &+ \ind{\neg e\sidel \land e\sider}\cdot \wper{c}{\Inv} 
  + \ind{\neg e\sidel \land \neg e\sider}\cdot \Exp \leq \Inv~,
\end{align*}
then $\sgcp{\WWhile{e}{c}}{\Exp} \leq \Inv$.
\end{theorem}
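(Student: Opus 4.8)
The plan is to prove both inequalities pointwise: for every pair of input states $(s_1,s_2)$ I will exhibit a coupling of the two output distributions whose expectation of $\Exp$ is at most the claimed right-hand side; since $\sgcp{c}{\Exp}(s_1,s_2)=\Kant{\Exp}(\denot{c}s_1,\denot{c}s_2)$ is an infimum over $\Gamma(\denot{c}s_1,\denot{c}s_2)$, this suffices. For the one-armed conditional $\Condt{e}{c}$ with $c$ almost surely terminating, split on the truth values of $e$ at $s_1$ and $s_2$. If both guards hold then $\denot{\Condt{e}{c}}s_i=\denot{c}s_i$ and the bound $\Kant{\Exp}(\denot{c}s_1,\denot{c}s_2)\le\gcp{c}{\Exp}(s_1,s_2)$ is exactly soundness of $\gcpsymbol$ (Theorem~\ref{thm:sound}). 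If neither holds, both sides skip and the Dirac coupling at $(s_1,s_2)$ realizes $\Exp(s_1,s_2)$ exactly. If exactly one guard holds, say $e\sidel$ but not $e\sider$, then the outputs are $\denot{c}s_1$ and $\dunit{s_2}$; since $c$ is a.s.\ terminating these have equal unit mass, so the only coupling pairs each $t_1$ drawn from $\denot{c}s_1$ with $s_2$, with expectation $\EE_{t_1\sim\denot{c}s_1}[\Exp(t_1,s_2)]\le\wpel{c}{\Exp}(s_1,s_2)$ by Lemma~\ref{lem:wpe:sound}; the mirror case uses $\wper{c}{\Exp}$. Summing the four disjointly guarded contributions gives the conditional inequality.

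For the loop $W=\WWhile{e}{c}$ I would pass through the finite truncations $c_i$, where $c_i$ runs the body at most $i$ times and then leaves the state unchanged; concretely $c_0=\Skip$ and $c_{i+1}=\Condt{e}{c;c_i}$. Because $c$ is a.s.\ terminating each $\denot{c_i}s$ is a full distribution, and because $W$ is a.s.\ terminating $\denot{c_i}s\to\denot{W}s$ pointwise with $\denot{W}s$ a full distribution supported on $\neg e$-states. The heart of the proof is an induction on $i$ showing that for all $(s_1,s_2)$ there is a coupling $\nu_i\in\Gamma(\denot{c_i}s_1,\denot{c_i}s_2)$ with
\[
  \EE_{(t_1,t_2)\sim\nu_i}\bigl[\ind{\neg e\sidel\land\neg e\sider}\cdot\Exp(t_1,t_2)+\ind{e\sidel\lor e\sider}\cdot\Inv(t_1,t_2)\bigr]\ \le\ \Inv(s_1,s_2).
\]

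The base case $i=0$ is the Dirac coupling together with the $\neg e\sidel\land\neg e\sider$ clause $\Exp\le\Inv$ of the hypothesis. For the step, unfold $c_{i+1}=\Condt{e}{c;c_i}$ and split on the guards at $(s_1,s_2)$: if both hold, compose the coupling $\mu\in\Gamma(\denot{c}s_1,\denot{c}s_2)$ witnessing $\gcp{c}{\Inv}(s_1,s_2)$ (Theorem~\ref{thm:sound}) with the inductively obtained couplings of the $c_i$-continuations, so the new coupling's expectation of the displayed integrand is at most $\EE_{\mu}[\Inv]\le\gcp{c}{\Inv}(s_1,s_2)\le\Inv(s_1,s_2)$ by the $e\sidel\land e\sider$ clause; if exactly one guard holds, advance only that side by $c$ and invoke Lemma~\ref{lem:wpe:sound} with the corresponding $\wpel{c}{\Inv}\le\Inv$ or $\wper{c}{\Inv}\le\Inv$ clause; if neither holds, reduce to the base case. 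Whenever the relevant right-hand side of the hypothesis is $\infty$, the hypothesis forces $\Inv(s_1,s_2)=\infty$ and the bound is vacuous. (The marginals of the composed coupling are $\denot{c_{i+1}}s_1$ and $\denot{c_{i+1}}s_2$ by a routine computation.)

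It remains to take $i\to\infty$. From the inductive bound, $\EE_{\nu_i}[\Exp]\le\Inv(s_1,s_2)+M_i(\Exp,s_1,s_2)\cdot(\rho_i(s_1)+\rho_i(s_2))$, because the part of $\EE_{\nu_i}[\Exp]$ coming from the event $e\sidel\lor e\sider$ sits on a set of $\nu_i$-mass at most $\rho_i(s_1)+\rho_i(s_2)$ (union bound over the marginals $\denot{c_i}s_1,\denot{c_i}s_2$), on which $\Exp\le M_i(\Exp,s_1,s_2)$. The family $\{\nu_i\}$ is tight since its marginals converge to the full distributions $\denot{W}s_1,\denot{W}s_2$, so --- exactly as in the proof of Theorem~\ref{thm:sound} --- a subsequence converges to a coupling $\nu\in\Gamma(\denot{W}s_1,\denot{W}s_2)$; lower semicontinuity of $\mu\mapsto\EE_{\mu}[\Exp]$ (valid as $\Exp\ge0$) together with the hypothesis $\lim_i(\rho_i(s_1)+\rho_i(s_2))\cdot M_i(\Exp,s_1,s_2)=0$ then yields $\EE_{\nu}[\Exp]\le\Inv(s_1,s_2)$, hence $\sgcp{\WWhile{e}{c}}{\Exp}(s_1,s_2)\le\Inv(s_1,s_2)$. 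The main obstacle is precisely this limit: one must simultaneously control the not-yet-terminated mass of the truncations --- the role of the $\rho_i,M_i$ hypothesis --- and argue that the truncation couplings accumulate to a genuine coupling of the loop's outputs, which requires the optimal-transport compactness already used for Theorem~\ref{thm:sound} and explains the almost-sure-termination (and, implicitly, discreteness) restrictions.
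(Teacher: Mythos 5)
Your proof is correct and follows essentially the same route as the paper: case analysis on the guards for the conditional rule (using Theorem~\ref{thm:sound} and Lemma~\ref{lem:wpe:sound}), and for the loop an induction over the finite truncations $c_i$, composing couplings, controlling the unterminated mass by $\rho_i\cdot M_i$, and passing to the limit via coupling compactness (Theorem~\ref{thm:conv-couplings}) and Fatou. The only organizational difference is that you carry $\Inv$ directly through the coupling-existence induction, whereas the paper first proves the bound in terms of the iterates $\Exp_j = \Psi^j_{\Exp,c}(\Exp_0)$ of the asynchronous characteristic functional and only at the end invokes Park induction to conclude $\mathcal{L}_{\Exp,c}\le\Inv$ --- you have in effect inlined that Park-induction step.
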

\begin{proof}[Proof Sketch]
  The soundness of the conditional rule follows a similar argument as soundness
  for the definition of $\gcpsymbol$ for conditionals, using
  \cref{lem:wpe:sound} for the asynchronous cases. The soundness of the loop
  rule is more intricate, but it follows the same strategy as
  in~\cref{thm:loop-sound}: we define a loop characteristic function based on
  the conditional rule (now asynchronous), show that the least fixed-point lies
  above $\mathit{rpe}$, and finally show that the invariant rule implies that
  $\Inv$ is a pre-fixed-point, so it must be above the fixed point.
\end{proof}

We detail the full proof in \cref{sec:async-extra}.

\subsection{Example: Bounding the distance between binomial distributions}

Consider the following program, which simulates a binomial distribution:
\[
  \begin{array}{l}
    \mathbf{binom}(N)\\
    \quad \Assn{n}{0}; \\
    \quad \Assn{k}{0}; \\
    \quad \WWhile{n<N} \\
    \qquad \Rand{b}{\mathbf{Bern}(p)}; \\
    \qquad \Condt{b}{\Assn{k}{k+1}}; \\
    \qquad \Assn{n}{n+1};
  \end{array}
\]
We treat $p \in [0, 1]$ as a fixed constant. We will compare the distribution
on the output $k$ starting from two inputs. Since the loops will run for
different numbers of iterations if $N\sidel \neq N\sider$, we will employ our
asynchronous rule. We take the following invariant:
\[
  \Inv \triangleq |\, k\sidel - k\sider + p\cdot(N\sidel \ominus n\sidel) - p\cdot(N\sider \ominus n\sider) \, |~,
\]
We will show the following invariant bound:
\begin{align*}
  &\ind{(n < )\sidel \land (n < N)\sider} \cdot \gcp{c}{\Inv}
  + \ind{(n < N)\sidel \land (n \geq N)\sider}\cdot \wpel{c}{\Inv} \\
  &+ \ind{(n \geq N)\sidel \land (n < N)\sider}\cdot \wper{c}{\Inv} 
  + \ind{(n \geq N)\sidel \land (n \geq N)\sider}\cdot \Exp \leq \Inv~.
\end{align*}
In the synchronous case, we can establish the invariant by applying
\textsc{Samp} with the identity coupling; the inner conditional can also be
analyzed synchronously. In the asynchronous case, computing the unary weakest
pre-expectation establishes the invariant. Thus, the asynchronous loop rule
(\cref{thm:async}) gives:
\[
  \sgcp{w}{| \, k\sidel - k\sider \, |} \leq \Inv
\]
where $w$ is the loop. Applying the assignment rule, we conclude:
\[
  \sgcp{\mathbf{binom}(N)}{| \, k\sidel - k\sider \, |}
  \leq p \cdot | \, N\sidel - N\sider \, | .
\]
By \cref{thm:abs-diff}, this bound implies that the expected values of the
output $k$ differ by at most $p \cdot |N\sidel - N\sider|$ across the two runs.

\section{Related work}
\label{sec:rw}

\textit{Proving expected sensitivity of probabilistic programs.}  We have shown
that the quantitative logic \eprhl~\citep{BartheEGHS18} can be embedded into the
framework of this paper (cf.\ Section~\ref{sec:embedding_eprhl}), so we focus on
other work.  \citet{wang2019proving} propose an alternative method based on
martingales for proving the expected sensitivity of probabilistic programs.
Their technique focuses on computing the expected sensitivity when the
(expected) number of iterations for a loop may be different across two related
executions (i.e., loops may be \emph{asynchronous}); this is similar to our
asynchronous rules from \cref{sec:async}.
However, \citet{wang2019proving} also frame their target property in a slightly weaker way, showing that programs are Lipschitz continuous for \emph{some} finite Lipschitz constant. 
In contrast, our method establishes bounds on this constant, which is an important aspect in many applications (e.g., it determines the rate of convergence for Markov chains). 
We are also able to handle the broader class of expected sensitivity properties arising from Kantorovich metrics, subsuming the notion considered by \citet{wang2019proving} where the output distance is the absolute difference between two expected values.
\\[0.5em]
\noindent
\textit{Formal reasoning for probabilistic programs.}
Logics for probabilistic programs has been an active research area since the
1980s.  Seminal work by \citet{Kozen85} defines a probabilistic propositional
dynamic logic for reasoning about probabilistic programs, using real-valued
functions rather than boolean assertions.  \citet{Morgan96} define a weakest
pre-expectation calculus for a programming language with (demonic)
non-determinism and probabilities. Extensions of this calculus with recursion
and conditioning have been
considered~\citep{DBLP:conf/lics/OlmedoKKM16,DBLP:journals/toplas/OlmedoGJKKM18}.
\citet{KaminskiKMO16} define a similar calculus for bounding expected run-times
of probabilistic programs. 
These works do not prove relational properties of programs, and are unsuitable
for verifying sensitivity.
%
%
\\[0.5em]
\textit{Continuity in programs and process calculi.}
\noindent
Formal reasoning about the continuity of deterministic programs has received some attention.
\citet{ChaudhuriGL10,DBLP:journals/cacm/ChaudhuriGL12} were the first to give a sound, compositional framework for verifying that a program is continuous.
\citet{ReedP10} gave a type system that can verify Lipschitz continuity of functional programs (see also \citep{GHHNP13,AGGH14,AmorimGHKC17,adafuzz}).
Recently, \citet{DBLP:conf/atva/HuangWM18} proposed the tool \texttt{PSense} which can perform sensitivity analysis of probabilistic programs. 
Their technique relies on symbolic computation using the symbolic verifier PSI and Mathematica, and supports, e.g., the Total Variation distance
and the expectation distance. 
\texttt{PSense} cannot reason, however, about general Kantorovich distances, or unbounded loops.

Finally, in the process-algebra setting,
compositional reasoning about metrics has received 
some attention. \citet{DBLP:journals/corr/GeblerLT16} used
uniform continuity 
to reason about the distance
between recursive processes in a compositional way, while
\citet{DBLP:journals/jcss/GeblerT18} recently defined
specification formats that can check uniform continuity syntactically. 
A more general framework for reasoning about metrics has been given by
\citet{DBLP:conf/lics/BacciMPP18}, who presented an algebraic
axiomatization of Markov processes in quantitative
equational logic. Their 
framework supports reasoning about
various metrics, including the Kantorovich metric.

\section{Conclusion}
\label{sec:conc}
We defined a pre-expectation calculus to compute upper bounds for Kantorovich
metrics, and applied it to prove convergence of reinforcement learning and card
shuffling algorithms, algorithmic stability of SGD, and uniformity of limit
distributions. Our calculus provides theoretical foundations for reasoning about
quantitative relational properties of probabilistic programs.

There are several natural directions for future work. One possible extension is
to lift the requirement that programs terminate with equal probability on pairs
of executions, possibly by leveraging alternative notions of the Kantorovich
metric that accommodate distributions of different weight \citep{Piccoli2016}.
Other directions include developing a relational version of quantitative
separation logic~\citep{Batz19}, and use it for proving relational
properties of probabilistic heap-manipulating programs.


We also explored methods for proving lower bounds of convergence speed. In
general, we are not aware of many works that prove lower bounds using program
logics, with some notable exceptions~\cite{DBLP:journals/pacmpl/HarkKGK20}.
Developing more tools and techniques for reasoning about these fascinating
properties is an interesting avenue for future work.


\bibliographystyle{ACM-Reference-Format}
\bibliography{header,main}

\clearpage
\appendix


\section{Background: Real Analysis}

The following are standard convergence results in real analysis, see for
instance~\cite{williams_1991}. In all of them we consider a sequence of
relational expectations $\Exp_n \colon \Mem \times \Mem \to \RRe$ and a
distribution $\mu \colon \Dist(\Mem\times\Mem)$.

\begin{lem}[Fatou's Lemma]  \label{lem:fatou}
  Let $\Exp_n$ be a monotone increasing sequence of relational expectations.
  Then,
  \[ \EE_\mu[\lim_{n \to \infty} \Exp_n] \leq \lim_{n \to \infty} \EE_\mu[\Exp_n] . \]
\end{lem}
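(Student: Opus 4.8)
The plan is to prove this as a special case of the discrete monotone convergence theorem, exploiting that $\Mem\times\Mem$ is countable and every quantity involved lives in $\RRe$. Recall $\EE_\mu[f] = \sum_{(s_1,s_2)} f(s_1,s_2)\cdot\mu(s_1,s_2)$ with the convention $0\cdot\infty = 0$, and that the order $\Exp_n \leq \Exp_{n+1}$ in $\EXP$ is pointwise, so the pointwise limit $\Exp_\infty \triangleq \lim_{n\to\infty}\Exp_n$ exists in $\RRe$ as the supremum of an increasing chain in the complete lattice $\RRe$. Since, moreover, each $\Exp_n \leq \Exp_{n+1}$ pointwise forces $\EE_\mu[\Exp_n] \leq \EE_\mu[\Exp_{n+1}]$, the sequence $n\mapsto\EE_\mu[\Exp_n]$ is monotone, so $\lim_{n\to\infty}\EE_\mu[\Exp_n] = \sup_{n}\EE_\mu[\Exp_n]$ also exists in $\RRe$. (Note the stated inequality is in fact an equality by monotone convergence, but only $\leq$ is needed here and is cleaner to establish.)

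First I would dispatch the degenerate case: if some pair $(s_1,s_2)$ has $\Exp_\infty(s_1,s_2) = \infty$ and $\mu(s_1,s_2) > 0$, then the left-hand side is $\infty$; but then $\Exp_n(s_1,s_2)\cdot\mu(s_1,s_2)\to\infty$, whence $\EE_\mu[\Exp_n]\geq\Exp_n(s_1,s_2)\cdot\mu(s_1,s_2)\to\infty$ and the right-hand side is $\infty$ as well. Otherwise every term $\Exp_\infty(s_1,s_2)\cdot\mu(s_1,s_2)$ is finite, and the core step is a finite-support approximation: for any \emph{finite} $F\subseteq\Mem\times\Mem$, a finite sum of limits is the limit of the finite sums, so
\[
  \sum_{(s_1,s_2)\in F}\Exp_\infty(s_1,s_2)\cdot\mu(s_1,s_2)
  = \lim_{n\to\infty}\sum_{(s_1,s_2)\in F}\Exp_n(s_1,s_2)\cdot\mu(s_1,s_2)
  \leq \lim_{n\to\infty}\EE_\mu[\Exp_n],
\]
where the inequality holds because restricting a sum of non-negative terms to $F$ can only decrease it. Since a countable sum of non-negative (possibly $\infty$) terms equals the supremum of its finite partial sums, taking the supremum over all finite $F$ on the left yields $\EE_\mu[\Exp_\infty] \leq \lim_{n\to\infty}\EE_\mu[\Exp_n]$.

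I expect the only delicate points to be bookkeeping in the extended reals: the convention $0\cdot\infty=0$, the identity expressing a non-negative countable sum as the supremum of its finite partial sums, and the separate treatment of the degenerate case above. None of this is deep, but it is precisely where a careless argument would slip. An alternative would be to invoke the classical monotone convergence theorem for the counting-style measure on the countable set $\Mem\times\Mem$; I prefer the elementary partial-sum argument above since it keeps the development self-contained and matches the discrete setting used throughout the paper.
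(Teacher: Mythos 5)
The paper does not actually prove this lemma: it is stated as a standard result from real analysis, with a pointer to Williams' textbook, at the start of Appendix~A. So there is no paper proof to compare against; you are supplying one. Your argument is correct and is exactly the standard proof of the (discrete) monotone convergence theorem specialized to the countable sample space $\Mem\times\Mem$: handle the degenerate case where some $\Exp_\infty(s_1,s_2)\cdot\mu(s_1,s_2)=\infty$ and $\mu(s_1,s_2)>0$ separately, and otherwise exchange a \emph{finite} sum with the pointwise limit, bound each finite partial sum by $\lim_n\EE_\mu[\Exp_n]$, and pass to the supremum over finite index sets using that a non-negative countable sum is the supremum of its finite partial sums. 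The bookkeeping with $0\cdot\infty=0$ is handled correctly (in particular, terms with $\mu(s_1,s_2)=0$ contribute $0$ on both sides and cause no trouble). One cosmetic remark: in the non-degenerate branch every term $\Exp_\infty(s_1,s_2)\cdot\mu(s_1,s_2)$ is finite by assumption, so the parenthetical \emph{(possibly $\infty$)} in the last invocation is vacuous there; it is harmless, but could be dropped for clarity. As you note, the argument in fact yields the full monotone-convergence \emph{equality}; the reverse inequality $\lim_n\EE_\mu[\Exp_n]\leq\EE_\mu[\lim_n\Exp_n]$ is immediate from $\Exp_n\leq\lim_m\Exp_m$ and monotonicity of $\EE_\mu[-]$, but the paper only states and uses the $\leq$ direction.
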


Fatou's Lemma also holds when $\Exp_n$ is not a monotone sequence (replacing the
limit by a limit inferior), but the monotone version suffices for our purposes.

Now we present a result that will be useful in showing convergence of couplings.
A similar result can be found in the monograph~\citet[Theorem 5.19]{Villani08}.

\begin{theorem}[Convergence of couplings]
\label{thm:conv-couplings}
Let $\nu_i$ and $\rho_i$ denote two sequences of sub-distributions with
countable support over $X$, converging pointwise to $\nu$ and $\rho$
respectively. Let $\mu_i \in \Gamma(\nu_i, \rho_i)$ be a sequence of couplings
of $\nu_i$ and $\rho_i$. Then there exists a subsequence $\mu_i'$ of $\mu_i$
that converges to a coupling $\mu \in \Gamma(\nu,\rho)$.
\end{theorem}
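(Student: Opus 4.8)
The plan is a diagonal extraction to produce a pointwise limit, a Fatou-style verification of the sub-distribution and marginal-domination properties, and then a separate argument ruling out escape of mass; I expect that last step to be the real obstacle.

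\emph{Step 1 (extract a pointwise limit).} Since $X$ is countable, so is $X \times X$; fix an enumeration $z_1, z_2, \dots$ of $X \times X$. For each fixed $k$ the real sequence $(\mu_i(z_k))_i$ lies in the compact interval $[0,1]$, so by repeated use of Bolzano--Weierstrass together with a diagonal argument there is a subsequence $(\mu_i')_i$ of $(\mu_i)_i$ such that $\mu_i'(z)$ converges for every $z \in X \times X$; set $\mu(z) \triangleq \lim_i \mu_i'(z) \in [0,1]$. Equivalently, $[0,1]^{X \times X}$ with the product topology is sequentially compact because $X \times X$ is countable. Passing to this subsequence we still have $\nu_i' \to \nu$ and $\rho_i' \to \rho$ pointwise, so it remains to show that $\mu \in \Dist(X \times X)$ and $\pi_1(\mu) = \nu$, $\pi_2(\mu) = \rho$.

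\emph{Step 2 (the easy bounds, via Fatou).} Applying Fatou's Lemma (\cref{lem:fatou}) over the counting measure on $X \times X$ gives $|\mu| = \sum_z \lim_i \mu_i'(z) \le \liminf_i \sum_z \mu_i'(z) = \liminf_i |\mu_i'| \le 1$, so $\mu$ is a sub-distribution. Fixing $x \in X$ and applying \cref{lem:fatou} over the counting measure on $X$ to the functions $y \mapsto \mu_i'(x, y)$ yields $\pi_1(\mu)(x) = \sum_y \mu(x,y) \le \liminf_i \sum_y \mu_i'(x, y) = \liminf_i \nu_i'(x) = \nu(x)$, and symmetrically $\pi_2(\mu)(y) \le \rho(y)$ for every $y$. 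Thus $\mu$ has the right marginals up to a possible loss of mass.

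\emph{Step 3 (the crux: no mass escapes).} To turn the inequalities of Step~2 into equalities it suffices to prove $|\mu| = |\nu| = |\rho|$: summing $\pi_1(\mu)(x) \le \nu(x)$ over $x$ gives $|\mu| = \sum_x \pi_1(\mu)(x) \le \sum_x \nu(x) = |\nu|$, and once this holds with equality the nonnegative summands of $\sum_x (\nu(x) - \pi_1(\mu)(x)) = |\nu| - |\mu| = 0$ must all vanish, forcing $\pi_1(\mu) = \nu$ pointwise; likewise $\pi_2(\mu) = \rho$. Since each $\mu_i'$ is a coupling we have $|\mu_i'| = |\nu_i'| = |\rho_i'|$, so everything reduces to convergence of the total masses $|\nu_i'| \to |\nu|$. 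In the setting where this theorem is used---soundness of the loop rule---the sequences $\nu_i$, $\rho_i$, $\mu_i$ are the monotonically increasing approximants obtained by unrolling the loop, so every subsequence is again monotone, $\mu = \sup_i \mu_i'$, and $|\mu| = \sup_i |\mu_i'| = \sup_i |\nu_i'| = |\nu|$ by monotone convergence, which closes the argument. For general (non-monotone) sequences the extra input needed is tightness of the families $(\nu_i)$ and $(\rho_i)$ (equivalently, no mass leaks to infinity in $X$), the discrete counterpart of the tightness argument behind \citet[Theorem 5.19]{Villani08}; this hypothesis is genuinely required, as witnessed by $\nu_i = \rho_i = \tfrac12 \delta_0 + \tfrac12 \delta_i$ with $\mu_i = \tfrac12 \delta_{(0,i)} + \tfrac12 \delta_{(i,0)}$, where every subsequential limit of $(\mu_i)$ equals $0 \notin \Gamma(\tfrac12 \delta_0, \tfrac12 \delta_0)$. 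Steps~1 and~2 are routine; the substance, and the main obstacle, lies in this no-escape-of-mass step.
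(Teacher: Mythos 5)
Your Step~1 is the paper's first step verbatim: extract a pointwise--convergent subsequence via sequential compactness of $[0,1]^{X\times X}$ (Bolzano--Weierstrass plus a diagonal argument). Your Steps~2--3 then replace the paper's single $\epsilon$-estimate---which bounds $|\nu(x_1)-\pi_1(\mu)(x_1)|$ directly by truncating to a finite set $S(\epsilon)$ and invoking $L_1$ convergence of $\rho_i'$ to $\rho$---with a cleaner split into a Fatou one-sided bound plus a total-mass argument.

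More importantly, you have found a genuine gap, and the theorem as stated is false. The paper's proof asserts, in deriving the tail-vanishing limit $\lim_{i\to\infty}\sum_{x_2\notin S(\epsilon/2)}\rho_i(x_2)=0$, that pointwise convergence of $\rho_i'$ to $\rho$ implies $L_1$ convergence. For sub-distributions this is not true: mass can escape to infinity. Your counterexample $\nu_i=\rho_i=\tfrac12\delta_0+\tfrac12\delta_i$ with $\mu_i=\tfrac12\delta_{(0,i)}+\tfrac12\delta_{(i,0)}$ is a valid refutation: the pointwise limits are $\nu=\rho=\tfrac12\delta_0$, every subsequential limit of $(\mu_i)$ is the zero measure, and $0\notin\Gamma(\tfrac12\delta_0,\tfrac12\delta_0)$. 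The theorem needs an extra hypothesis---for instance $|\nu_i|\to|\nu|$ and $|\rho_i|\to|\rho|$ (equivalent, by Scheff\'e's lemma given pointwise convergence, to $L_1$ convergence), or tightness of $(\nu_i)$ and $(\rho_i)$, or monotone increase. In every place the paper uses the theorem---the realize-the-infimum lemma, where $\nu_i,\rho_i$ are constant, and the loop-soundness and asynchronous-loop proofs, where they are the monotone loop approximants $\denot{c_i}s$---such a hypothesis does hold, so the downstream results are unaffected, but the statement and the $L_1$ step should be repaired.

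One small inaccuracy in your Step~3: you write that in the loop-soundness application the couplings $\mu_i$ are themselves monotone, so $\mu=\sup_i\mu_i'$. The marginal sequences $\denot{c_i}s_1$ and $\denot{c_i}s_2$ are monotone, but the couplings $\mu_{i,s_1,s_2}$ built inductively in the soundness proof use a fresh one-step coupling of the loop body at each stage and are not monotone in $i$. The argument that does go through is: monotonicity of $(\nu_i)$ and $(\rho_i)$ gives $\nu_i\le\nu$ and $\rho_i\le\rho$ pointwise, hence tightness of $(\nu_i)$ and $(\rho_i)$, hence tightness of $(\mu_i)$ via the marginal conditions; tightness together with Fatou then forces $|\mu|=\lim_i|\mu_i'|=\lim_i|\nu_i'|=|\nu|$, which upgrades your Step-2 inequalities to the required equalities.
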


\begin{proof}
The proof proceeds in two steps. First we show that there exists a convergent
subsequence of couplings. By the Bolzano-Weierstrass theorem, $[0, 1]$ is
\emph{sequentially compact}, i.e., every sequence in $[0,1]$ has a subsequence
that converges in $[0,1]$. Moreover, countable products preserve sequential
compactness. Since every $\nu_i$ and $\rho_i$ have countable support, so does
every $\mu_i$, so we can consider the sequence $\{\mu_i\}_{i\in\NN}$ as a
sequence over $[0,1]^\mathcal{S}$ where $\mathcal{S} = \cup_i \supp(\mu_i)$.
Since this is a sequentially compact space, we can extract a subsequence
$\{\mu_i'\}_{i\in\NN}$ that converges pointwise to some distribution, call it
$\mu \in \Dist(X \times X)$; let $\{\nu_i'\}_{i\in\NN}$ and
$\{\rho_i'\}_{i\in\NN}$ be the corresponding subsequences of
$\{\nu_i\}_{i\in\NN}$ and $\{\rho_i\}_{i\in\NN}$ such that $\mu'_i \in
\Gamma(\nu_i', \rho_i')$. Since they are subsequences of convergent sequences,
these were convergent, $\{ \nu_i' \}_{i \in \NN}$ converges to $\nu$ and $\{
\rho_i' \}_{i \in \NN}$ converges to $\rho$.

The main task is showing that $\mu$ is indeed a coupling of $\nu$ and $\rho$,
i.e., $\mu \in \Gamma(\nu, \rho)$. We consider the first marginal condition.
Let $\epsilon > 0$ be any positive number. Since $\rho$ is a distribution over a
countable set, there exists a finite set $S(\epsilon)$ such that $\sum_{x_2
\notin S(\epsilon)} \rho(x_2) < \epsilon$. We first show that:
\begin{equation} \label{eq:tails-uniform-limit}
  \lim_{i \to \infty} \sum_{x_2 \notin S(\epsilon/2)} \rho_i(x_2) = 0 .
\end{equation}
Since $\{ \rho_i' \}_i$ converges pointwise to $\rho$, it also converges in
$L_1$. So, there exists a finite number $N(\epsilon)$ such that for all $i >
N(\epsilon)$, we have:
\[
  \sum_{x_2 \in X} | \, \rho_i'(x_2) - \rho(x_2) \, | < \epsilon .
\]
Thus for all $i > N(\epsilon/2)$, we have:
\begin{align*}
  \sum_{x_2 \notin S(\epsilon/2)} \rho_i'(x_2) 
  &= \sum_{x_2 \notin S(\epsilon/2)} (\rho_i'(x_2) - \rho(x_2)) + \sum_{x_2 \notin S(\epsilon/2)} \rho(x_2)
  \\
  &\leq \sum_{x_2 \in X} | \, \rho_i'(x_2) - \rho(x_2) \, | + \epsilon/2
  \\
  &\leq \epsilon .
\end{align*}
Since $\epsilon$ was arbitrary, this establishes \cref{eq:tails-uniform-limit}.
Now, let $x_1 \in X$ be any element. We can compute:
\begin{align*}
  | \, \nu(x_1) - (\pi_1(\mu))(x_1) \, |
  &= | \, \nu(x_1) - \sum_{x_2 \in X} \lim_{i \to \infty} \mu_i(x_1, x_2) \, |
  \notag \\
  &\leq \, | \nu(x_1) - \sum_{x_2 \in S(\epsilon)} \lim_{i \to \infty} \mu_i(x_1, x_2) \, |
  + \sum_{x_2 \notin S(\epsilon)} \lim_{i \to \infty} \mu_i(x_1, x_2)
  \tag{triangle ineq.} \\
  &= \, | \nu(x_1) - \lim_{i \to \infty} \sum_{x_2 \in S(\epsilon)} \mu_i(x_1, x_2) \, |
  + \sum_{x_2 \notin S(\epsilon)} \lim_{i \to \infty} \mu_i(x_1, x_2)
  \tag{$S(\epsilon)$ finite} \\
  &\leq | \nu(x_1) - \lim_{i \to \infty} \sum_{x_2 \in S(\epsilon)} \mu_i(x_1, x_2) \, |
  + \sum_{x_2 \notin S(\epsilon)} \lim_{i \to \infty} \rho_i(x_2)
  \tag{$\pi_2(\mu_i) = \rho_i$} \\
  &\leq | \, \nu(x_1) - \lim_{i \to \infty} \sum_{x_2 \in S(\epsilon)} \mu_i(x_1, x_2) \, |
  + \sum_{x_2 \notin S(\epsilon)} \rho(x_2)
  \tag{limit $\rho$} \\
  &\leq | \, \nu(x_1) - \lim_{i \to \infty} \sum_{x_2 \in S(\epsilon)} \mu_i(x_1, x_2) \, | + \epsilon
  \tag{def. $S(\epsilon)$} \\
  &= | \, \nu(x_1) - \lim_{i \to \infty} \nu_i(x_1) + \lim_{i \to \infty} \sum_{x_2 \notin S(\epsilon)} \mu_i(x_1, x_2) \, | + \epsilon
  \tag{$\pi_1(\mu_i) = \nu_i$} \\
  &= | \, \lim_{i \to \infty} \sum_{x_2 \notin S(\epsilon)} \mu_i(x_1, x_2) \, | + \epsilon
  \tag{limit $\nu$} \\
  &\leq \lim_{i \to \infty} \sum_{x_2 \notin S(\epsilon)} \rho_i(x_2) + \epsilon
  = \epsilon \tag{by \cref{eq:tails-uniform-limit}} .
\end{align*}
Since $\epsilon > 0$ and $x_1 \in X$ are arbitrary, this shows the first
marginal condition $\nu = \pi_1(\mu)$. The second marginal condition follows
similarly, and so $\mu \in \Gamma(\nu, \rho)$ as desired.
\end{proof}

\section{Program semantics}

A state $s\in\Mem$ is a map from a finite set of variable names $\Var$ to a set of values
$\Val$. Given an expression $e$, we abuse the notation $s(e)$ to denote the natural lifting
of $s$ to a map from expressions to values. Similarly, given an expression $d$ denoting
a distribution, we abuse the notation $s(d)$ to denote the lifting of $s$ to a map from
distributions to distribution over values.
Given $s\in \Mem$, $x\in \Var$ and $v \in \Val$, we write $s \{ v / x \}$ to denote
the unique state such that $s \{ v / x \} (y) = v$ if $y = x$ and $s \{ v / x \}(y) = s(y)$
otherwise.

The semantics $\denot{c}$ of a command $c$ is a map from an input state in $\Mem$ to 
an output distribution in $\Dist(\Mem)$. This semantics is standard, and is defined
by induction on the structure of the command:
\begin{align*}
  \denot{\Skip} s &\triangleq \delta(s) \\
  \denot{\Assn{x}{e}} s &\triangleq \delta(s \{ s(e) / x \}) \\
  \denot{\Rand{x}{d}} s &\triangleq \EE_{v \sim s(d)}[s \{ v / x \}] \\ 
  \denot{c; c'} s &\triangleq \EE_{s' \sim \denot{c} s}[\denot{c'}s']  \\
  \denot{\Cond{e}{c}{c'}} s &\triangleq 
  	\ind{s(e)} \cdot \denot{c} s
	+ \ind{\neg s(e)} \cdot \denot{c'} \delta(s) \\
  \denot{\WWhile{e}{c}} s &\triangleq \lim_{n \to \infty} \denot{c_n}s
  \quad \text{where}\ c_0 \triangleq \Abort \text{ and } c_{i + 1} \triangleq \Condt{e}{c; c_i}
\end{align*}
We use a dummy $\Abort$ command that denotes the constant zero sub-distribution
to help define the semantics for loops. The limit exists and is a
sub-distribution because for any initial state $s$, the sub-distributions
$\denot{c_i}s$ are monotone increasing in $i$ under the pointwise order on
sub-distributions, i.e., $(\denot{c_i}s)(s') \leq (\denot{c_j}s)(s')$ for all
states $s, s' \in \Mem$ and all $i \leq j$, and $(\denot{c_i}s)(s')$ is
bounded above by $1$.

\section{Section 2: Omitted Proofs}

\begin{proof}[Theorem~\ref{thm:tv}]
For the direction ``$\leq$'', it suffices to show that $
  \bigl| \mu_1(S) - \mu_2(S) \bigr| \leq \EE_{\mu}[ \Exp]
$
for every
$\mu \in \Gamma(\mu_1, \mu_2)$ and every $S\subseteq X$.
By the property of marginals and monotonicity of probabilities, we have:
\begin{align*}
  \bigl| \mu_1(S) - \mu_2(S) \bigr| & = \bigl| \mu(S\times X) - \mu(X \times S) \bigr|
                                     = \bigl| \mu(S \times (X \setminus S)) - \mu((X \setminus S) \times S) \bigr| \\
                                    & \leq \max(\mu(S \times (X \setminus S)),\mu((X \setminus S) \times S))
                                     \leq \mu (\{(x_1,x_2) \mid x_1\neq x_2\})
                                     = \EE_{\mu}[ \Exp]~.
\end{align*}
For the other direction, we construct a so-called optimal coupling.  For every
$x\in X$, let $\mu_0(x)=\min (\mu_1(x),\mu_2(x))$.  The optimal coupling
for $(\mu_1,\mu_2)$ is defined by:
\[
  \mu(x_1,x_2) = \begin{cases}
    \mu_0(x_1) &: x_1=x_2 \\
    \frac{(\mu_1(x_1)-\mu_0(x_1)) \cdot (\mu_2(x_2)-\mu_0(x_2))}{TV(\mu_1,\mu_2)} &: x_1\neq x_2~,
\end{cases}
\]
where $0/0 \coloneqq 0$. One can check that $\mu$ is a
coupling for $(\mu_1,\mu_2)$ and that $\mu([x_1\neq x_2])=TV(\mu_1,\mu_2)$.
\end{proof}%

\begin{proof}[Theorem~\ref{thm:abs-diff}]
  It suffices to show $\bigl| \EE_{\mu_1}[ f_1 ] - \EE_{\mu_2}[ f_2 ] \bigr|
    \leq \EE_{\mu}[ \Exp ]$ for every $\mu \in \Gamma(\mu_1,\mu_2)$, which follows from the marginal properties of couplings,
  linearity of expectation, and the fact that ${|\EE_\mu[f]|\leq
  \EE_\mu[|f|]}$:
  \begin{align*}
           &\phantom{=} \ \bigl| \EE_{\mu_1}[ f_1 ] - \EE_{\mu_2}[ f_2 ] \bigr|
           = \bigl| \EE_{\mu}[\lambda (x_1,x_2).~f_1(x_1)] - \EE_{\mu}[\lambda (x_1,x_2).~f_2(x_2)] \bigr| \\
           &= \bigl| \EE_{\mu}[\lambda (x_1,x_2).~f_1(x_1) - f_2(x_2)] \bigr| 
           \leq \EE_{\mu}[\lambda (x_1,x_2).~\bigl|f_1(x_1) - f_2(x_2) | \bigr] 
           = \EE_{\mu}[\Exp]~. \qedhere
  \end{align*}
\end{proof}%

\begin{proof}[Theorem~\ref{thm:scaled-tv}]
	The proof follows from Theorem~\ref{thm:tv} and from the
  observations that $\Kant{(\rho \cdot \Exp)}=\rho\cdot \Kant{\Exp}$
  and that $\Exp\leq\Exp'$ implies $\Kant{\Exp}\leq\Kant{\Exp'}$, taking the
  pointwise order in both cases.
\end{proof}%

\section{Soundness and Continuity: Omitted Proofs}

The syntactic relational pre-expectation transformer is a monotonic operator.

\begin{lem}[Monotonicity of $\gcp{c}{-}$] \label{lem:monotonic}
  Let $\Exp$ be a relational expectation and let $c$ be a program. Then
  $\gcp{c}{-}$ and $\Phi_{\Exp, c}(-)$ are monotonic, i.e.\ for any two
  relational expectations $\Exp_1, \Exp_2$ such that $\Exp_1 \leq \Exp_2$, we have
  $\gcp{c}{\Exp_1} \leq \gcp{c}{\Exp_2}$ and $\Phi_{\Exp, c}(\Exp_1) \leq
  \Phi_{\Exp, c}(\Exp_2)$.
\end{lem}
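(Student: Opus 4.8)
The plan is to prove both claims simultaneously by structural induction on $c$, after observing that the statement about $\Phi_{\Exp, c}$ reduces to the one about $\gcpsymbol$. Indeed, since
$\Phi_{\Exp, c}(X) = \ind{e\sidel \land e\sider} \cdot \gcp{c}{X} + \ind{\neg e\sidel \land \neg e\sider} \cdot \Exp + \ind{e\sidel \neq e\sider} \cdot \infty$
depends on $X$ only through $\gcp{c}{X}$, which occurs with the nonnegative coefficient $\ind{e\sidel \land e\sider}$, monotonicity of $\gcp{c}{-}$ immediately yields monotonicity of $\Phi_{\Exp, c}(-)$. So throughout I fix relational expectations $\Exp_1 \leq \Exp_2$ and show $\gcp{c}{\Exp_1} \leq \gcp{c}{\Exp_2}$.

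The base cases are immediate: $\gcp{\Skip}{\Exp} = \Exp$ is the identity, and $\gcp{\Assn{x}{e}}{\Exp}$ is precomposition of $\Exp$ with a fixed state-update map, which preserves the pointwise order. For sampling, $\gcp{\Rand{x}{d}}{\Exp}(s_1, s_2) = \inf_{\mu \in \Gamma} \EE_\mu[\Exp]$ where the set $\Gamma = \Gamma(\denot{\Rand{x}{d}}s_1, \denot{\Rand{x}{d}}s_2)$ of distributions does not depend on the expectation argument; monotonicity of expectation gives $\EE_\mu[\Exp_1] \leq \EE_\mu[\Exp_2]$ for each $\mu$, so the infima satisfy the same inequality (and if $\Gamma = \emptyset$ both sides equal $\infty$). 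The inductive cases for sequencing and conditionals are routine: $\gcp{c;c'}{-} = \gcp{c}{\gcp{c'}{-}}$ is a composition of two monotone maps by the induction hypothesis, and $\gcp{\Cond{e}{c}{c'}}{-}$ is a sum of the monotone maps $\gcp{c}{-}$ and $\gcp{c'}{-}$ with nonnegative indicator coefficients plus a term independent of the argument.

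The only case requiring an extra ingredient is the while loop. Here $\gcp{\WWhile{e}{c}}{\Exp} = \mathsf{lfp}\,X.\,\Phi_{\Exp, c}(X)$; by the induction hypothesis applied to the strictly smaller subprogram $c$, the map $\gcp{c}{-}$ is monotone, hence so is $\Phi_{\Exp, c}(-)$, so the least fixed point exists by Knaster--Tarski. Moreover $\Phi_{(-), c}$ is monotone in its expectation parameter, since $\Exp$ appears in $\Phi_{\Exp, c}(X)$ with the nonnegative coefficient $\ind{\neg e\sidel \land \neg e\sider}$; thus $\Exp_1 \leq \Exp_2$ gives $\Phi_{\Exp_1, c}(X) \leq \Phi_{\Exp_2, c}(X)$ for every $X$. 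I then invoke the standard fact that if $F, G \colon \EXP \to \EXP$ are monotone with $F(X) \leq G(X)$ for all $X$, then $\mathsf{lfp}\,F \leq \mathsf{lfp}\,G$: indeed $F(\mathsf{lfp}\,G) \leq G(\mathsf{lfp}\,G) = \mathsf{lfp}\,G$, so $\mathsf{lfp}\,G$ is a prefixed point of $F$, and Park induction gives $\mathsf{lfp}\,F \leq \mathsf{lfp}\,G$. Applying this with $F = \Phi_{\Exp_1, c}$ and $G = \Phi_{\Exp_2, c}$ finishes the loop case and the induction.

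The main obstacle is the loop case, and it is more a matter of isolating the right auxiliary lemma---monotonicity of $\mathsf{lfp}$ in the operator---and checking that $\Phi_{\Exp, c}$ is monotone both in $X$ (needed for the least fixed point to be well-defined, and supplied by the induction hypothesis on $c$) and in $\Exp$. Everything else is bookkeeping exploiting the nonnegative-coefficient structure of the defining equations and the fact that $\RRe$, and hence $\EXP$, is a complete lattice under the pointwise order.
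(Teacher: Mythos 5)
Your proof is correct and follows essentially the same route as the paper's: reduce monotonicity of $\Phi_{\Exp,c}$ to that of $\gcp{c}{-}$, then structural induction on $c$, handling the loop case by observing that the least fixed point for $\Exp_2$ is a prefixed point of the characteristic functional for $\Exp_1$ and invoking Knaster--Tarski/Park induction. The only cosmetic difference is that you isolate the ``if $F\leq G$ pointwise then $\mathsf{lfp}\,F\leq\mathsf{lfp}\,G$'' step as a named lemma, where the paper carries out the same computation inline.
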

\begin{proof}
  The latter result is a corollary from the former. By definition,
  \[ \Phi_{\Exp,c,e}(\Exp_1) = \ind{e\sidel \land e\sider} \cdot \gcp{c}{\Exp_1}
        + \ind{\neg e\sidel \land \neg e\sider} \cdot \Exp
+ \ind{e\sidel \neq e\sider} \cdot \infty \]
  and
  \[ \Phi_{\Exp,c,e}(\Exp_2) = \ind{e\sidel \land e\sider} \cdot \gcp{c}{\Exp_2}
        + \ind{\neg e\sidel \land \neg e\sider} \cdot \Exp
+ \ind{e\sidel \neq e\sider} \cdot \infty. \]
  So given $\gcp{c}{\Exp_1} \leq \gcp{c}{\Exp_2}$ we can conclude 
  $\Phi_{\Exp, c, e}(\Exp_1) \leq \Phi_{\Exp, c, e}(\Exp_2)$.

  The former result is proven by induction on $c$:
	\begin{itemize}
		\item $\Skip$. Then
			\[ \gcp{\Skip}{\Exp_1} = \Exp_1 \leq \Exp_2 = \gcp{\Skip}{\Exp_2} \]

		\item $\Assn{x}{e}$. Then
			\[ \gcp{\Assn{x}{e}}{\Exp_1} = \Exp_1\{ e\sidel, e\sider / x\sidel, x\sider \} \]
			and
			\[ \gcp{\Assn{x}{e}}{\Exp_2} = \Exp_2\{ e\sidel, e\sider / x\sidel, x\sider \} \]
			Consider a pair of states $s_1,s_2$ then:
			\begin{align*} 
				\Exp_1\{ e\sidel, e\sider / x\sidel, x\sider \}(s_1, s_2) &= 
				\Exp_1(s_1\{ s_1(e) / x\})(s_2\{ s_2(e) / x\}) \\
				&\leq\Exp_2(s_1\{ s_1(e) / x\})(s_2\{ s_2(e) / x\}) \\
				&=\Exp_2\{ e\sidel, e\sider / x\sidel, x\sider \}(s_1, s_2)
			\end{align*}

		\item $\Rand{x}{d}$. Then,
			\[ \gcp{\Rand{x}{d}}{\Exp_1} = \inf_{\mu \in \Gamma(\mu_1, \mu_2)} \EE_{\mu} [ \Exp_1 ] \]
			and 
			\[ \gcp{\Rand{x}{d}}{\Exp_2} = \inf_{\mu \in \Gamma(\mu_1, \mu_2)} \EE_{\mu} [ \Exp_2 ] \]
			Let $\mu\in\Gamma(\mu_1, \mu_2)$ be an arbitrary coupling. By monotonicity of the expectation,
			then $\EE_{\mu} [ \Exp_1 ] \leq \EE_{\mu} [ \Exp_2 ]$, and therefore the infimum for $\Exp_1$ is
			less or equal than the one for $\Exp_2$.

		\item $c; c'$. By the induction hypothesis,
			\[ \gcp{c; c'}{\Exp_1} = \gcp{c}{\gcp{c'}{\Exp_1}} \leq \gcp{c}{\gcp{c'}{\Exp_2}} = \gcp{c; c'}{\Exp_2}  \]
			Note that the inequality needs two applications of the I.H., one to show that 
			$\gcp{c'}{\Exp_1} \leq \gcp{c'}{\Exp_2}$ and another one to show 
			$\gcp{c}{\gcp{c'}{\Exp_1}} \leq \gcp{c}{\gcp{c'}{\Exp_2}}$.

		\item $\Cond{e}{c}{c'}$. By the induction hypothesis (applied at $c$ and $c'$),  	
			\begin{align*}
				\gcp{\Cond{e}{c}{c'} }{\Exp_1}
				&= \ind{e\sidel \land e\sider} \cdot \gcp{c}{\Exp_1}
  				+ \ind{\neg e\sidel \land \neg e\sider} \cdot \gcp{c'}{\Exp_1}
				+ \ind{e\sidel \neq e\sider} \cdot \infty \\
				&\leq \ind{e\sidel \land e\sider} \cdot \gcp{c}{\Exp_2}
  				+ \ind{\neg e\sidel \land \neg e\sider} \cdot \gcp{c'}{\Exp_2}
				+ \ind{e\sidel \neq e\sider} \cdot \infty \\
				&= \gcp{\Cond{e}{c}{c'} }{\Exp_2}
			\end{align*}

		\item $\WWhile{e}{c}$. Then, 
		\begin{align*} \gcp{\WWhile{e}{c}}{\Exp_1} &= {\sf lfp} X.
 		 \ind{e\sidel \land e\sider} \cdot \gcp{c}{X}
		  + \ind{\neg e\sidel \land \neg e\sider} \cdot \Exp_1
	  	  + \ind{e\sidel \neq e\sider} \cdot \infty  \\
		\gcp{\WWhile{e}{c}}{\Exp_2} &= {\sf lfp} X.
 		 \ind{e\sidel \land e\sider} \cdot \gcp{c}{X}
		  + \ind{\neg e\sidel \land \neg e\sider} \cdot \Exp_2
	  	  + \ind{e\sidel \neq e\sider} \cdot \infty
		\end{align*}
		Existence of the least fixed points is guaranteed by
		monotonicity of the functionals, which follows from the
		inductive hypothesis applied to $c$.  Suppose $X_2$ is the
		least fixpoint of the second expression. We will show that it
		is a pre-fixpoint of the first expression. 		
		\begin{align*}
		&\ind{e\sidel \land e\sider} \cdot \gcp{c}{X_2}
		  + \ind{\neg e\sidel \land \neg e\sider} \cdot \Exp_1
	  	  + \ind{e\sidel \neq e\sider} \cdot \infty \\
		  & \leq \ind{e\sidel \land e\sider} \cdot \gcp{c}{X_2}
		  + \ind{\neg e\sidel \land \neg e\sider} \cdot \Exp_2
	  	  + \ind{e\sidel \neq e\sider} \cdot \infty \\
		&= X_2
	  \end{align*}
		By Knaster-Tarski, the least fixed point of a monotonically
		increasing operator is the greatest lower bound of the set of
		pre-fixpoints. From this we conclude
		$\gcp{\WWhile{e}{c}}{\Exp_1} \leq X_2$. 
    \qedhere
	\end{itemize}
\end{proof}

We need a lemma about the existence of a coupling realizing the minimum
Kantorovich distance. 

\begin{lem} \label{lem:realize-min}
  Let $\mu_1, \mu_2 \in \Dist(\Mem)$ be two subdistributions of finite support
  with the same weight, and let $\Exp \colon \Mem \times \Mem \to \RRe$ be a
  relational expectation. There exists a coupling $\mu \in \Gamma(\mu_1, \mu_2)$
  realizing the minimum Kantorovich distance:
  \[
    \EE_{\mu} [ \Exp ]
    = \inf_{\mu \in \Gamma(\mu_1, \mu_2)} \EE_\mu [ \Exp ]
    = \Kant{\Exp}(\mu_1, \mu_2)~.
  \]
\end{lem}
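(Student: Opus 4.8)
The plan is to recognize this as the finitely supported instance of the existence theorem for optimal couplings and to prove it by an elementary compactness argument. Write $S_1 = \supp(\mu_1)$ and $S_2 = \supp(\mu_2)$; both are finite by hypothesis. First I would observe that every coupling $\mu \in \Gamma(\mu_1,\mu_2)$ is supported inside $S_1 \times S_2$: if $\mu(x_1,x_2) > 0$ then $\mu_1(x_1) = \pi_1(\mu)(x_1) \geq \mu(x_1,x_2) > 0$, so $x_1 \in S_1$, and symmetrically $x_2 \in S_2$. Hence $\Gamma(\mu_1,\mu_2)$ may be identified with a subset of the finite-dimensional cube $[0,1]^{S_1 \times S_2}$.

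Next I would check that this set is non-empty and compact. For non-emptiness, let $w = |\mu_1| = |\mu_2|$: the case $w = 0$ is trivial (the only coupling is the zero subdistribution), and for $w > 0$ the rescaled product $\mu(x_1,x_2) = \mu_1(x_1)\mu_2(x_2)/w$ lies in $\Gamma(\mu_1,\mu_2)$ by a one-line marginal computation. For compactness, $\Gamma(\mu_1,\mu_2)$ is cut out of the compact cube $[0,1]^{S_1 \times S_2}$ by the finitely many linear marginal equalities $\sum_{x_2 \in S_2} \mu(x_1,x_2) = \mu_1(x_1)$ and $\sum_{x_1 \in S_1} \mu(x_1,x_2) = \mu_2(x_2)$, each defining a closed set, so $\Gamma(\mu_1,\mu_2)$ is closed and bounded, hence compact by Heine--Borel.

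Finally I would minimize the objective $\mu \mapsto \EE_\mu[\Exp] = \sum_{(x_1,x_2) \in S_1 \times S_2} \Exp(x_1,x_2) \cdot \mu(x_1,x_2)$ over this compact set. If $\Exp$ is finite everywhere on $S_1 \times S_2$ this is a linear, hence continuous, function on a non-empty compact set and attains its infimum. To handle pairs where $\Exp = \infty$, let $P_\infty = \{(x_1,x_2) \in S_1 \times S_2 : \Exp(x_1,x_2) = \infty\}$ and $\Gamma_0 = \{\mu \in \Gamma(\mu_1,\mu_2) : \mu(P_\infty) = 0\}$, which is again compact (it merely adds the closed constraints $\mu(x_1,x_2) = 0$ for $(x_1,x_2) \in P_\infty$). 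If $\Gamma_0 = \emptyset$, then $\EE_\mu[\Exp] = \infty$ for every coupling, so the infimum is $\infty$ and is realized by any coupling. If $\Gamma_0 \neq \emptyset$, then on $\Gamma_0$ the objective is genuinely finite, linear, and continuous, hence attains its minimum at some $\mu^\ast \in \Gamma_0$; moreover every coupling outside $\Gamma_0$ gives value $\infty$, so $\inf_{\Gamma(\mu_1,\mu_2)} \EE_\mu[\Exp] = \inf_{\Gamma_0} \EE_\mu[\Exp] = \EE_{\mu^\ast}[\Exp]$, which is the claim. (Equivalently, one verifies directly that $\mu \mapsto \EE_\mu[\Exp]$ is lower semi-continuous on $\Gamma(\mu_1,\mu_2)$ and invokes that an lsc function on a non-empty compact set attains its minimum.)

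The only genuine subtlety, and the step I expect to require the most care to state cleanly, is this interaction with the value $\infty$: the underlying compactness fact is entirely routine for finite support, but because $\Exp$ takes values in $\RRe$ the objective need not be continuous, so one must either carve out $\Gamma_0$ or argue lower semi-continuity. This is exactly the place where the general optimal-transport proof (cf.\ \citet{Villani08}) deploys tightness and lower-semicontinuity machinery; for finitely supported distributions it collapses to the bookkeeping above.
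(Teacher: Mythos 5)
Your proof is correct, and it takes a genuinely different and more elementary route than the paper's. You observe that for finitely supported marginals $\Gamma(\mu_1,\mu_2)$ is a non-empty compact polytope in $[0,1]^{S_1\times S_2}$, carve out the compact face $\Gamma_0$ of couplings giving no mass to $P_\infty=\{\Exp=\infty\}$, and minimize a bona fide linear functional over it; the $\Gamma_0=\emptyset$ case is handled by noting every coupling then gives $\infty$. The handling of the $\infty$ values is clean and the compactness/non-emptiness claims are all verified. The paper instead argues via sequential compactness of $[0,1]^{\mathcal{S}}$: it picks a minimizing sequence of couplings $\mu^{(k)}$ with finite cost, extracts a pointwise-convergent subsequence via its Theorem~\ref{thm:conv-couplings}, checks the limit is a coupling, and then invokes Fatou's lemma to show the limit coupling achieves the infimum. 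Your argument buys a completely elementary proof (finite-dimensional Heine--Borel plus continuity of a linear form) that uses the stated finiteness hypothesis head-on; the paper's argument is heavier but works unchanged for countable support. That extra generality is not idle: in the soundness proof for $\Rand{x}{d}$ the paper applies this lemma to output distributions that are only guaranteed to have \emph{countable} support, so the lemma statement's ``finite support'' is arguably a typo for ``countable support,'' and in that case your compactness-of-a-polytope argument would not apply as written, whereas the paper's subsequence-plus-Fatou argument is precisely what is needed. As a proof of the lemma literally stated, yours is correct and simpler; it just does not extend as far as the lemma is later used.
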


This is an extremely simple case of standard existence results
in the theory of optimal transport (see, e.g., Theorem 4.1 in Villani's
monograph~\cite{Villani08}). We include a proof to keep the exposition
self-contained.

\begin{proof}
  Let $d^* = \inf_{\mu \in \Gamma(\mu_1, \mu_2)} \EE_\mu [ \Exp ]$ be the
  infimum distance. If $d^* = \infty$ then the product coupling realizes the
  distance. Otherwise, suppose that the infimum $d^*$ is finite. By the
  definition of infimum, there exists a sequence of couplings $\mu^{(1)},
  \mu^{(2)}, \dots \in \Gamma(\mu_1, \mu_2)$ such that
  \[
    \lim_{k \to \infty} \EE_{\mu^{(k)}} [ \Exp ] = d^* .
  \]
  Without loss of generality, we may assume that for each $k$ the distance
  $\EE_{\mu^{(k)}} [ \Exp ]$ is finite as well.  Let $S = \cup_k
  \supp(\mu^{(k)})$ be the union of the supports of all $\mu^{(k)}$.  Since
  $\mu_1, \mu_2$ have countable support, $S$ is countable.  Since all the
  expected distances are finite, in fact all pairs of states $(s_1, s_2) \in S$
  have $\Exp(s_1, s_2) < \infty$. By Theorem~\ref{thm:conv-couplings} we can
  find a subsequence of $\mu^{(k)}$ that is converging pointwise; define:
  \[
    \mu(s_1, s_2) = \lim_{k \to \infty} \mu^{(k)} (s_1, s_2)
  \]
  for every $s_1, s_2 \in \Mem$, where the limit is taken over the subsequence
  (so it exists). Then $\mu$ is indeed a coupling in $\Gamma(\mu_1, \mu_2)$.  To
  show that $\mu$ realizes the infimum distance, we derive:
  \begin{align*}
    \EE_{\mu} [ \Exp ]
    &= \sum_{(s_1, s_2) \in S} \Exp(s_1, s_2) \cdot \mu(s_1, s_2) \\
    &= \sum_{(s_1, s_2) \in S} \Exp(s_1, s_2) \cdot \lim_{k \to \infty} \mu^{(k)}(s_1, s_2) \\
    &\leq \sum_{(s_1, s_2) \in S} \lim_{k \to \infty} \Exp(s_1, s_2) \cdot \mu^{(k)}(s_1, s_2) \\
    &\leq \lim_{k \to \infty} \sum_{(s_1, s_2) \in S} \Exp(s_1, s_2) \cdot \mu^{(k)}(s_1, s_2) \\
    &= \lim_{k \to \infty} \EE_{\mu^{(k)}} [ \Exp ] \\
    &= d^* .
  \end{align*}
  The first inequality is because $\Exp$ may take value infinity; the second
  inequality is by Fatou's lemma.
\end{proof}

Continuity proceeds in two steps. We first need a lemma about continuity of the
Kantorovich distance. While it seems challenging to establish this lemma for
distributions with infinite support, we establish it for distributions with
finite support. 

\begin{lem} \label{lem:fin-continuity}
  Let $\mu_1, \mu_2 \in \Dist(\Mem)$ be two distributions with finite support,
  and let $\Exp_n \colon \Mem \times \Mem \to \RRe$ be a monotonically
  increasing chain of relational expectations converging pointwise to $\Exp
  \colon \Mem \times \Mem \to \RRe$. Then:
  \[
    \inf_{\mu \in \Gamma(\mu_1, \mu_2)} \EE_\mu [ \Exp ]
    = \inf_{\mu \in \Gamma(\mu_1, \mu_2)} \EE_\mu [ \lim_{n \to \infty} \Exp_n ]
    = \lim_{n\to\infty} \inf_{\mu \in \Gamma(\mu_1, \mu_2)} \EE_\mu[\Exp_n] .
  \]
\end{lem}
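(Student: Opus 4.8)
The plan is to prove the two equalities separately. The first equality, $\inf_{\mu} \EE_\mu[\Exp] = \inf_\mu \EE_\mu[\lim_n \Exp_n]$, is immediate since $\Exp = \lim_n \Exp_n$ by hypothesis; it is stated only to emphasize that the infimum and the limit are being swapped in the second equality. So the real content is
\[
  \inf_{\mu \in \Gamma(\mu_1, \mu_2)} \EE_\mu\bigl[\textstyle\lim_{n\to\infty} \Exp_n\bigr]
  = \lim_{n\to\infty} \inf_{\mu \in \Gamma(\mu_1, \mu_2)} \EE_\mu[\Exp_n].
\]
First I would dispose of the direction ``$\geq$'' (i.e.\ the $\mathrm{lfp}$/limit side is at most the left side), which holds with no finiteness assumption: since $\Exp_n \leq \Exp$ pointwise, monotonicity of expectation gives $\EE_\mu[\Exp_n] \leq \EE_\mu[\Exp]$ for every coupling $\mu$, hence $\inf_\mu \EE_\mu[\Exp_n] \leq \inf_\mu \EE_\mu[\Exp]$; the chain $\inf_\mu\EE_\mu[\Exp_n]$ is monotonically increasing in $n$ (again by $\Exp_n \leq \Exp_{n+1}$), so its limit exists and is $\leq \inf_\mu \EE_\mu[\Exp]$.

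The interesting direction is ``$\leq$'': $\inf_\mu \EE_\mu[\Exp] \leq \lim_n \inf_\mu \EE_\mu[\Exp_n]$. Here is where finite support enters. For each $n$, by Lemma~\ref{lem:realize-min} (which applies precisely because $\mu_1,\mu_2$ have finite support and equal weight) there is a coupling $\mu^{(n)} \in \Gamma(\mu_1,\mu_2)$ realizing the infimum, i.e.\ $\EE_{\mu^{(n)}}[\Exp_n] = \inf_\mu \EE_\mu[\Exp_n]$. All these couplings are supported on the fixed finite set $\supp(\mu_1)\times\supp(\mu_2)$, so the sequence $\{\mu^{(n)}\}_n$ lives in a compact subset of $[0,1]^{\supp(\mu_1)\times\supp(\mu_2)}$; by Bolzano--Weierstrass (or directly Theorem~\ref{thm:conv-couplings}) it has a subsequence $\mu^{(n_k)}$ converging pointwise to some $\mu^\star \in \Gamma(\mu_1,\mu_2)$. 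Now I would estimate, for any fixed index $m$: since the support is finite and $\Exp_m$ is continuous in $\mu$ on this finite-dimensional simplex,
\[
  \EE_{\mu^\star}[\Exp_m] = \lim_{k\to\infty} \EE_{\mu^{(n_k)}}[\Exp_m]
  \leq \lim_{k\to\infty} \EE_{\mu^{(n_k)}}[\Exp_{n_k}]
  = \lim_{k\to\infty} \inf_\mu \EE_\mu[\Exp_{n_k}]
  = \lim_{n\to\infty} \inf_\mu \EE_\mu[\Exp_n],
\]
where the inequality uses $\Exp_m \leq \Exp_{n_k}$ for $k$ large enough (so that $n_k \geq m$), valid by monotonicity of the chain. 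Letting $m\to\infty$ and applying the monotone convergence theorem (the sum defining $\EE_{\mu^\star}[\cdot]$ is finite, over a finite set) gives $\EE_{\mu^\star}[\Exp] = \lim_m \EE_{\mu^\star}[\Exp_m] \leq \lim_n \inf_\mu \EE_\mu[\Exp_n]$. Since $\mu^\star \in \Gamma(\mu_1,\mu_2)$, this yields $\inf_\mu \EE_\mu[\Exp] \leq \lim_n \inf_\mu\EE_\mu[\Exp_n]$, completing the proof.

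The main obstacle — and the reason the finite-support hypothesis is essential — is the extraction of a convergent subsequence of optimal couplings together with the interchange of limits $\lim_k \EE_{\mu^{(n_k)}}[\Exp_m] = \EE_{\mu^\star}[\Exp_m]$. With infinite support this interchange can fail (mass can escape to infinity, Fatou only gives one inequality in the wrong direction here), and moreover one cannot even guarantee that an optimal coupling realizing each $\inf_\mu \EE_\mu[\Exp_n]$ exists. With finite support everything reduces to elementary facts about continuous functions on a compact simplex in $\RR^{|\supp(\mu_1)|\cdot|\supp(\mu_2)|}$, so no deep optimal-transport machinery is needed beyond Lemma~\ref{lem:realize-min} and Theorem~\ref{thm:conv-couplings}.
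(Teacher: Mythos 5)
Your proof takes essentially the same route as the paper's: obtain an optimal coupling $\mu^{(n)}$ for each $\Exp_n$ via Lemma~\ref{lem:realize-min}, use finite support and Bolzano--Weierstrass to extract a pointwise convergent subsequence $\mu^{(n_k)} \to \mu^\star \in \Gamma(\mu_1,\mu_2)$, and argue that $\mu^\star$ witnesses $\inf_\mu\EE_\mu[\Exp] \leq \lim_n\inf_\mu\EE_\mu[\Exp_n]$. Your version is a bit more streamlined: you need neither the paper's case split on whether $\inf_\mu\EE_\mu[\Exp]$ is finite or infinite, nor its preliminary refinement to a subsequence $\Exp'_n$ satisfying $\Exp'_n \geq n$ on the set $R_\infty$ where $\Exp$ is infinite.

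There is, however, one overclaim in the middle of your chain. You assert $\EE_{\mu^\star}[\Exp_m] = \lim_k \EE_{\mu^{(n_k)}}[\Exp_m]$ ``since the support is finite and $\Exp_m$ is continuous in $\mu$ on this finite-dimensional simplex.'' But $\Exp_m$ may take the value $\infty$, and under the convention $0 \cdot \infty = 0$ the map $\mu \mapsto \EE_\mu[\Exp_m]$ is only lower semicontinuous, not continuous: if $\Exp_m(s_1,s_2)=\infty$, $\mu^\star(s_1,s_2)=0$, but $\mu^{(n_k)}(s_1,s_2)>0$ for all $k$, the $(s_1,s_2)$-term contributes $0$ to the left-hand side and $\infty$ to each $\EE_{\mu^{(n_k)}}[\Exp_m]$, so equality fails. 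Your argument is salvageable because only the inequality $\EE_{\mu^\star}[\Exp_m] \leq \liminf_k \EE_{\mu^{(n_k)}}[\Exp_m]$ is actually needed; it does hold by a termwise comparison over the finite support, and the remaining steps in your chain go through with $\liminf$ in place of $\lim$. The paper avoids this subtlety by explicitly arranging that the limit coupling $\nu^\star$ places no mass on $R_\infty$ (by passing to $\Exp'_n \geq n$ on $R_\infty$ so the optimal couplings' mass there tends to zero) and then bounding $\Exp$ on $R_{<\infty}$ by its finite sup $M$ in an $\epsilon$-argument. You should either make your reliance on lower semicontinuity explicit, or follow the paper's more careful construction.
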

\begin{proof}
  If $\mu_1, \mu_2$ have different weights, then both infimums are infinity and
  we are done. It is not hard to show that
  \[
    \lim_{n \to \infty} \inf_{\mu \in \Gamma(\mu_1, \mu_2)} \EE_\mu[\Exp_n]
    \leq \inf_{\mu \in \Gamma(\mu_1, \mu_2)} \EE_\mu [ \Exp ] ,
  \]
  since $\Exp_n \leq \Exp$ and the coupling realizing the infimum (which exists by
  Lemma~\ref{lem:realize-min}) is a valid coupling in each of the limit terms. 

  Showing the other direction is more involved. Define the finite relations
  \begin{align*}
    R_{< \infty} &= \{ (s_1, s_2) \mid \Exp(s_1, s_2) < \infty \} \cap (\supp(\mu_1) \times \supp(\mu_2)) \\
    R_{\infty} &= (\supp(\mu_1) \times \supp(\mu_2)) \setminus R_{< \infty} .
  \end{align*}
  We first consider the case where
  \[
    \inf_{\mu \in \Gamma(\mu_1, \mu_2)} \EE_\mu [ \Exp ] = \infty .
  \]
  This means that every coupling must put weight on $R_{< \infty}$.
  To see this fact, note that the following infimum is realized by some
  coupling $\mu^*$:
  \[
    \inf_{\mu \in \Gamma(\mu_1, \mu_2)} \EE_\mu [ R_{ \infty} ] .
  \]
  If $\mu^*(R_{ \infty}) = 0$, then $\mu^*$ does not place any mass
  on points where $\Exp$ is infinity. Since $\mu^*$ has finite support, this
  means that $\EE_{\mu^*} [ \Exp ]$ would be finite, a contradiction. So, we
  have:
  \[
    \inf_{\mu \in \Gamma(\mu_1, \mu_2)} \EE_\mu [ R_{\infty} ]
    = \inf_{\mu \in \Gamma(\mu_1, \mu_2)} \EE_\mu [ R_{ \infty} ]
    \geq \rho
    > 0 .
  \]
  for some constant $\rho$. Now, let $M$ be any real number greater than $\rho$,
  and take $N$ large enough so that for every $(s_1, s_2) \in R_{\infty}$, we
  have $\Exp_n(s_1, s_2) > M / \rho$ for all $n > N$. Such an $N$ must exist
  since $R_{\infty}$ is finite, and $\Exp_n(s_1, s_2)$ is tending to infinity
  for $(s_1, s_2) \in R_{\infty}$. We now have
  \[
    \inf_{\mu \in \Gamma(\mu_1, \mu_2)} \EE_\mu [ \Exp_n ]
    \geq \inf_{\mu \in \Gamma(\mu_1, \mu_2)} \EE_\mu [ \ind{R_\infty} \cdot \Exp_n ]
    \geq (M / \rho) \cdot \rho 
    \geq M
  \]
  for all $n > N$. Since this is true for $M$ arbitrarily large, we must have
  \[
    \lim_{n \to \infty} \inf_{\mu \in \Gamma(\mu_1, \mu_2)} \EE_\mu [ \Exp_n ]
    = \infty
    \geq \inf_{\mu \in \Gamma(\mu_1, \mu_2)} \EE_\mu [ \Exp ]
  \]
  as claimed.

  Otherwise, suppose that the infimum is equal to $w^* < \infty$. Let $M =
  \sup_{(s_1, s_2) \in R_{< \infty}} \Exp(s_1, s_2)$ be the largest finite value
  assigned by $\Exp$.  Since $\Exp_n(s_1, s_2)$ tends to infinity for all $(s_1,
  s_2) \in R_\infty$ and $R_\infty$ is finite, we may take a subsequence
  $\Exp'_n$ such that $\Exp'_n(s_1, s_2) \geq n$ for all $(s_1, s_2) \in
  R_\infty$. Let $\nu'_i$ be a coupling realizing the infimum
  \[
    \inf_{\mu \in \Gamma(\mu_1, \mu_2)} \EE_{\mu} [ \Exp'_i ] .
  \]
  Since this infimum is less than $w^*$, we have $\nu'_i(s_1, s_2) < w^* / n$
  for every $(s_1, s_2) \in R_\infty$.  Since each $\nu'_i$ has finite support
  and takes values in $[0, 1]$, by the Bolzano-Weierstrass theorem there exists
  a subsequence $\nu''_i$ converging pointwise to $\nu^*$; we write $\Exp''_i$
  for the corresponding expectations. Note that $\nu''_i \in \Gamma(\mu_1,
  \mu_2)$ is a coupling, and $\nu''_i(R_\infty) = 0$.

  Now let $\epsilon > 0$. Let $N$ be such that for all $n > N$ and $(s_1, s_2)
  \in R_{< \infty}$, we have $|\nu''_n(s_1, s_2) - \nu^*(s_1, s_2)| < \epsilon /
  M$; such an $N$ exists since the distributions have finite support.  Then
  since $\Exp''_n(s_1, s_2) \leq \Exp(s_1, s_2) \leq M$ for all $(s_1, s_2 \in
  R_{< \infty}$, and $\nu^*(s_1, s_2) = 0$ for all $(s_1, s_2) \in R_\infty$, we
  have
  \[
    \EE_{\nu^*} [ \Exp''_n ] 
    < \inf_{\mu \in \Gamma(\mu_1, \mu_2)} \EE_{\mu} [ \Exp''_n ] + \epsilon
  \]
  for all $n > N$. The monotone convergence theorem implies:
  \[
    \EE_{\nu^*} [ \Exp ]
    = \lim_{n \to \infty} \EE_{\nu^*} [ \Exp''_n ]
    \leq \lim_{n \to \infty} \inf_{\mu \in \Gamma(\mu_1, \mu_2)} \EE_{\mu} [ \Exp''_n ] + \epsilon .
  \]
  On the other hand, we have the bound
  \[
    \inf_{\mu \in \Gamma(\mu_1, \mu_2)} \EE_{\mu} [ \Exp ]
    \leq \EE_{\nu^*} [ \Exp ] .
  \]
  Since both bounds hold for all $\epsilon$, we can conclude:
  \[
    \inf_{\mu \in \Gamma(\mu_1, \mu_2)} \EE_{\mu} [ \Exp ]
    \leq \lim_{n \to \infty} \inf_{\mu \in \Gamma(\mu_1, \mu_2)} \EE_{\mu} [ \Exp''_n ]
    = \lim_{n \to \infty} \inf_{\mu \in \Gamma(\mu_1, \mu_2)} \EE_{\mu} [ \Exp_n ] .
    \qedhere
  \]
\end{proof}

Now, we can prove continuity of relational pre-expectations, provided that
programs sample from distributions with \emph{finite} support. Note that such
programs can still produce distributions with infinite support, for instance by
sampling in a loop.

\begin{theorem*}[Continuity]
  Let $c$ be a program where all primitive distributions have finite support,
  and let $\Exp_n \colon \Mem \times \Mem \to \RRe$ be a monotonically
  increasing chain of relational expectations converging pointwise to $\Exp
  \colon \Mem \times \Mem \to \RRe$. Then,
  \[
	  \gcp{c}{\Exp} = \sup_{n\in\NN} \gcp{c}{\Exp_n} .
  \]
\end{theorem*}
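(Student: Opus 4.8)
The plan is to proceed by induction on the structure of $c$, exactly as in the soundness and monotonicity proofs. In every case the inequality $\gcp{c}{\Exp} \geq \sup_{n} \gcp{c}{\Exp_n}$ holds for free: each $\Exp_n \leq \Exp$, so monotonicity of $\gcpsymbol$ (Lemma~\ref{lem:monotonic}) gives $\gcp{c}{\Exp_n} \leq \gcp{c}{\Exp}$ for every $n$. The content of the theorem is therefore the reverse inequality $\gcp{c}{\Exp} \leq \sup_n \gcp{c}{\Exp_n}$, i.e.\ that $\gcpsymbol$ does not ``lose mass'' when the chain is collapsed first. For $\Skip$ and $\Assn{x}{e}$ this is immediate because the transformer is the identity (up to a substitution), which commutes with pointwise suprema. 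For $\Cond{e}{c}{c'}$ it follows from the induction hypotheses on the two branches together with two elementary facts about $\RRe = [0,\infty]$: the indicator coefficients $\ind{e\sidel \land e\sider}$, $\ind{\neg e\sidel \land \neg e\sider}$, $\ind{e\sidel \neq e\sider}$ do not depend on $n$, and both addition and multiplication by a fixed coefficient are Scott-continuous (preserve suprema of increasing chains). For $c; c'$ I would first rewrite $\gcp{c'}{\Exp} = \sup_n \gcp{c'}{\Exp_n}$ by the induction hypothesis on $c'$; the sequence $\gcp{c'}{\Exp_n}$ is again a monotone chain by Lemma~\ref{lem:monotonic}, so applying the induction hypothesis on $c$ to this chain gives $\gcp{c}{\gcp{c'}{\Exp}} = \sup_n \gcp{c}{\gcp{c'}{\Exp_n}} = \sup_n \gcp{c; c'}{\Exp_n}$.

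The sampling case $\Rand{x}{d}$ is exactly where the finiteness hypothesis on primitive distributions is used. Since $d$ has finite support, so do $\denot{\Rand{x}{d}}s_1$ and $\denot{\Rand{x}{d}}s_2$ for all states, and $\gcp{\Rand{x}{d}}{\Exp}(s_1,s_2)$ is the Kantorovich distance between two finitely supported distributions. Lemma~\ref{lem:fin-continuity} then applies verbatim and yields $\inf_{\mu}\EE_\mu[\Exp] = \lim_n \inf_{\mu}\EE_\mu[\Exp_n]$ for these couplings, i.e.\ $\gcp{\Rand{x}{d}}{\Exp} = \sup_n \gcp{\Rand{x}{d}}{\Exp_n}$ pointwise.

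The loop $\WWhile{e}{c}$ is the main obstacle, because we must commute the least-fixed-point construction with the supremum over the chain: $\gcp{\WWhile{e}{c}}{\Exp} = \mathsf{lfp}\,\Phi_{\Exp,c}$ and $\gcp{\WWhile{e}{c}}{\Exp_n} = \mathsf{lfp}\,\Phi_{\Exp_n,c}$. I plan three sub-steps. (i) Upgrade Lemma~\ref{lem:monotonic} to \emph{continuity} of the characteristic functional: for a monotone chain $Y_m \to Y$ in $\EXP$, the induction hypothesis applied to the loop body gives $\gcp{c}{\sup_m Y_m} = \sup_m \gcp{c}{Y_m}$, and the additive/indicator part of $\Phi$ is Scott-continuous as in the conditional case, so $\Phi_{\Exp,c}(\sup_m Y_m) = \sup_m \Phi_{\Exp,c}(Y_m)$, and likewise for each $\Phi_{\Exp_n,c}$; by Kleene's fixed-point theorem $\mathsf{lfp}\,\Phi_{\Exp,c} = \sup_k \Phi_{\Exp,c}^k(0)$ and $\mathsf{lfp}\,\Phi_{\Exp_n,c} = \sup_k \Phi_{\Exp_n,c}^k(0)$. (ii) Prove by an inner induction on $k$ that $\Phi_{\Exp,c}^k(0) = \sup_n \Phi_{\Exp_n,c}^k(0)$: for $k=0$ both sides are $0$; for the step, $\Phi_{\Exp,c}^{k+1}(0) = \Phi_{\Exp,c}\bigl(\sup_n \Phi_{\Exp_n,c}^k(0)\bigr)$ by the inner hypothesis, and unfolding $\Phi$, using continuity of $\gcp{c}{-}$ in its argument together with $\Exp = \sup_n \Exp_n$, rewrites this as $\sup_n \bigl(\ind{e\sidel \land e\sider}\cdot\gcp{c}{\Phi_{\Exp_n,c}^k(0)} + \ind{\neg e\sidel \land \neg e\sider}\cdot\Exp_n + \ind{e\sidel \neq e\sider}\cdot\infty\bigr) = \sup_n \Phi_{\Exp_n,c}^{k+1}(0)$, where merging the two $n$-indexed suprema uses that both are monotone chains and addition is Scott-continuous. (iii) Conclude by interchanging the two suprema, which is unconditionally valid in the complete lattice $\EXP$:
\[
  \gcp{\WWhile{e}{c}}{\Exp} = \sup_k \Phi_{\Exp,c}^k(0) = \sup_k \sup_n \Phi_{\Exp_n,c}^k(0) = \sup_n \sup_k \Phi_{\Exp_n,c}^k(0) = \sup_n \gcp{\WWhile{e}{c}}{\Exp_n}.
\]
The subtle point is that the loop case leans on the induction hypothesis for the body $c$ in two guises — once to make $\Phi$ Scott-continuous so that Kleene's theorem applies, and once inside the induction on $k$ — but since $c$ is structurally smaller than $\WWhile{e}{c}$ in both uses, there is no circularity.
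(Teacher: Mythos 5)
Your proof is correct and follows the same overall strategy as the paper's: structural induction, with the ``easy'' direction handled by monotonicity, substitution commuting with suprema for $\Skip$/$\Assn{x}{e}$, indicators and pointwise sums for the conditional, applying the induction hypothesis twice for sequencing, and invoking the finite-support continuity lemma (Lemma~\ref{lem:fin-continuity}) for the sampling case. The only place you genuinely diverge is the loop case, and there your version is actually more explicit and careful than the paper's. The paper first pushes $\sup_n$ inside $\Phi$ (easy, since $\Exp$ appears only in the $\ind{\neg e\sidel \land \neg e\sider}$ summand and $\gcp{c}{X}$ is $n$-independent), and then asserts without detailed justification that $\mathsf{lfp}\,X.\sup_n \Phi_{\Exp_n,c}(X) = \sup_m \sup_n \Phi_{\Exp_n,c}^m(0)$, appealing to Knaster--Tarski and ``supremum over the ordinals'' and then swapping the two suprema. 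You instead (i) use the induction hypothesis for the body $c$ to establish $\omega$-continuity of the characteristic functional, (ii) invoke Kleene's theorem so that the fixed point is a countable supremum of iterates, (iii) prove the key identity $\Phi_{\Exp,c}^k(0) = \sup_n \Phi_{\Exp_n,c}^k(0)$ by an inner induction on $k$ using the IH again, and (iv) interchange the two $\sup$'s, which is unconditionally valid. This fills in exactly the steps the paper glosses over --- in particular, the paper never notes that the IH for the body is what makes each $\Phi_{\Exp_n,c}$ continuous and thus makes $G^m(0) = \sup_n \Phi_{\Exp_n,c}^m(0)$ work at successor stages --- so your proof is, if anything, a more rigorous rendering of the same argument. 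One small stylistic simplification you could make: in merging $\sup_n A_n + \sup_n B_n = \sup_n (A_n + B_n)$ in step (ii), you can avoid appealing to Scott-continuity of $+$ by observing that the three indicator coefficients of $\Phi$ partition the state space, so at any fixed pair of states at most one summand is nonzero and the merge is trivial.
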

\begin{proof}[Proof of Theorem~\ref{thm:fin-continuity}]
	By induction on the structure of the program.
	\begin{itemize}
		\item $\Skip$. Then,
			\[ \gcp{\Skip}{\Exp} = \Exp = \sup_{n\in\NN} \Exp_n = \sup_{n\in\NN} \gcp{\Skip}{\Exp_n} \]

		\item $\Assn{x}{e}$. Then,
			\begin{align}
        \gcp{\Assn{x}{e}}{\Exp} 
				&= \Exp\{ e\sidel, e\sider / x\sidel, x\sider \}
        \notag \\
        &= \sup_{n\in\NN} \Exp_n\{ e\sidel, e\sider / x\sidel, x\sider \}
        \tag{subst. continuous} \\
				&= \sup_{n\in\NN} \gcp{\Assn{x}{e}}{\Exp_n} \notag
			\end{align}

    \item $\Rand{x}{d}$. Let $s_1, s_2$ be any two states. By assumption,
      $\denot{d}s_1$ and $\denot{d}s_2$ have finite support, so $\mu_1 =
      \denot{\Rand{x}{d}} s_1$ and $\mu_2 = \denot{\Rand{x}{d}} s_2$ also have
      finite support. By Lemma~\ref{lem:fin-continuity} applied to $\mu_1, \mu_2$,
      we have
      \[
	      \gcp{\Rand{x}{d}}{\Exp}(s_1, s_2) = \inf_{\mu\in\Gamma(\mu_1,\mu_2)}\EE_\mu[\Exp] 
	      = \lim_{n\in\NN} \inf_{\mu\in\Gamma(\mu_1,\mu_2)}\EE_\mu[\Exp_n] = \lim_{n\in\NN} \gcp{\Rand{x}{d}}{\Exp_n}(s_1, s_2) .
      \]
			By monotonicity, the $\sup$ and the $\lim$ coincide.
		\item $c; c'$. Then,
			\begin{align} 
				\gcp{c; c'}{\Exp} 
          &= \gcp{c}{\gcp{c'}{\Exp}}
          \notag \\
          &= \gcp{c}{\sup_{n\in\NN} \gcp{c'}{\Exp_n}}
          \tag{induction} \\
          &= \sup_{n\in\NN} \gcp{c}{\gcp{c'}{\Exp_n}}
          \tag{induction} \\
          &= \sup_{n\in\NN} \gcp{c;c'}{\Exp_n}
          \tag{definition}
			\end{align}

		\item $\Cond{e}{c}{c'}$. Then,
			\begin{align*}
        &\hspace{-2em}\gcp{\Cond{e}{c}{c'}}{\Exp}
        \notag \\
        &= \ind{e\sidel \land e\sider} \cdot \gcp{c}{\Exp}
        + \ind{\neg e\sidel \land \neg e\sider} \cdot \gcp{c'}{\Exp}
        + \ind{e\sidel \neq e\sider} \cdot \infty
        \notag \\
        &= \ind{e\sidel \land e\sider} \cdot \sup_{n\in\NN} \gcp{c}{\Exp_n}
        + \ind{\neg e\sidel \land \neg e\sider} \cdot \sup_{n\in\NN} \gcp{c'}{\Exp_n}
        + \ind{e\sidel \neq e\sider} \cdot \infty
        \tag{induction} \\
        &= \sup_{n\in\NN}(\ind{e\sidel \land e\sider} \cdot \gcp{c}{\Exp_n}
        + \ind{\neg e\sidel \land \neg e\sider} \cdot \gcp{c'}{\Exp_n}
        + \ind{e\sidel \neq e\sider} \cdot \infty)
        \notag \\
        &= \sup_{n\in\NN} \gcp{\Cond{e}{c}{c'} }{\Exp_n}
        \tag{definition} 
			\end{align*}

		\item $\WWhile{e}{c}$. Then, 
			\begin{align*} \gcp{\WWhile{e}{c}}{\Exp} &= {\sf lfp} X. \Phi_{c,\Exp}(X) \\
				\text{where}\ \Phi_{c,\Exp}(X) &\triangleq
 		 \ind{e\sidel \land e\sider} \cdot \gcp{c}{X}
		  + \ind{\neg e\sidel \land \neg e\sider} \cdot \Exp
	  	  + \ind{e\sidel \neq e\sider} \cdot \infty
		\end{align*}
		We claim that:	
    \begin{align}
      \gcp{\WWhile{e}{c}}{\Exp} 
      &= {\sf lfp} X. \Phi_{c, \sup_{n\in\NN} \Exp_n}(X)
      \notag \\
      &= {\sf lfp} X. \sup_{n\in\NN} \Phi_{c, \Exp_n}(X)
      \tag{1} \\
      &= \sup_{n\in\NN} {\sf lfp} X. \Phi_{c, \Exp_n}(X)
      \tag{2} \\
      &= \sup_{n\in\NN} \gcp{\WWhile{e}{c}}{\Exp_n} 
      \tag{definition} 
    \end{align}
		Equality (1) follows from:
		\begin{align*}
		{\sf lfp} X. \Phi_{c, \sup_{n\in\NN} \Exp_n}(X)
		&= {\sf lfp} X. \ind{e\sidel \land e\sider} \cdot \gcp{c}{X }
		+ \ind{\neg e\sidel \land \neg e\sider} \cdot \sup_{n\in\NN} \Exp_n
	  	+ \ind{e\sidel \neq e\sider} \cdot \infty \\
		&= {\sf lfp} X. \sup_{n\in\NN} (\ind{e\sidel \land e\sider} \cdot \gcp{c}{X}
		+ \ind{\neg e\sidel \land \neg e\sider} \cdot \Exp_n
		+ \ind{e\sidel \neq e\sider} \cdot \infty)
		\end{align*}
	To show (2) we note that---by the Knaster-Tarski fixpoint theorem and
	the fact that $\Phi_{c,\Exp_n}(X)$ is monotonic---${\sf lfp}$ is itself
	a supremum (over the ordinals), namely 
        \[ {\sf lfp} X. \sup_{n \in \NN} \Phi_{c,\Exp_n}(X) ~=~ \sup_{m} \sup_{n \in \NN} \Phi_{c, \Exp_n}^{m}(0). \]
        Hence, the two suprema can be swapped. \qedhere
	\end{itemize}
\end{proof}

We are now ready to show soundness (Theorem~\ref{thm:sound}).

\begin{theorem*}[Soundness]
  Let $c$ be a program, and suppose that $\Exp \colon \Mem \times \Mem \to \RRe$ is a
  relational expectation. Then
  \[
    \sgcp{c}{\Exp} \leq \gcp{c}{\Exp} ~.
  \]
  Equivalently, if $\gcp{c}{\Exp}(s_1, s_2)$ is finite for input states $s_1,
  s_2 \in \Mem$ then there exists a coupling $\mu_{s_1, s_2} \in
  \Gamma(\denot{c}s_1, \denot{c}s_2)$ such that
  \[
    \EE_{\mu_{s_1, s_2}} [ \Exp] \leq \gcp{c}{\Exp}(s_1, s_2) ~.
  \]
\end{theorem*}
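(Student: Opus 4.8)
The plan is to prove the statement by structural induction on $c$, exhibiting in each case a concrete coupling of $\denot{c}s_1$ and $\denot{c}s_2$ whose $\Exp$-cost is at most $\gcp{c}{\Exp}(s_1,s_2)$ (there is nothing to do unless this value is finite). The cases $\Skip$ and $\Assn{x}{e}$ are immediate: both output distributions are Dirac, so $\Gamma(\denot{c}s_1,\denot{c}s_2)$ is a singleton whose element realizes $\gcp{c}{\Exp}(s_1,s_2)$ by unfolding the definition. For $\Cond{e}{c}{c'}$: if $s_1(e)\ne s_2(e)$ the right-hand side is $\infty$ and there is nothing to show; otherwise $\denot{\Cond{e}{c}{c'}}s_i$ collapses to $\denot{c}s_i$ (or $\denot{c'}s_i$) and the transformer collapses correspondingly, so the induction hypothesis for the active branch supplies the coupling. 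For $c;c'$ I would apply the hypothesis for $c$ to the expectation $\gcp{c'}{\Exp}$ to get $\mu\in\Gamma(\denot{c}s_1,\denot{c}s_2)$ with $\EE_\mu[\gcp{c'}{\Exp}]\le\gcp{c;c'}{\Exp}(s_1,s_2)<\infty$; finiteness makes $\gcp{c'}{\Exp}$ $\mu$-almost surely finite, so the hypothesis for $c'$ provides couplings $\nu_{t_1,t_2}\in\Gamma(\denot{c'}t_1,\denot{c'}t_2)$ with $\EE_{\nu_{t_1,t_2}}[\Exp]\le\gcp{c'}{\Exp}(t_1,t_2)$ for $\mu$-a.e.\ $(t_1,t_2)$ (with an arbitrary, e.g.\ zero, choice on the $\mu$-null set). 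Because sequencing denotes a Kleisli bind, the glued distribution $\mu_{s_1,s_2}=\EE_{(t_1,t_2)\sim\mu}[\nu_{t_1,t_2}]$ lies in $\Gamma(\denot{c;c'}s_1,\denot{c;c'}s_2)$ (a one-line marginal computation), and by Tonelli $\EE_{\mu_{s_1,s_2}}[\Exp]=\EE_\mu[\lambda(t_1,t_2).\,\EE_{\nu_{t_1,t_2}}[\Exp]]\le\EE_\mu[\gcp{c'}{\Exp}]\le\gcp{c;c'}{\Exp}(s_1,s_2)$.

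For the sampling command $\Rand{x}{d}$, the transformer is by definition the Kantorovich distance $\inf_{\mu\in\Gamma(\mu_1,\mu_2)}\EE_\mu[\Exp]$ between $\mu_i=\denot{\Rand{x}{d}}s_i$, so the only content is that, when this infimum is finite, it is \emph{attained} by some coupling. I would take a minimizing sequence of couplings (dropping those of infinite $\Exp$-cost), use \cref{thm:conv-couplings} to extract a subsequence converging pointwise to some $\mu\in\Gamma(\mu_1,\mu_2)$, and conclude $\EE_\mu[\Exp]\le\liminf_j\EE_{\mu^{(j)}}[\Exp]$ by Fatou's lemma (\cref{lem:fatou}). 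This is precisely the content of \cref{lem:realize-min}, whose argument works for discrete (not only finite) support, so the cleanest route is to isolate it and invoke it here.

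The genuine difficulty is the loop $W=\WWhile{e}{c}$, where $\gcp{W}{\Exp}=\mathrm{lfp}\,\Psi$ with $\Psi\triangleq\Phi_{\Exp,c}$, and $\Psi$ is not known to be continuous, so one cannot simply commute the least fixed point with the limit $\denot{W}s=\lim_n\denot{c_n}s$ (recall $c_0=\Abort$, $c_{n+1}=\Condt{e}{c;c_n}$). My plan is to avoid continuity by reasoning through the finite unrollings. Setting $\gcp{\Abort}{\Exp}\triangleq 0$ (matching the unique, empty coupling of the zero subdistributions), unfolding the rules of \cref{fig:wp2-rules} gives $\gcp{c_n}{\Exp}=\Psi^n(0)$, and an auxiliary induction on $n$ — using the modular arguments already established for skip, conditionals and sequencing, plus the outer induction hypothesis for the body $c$ — shows soundness holds for every $c_n$ (here it is essential that $c_n$ is assembled only from $c$, skip, conditionals and sequencing). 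When $\gcp{W}{\Exp}(s_1,s_2)<\infty$, monotonicity of $\Psi$ (\cref{lem:monotonic}) gives $\Psi^n(0)(s_1,s_2)\le(\mathrm{lfp}\,\Psi)(s_1,s_2)<\infty$ for all $n$, so we obtain couplings $\mu_n\in\Gamma(\denot{c_n}s_1,\denot{c_n}s_2)$ with $\EE_{\mu_n}[\Exp]\le\gcp{W}{\Exp}(s_1,s_2)$; since $\denot{c_n}s_i\to\denot{W}s_i$ pointwise, \cref{thm:conv-couplings} extracts a subsequence $\mu_{n_k}\to\mu\in\Gamma(\denot{W}s_1,\denot{W}s_2)$, and Fatou's lemma yields $\EE_\mu[\Exp]\le\liminf_k\EE_{\mu_{n_k}}[\Exp]\le\gcp{W}{\Exp}(s_1,s_2)$, closing the induction. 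The points that require care are the identity $\gcp{c_n}{\Exp}=\Psi^n(0)$ together with $\sup_n\Psi^n(0)\le\mathrm{lfp}\,\Psi$, the legitimacy of the auxiliary induction on $n$ (which leaves the original grammar), and the bookkeeping in \cref{thm:conv-couplings} when the approximants are subdistributions of countable rather than finite support.
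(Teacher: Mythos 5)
Your proposal follows the paper's own proof essentially step for step: structural induction on $c$, with Dirac couplings for skip/assignment, the bind coupling for sequencing, the attained-infimum lemma (the paper's Lemma~\ref{lem:realize-min}, proved via \cref{thm:conv-couplings} and Fatou) for sampling, and for loops an auxiliary induction on finite unrollings combined with \cref{lem:monotonic}, Knaster-Tarski, \cref{thm:conv-couplings}, and Fatou's lemma. The only cosmetic difference is that you bottom out the unrolling at $\Abort$ with $\gcp{\Abort}{\Exp}\triangleq 0$ whereas the paper encodes the zero subdistribution as $\WWhile{\mathit{tt}}{\Skip}$ and derives $\gcp{c_0}{\Exp}=0$; these are equivalent.
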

\begin{proof}
  Given $v \in X$, we write $\dunit{v}$ for the point distribution centered at
  $v$, and given $\mu \in \Dist(X)$ and $f \colon X \to \Dist(X)$, we write
  $\dbind{\mu}{f}$ for the distribution bind. Throughout, let $(s_1, s_2) \in
  \Mem \times \Mem$ be two initial states. We prove the second, equivalent
  formulation by induction on the structure of $c$. Suppose that
  $\gcp{c}{\Exp}(s_1, s_2)$ is finite.
	\begin{itemize}
		\item 
			$\Skip$. Take the coupling $\dunit{s_1, s_2}$. 
			Then
			\begin{align*}
				\EE_{\delta(s_1,\, s_2)} [ \Exp] ~{}={}~ \Exp(s_1,\, s_2) ~{}={}~ \gcp{\Skip}{\Exp}(s_1, s_2)~.
			\end{align*}

		\item 
			$\Assn{x}{e}$. 
			Analogous to $\Skip$, but taking the coupling $\dunit{s_1', s_2'}$, where $s_i' =
      s_i[x \mapsto \denot{e}s_i]$.
      
		\item 
			$\Rand{x}{d}$. 
      Let $F \colon \Mem \to \Dist(\Mem)$ be defined as $F =
      \denot{\Rand{x}{d}}$. $F(s_1)$ and $F(s_2)$ must have equal weights for
      $\gcp{\Rand{x}{d}}{\Exp}(s_1, s_2)$ to be finite and evidently $F(s_1)$
      and $F(s_2)$ both have countable support, so Lemma~\ref{lem:realize-min}
      implies that there exists a coupling $\mu \in
      \Gamma(\denot{\Rand{x}{d}}s_1, \denot{\Rand{x}{d}}s_2)$ such that
      \[
        \EE_\mu [ \Exp ] = \Kant{\Exp}(\denot{\Rand{x}{d}}s_1, \denot{\Rand{x}{d}}s_2)
        = \gcp{\Rand{x}{d}}{\Exp}~.
      \]
    \item $c; c'$. By induction, there exists a coupling $\mu_{s_1, s_2}
      \in \Gamma(\denot{c}s_1, \denot{c}s_2)$ such that
      \[
        \EE_{\mu_{s_1, s_2}} [ \gcp{c'}{\Exp} ] \leq \gcp{c; c'}{\Exp}(s_1, s_2) < \infty .
      \]
      As a consequence, $\gcp{c'}{\Exp}(s_1', s_2')$ must be finite for all
      pairs $(s_1', s_2') \in \supp(\mu_{s_1, s_2}) \triangleq S$. Again by
      induction, for all $(s_1', s_2') \in S$ there exists a coupling
      $\mu'_{s_1', s_2'} \in \Gamma(\denot{c'}s_1', \denot{c'}s_2')$ such that
      \[
        \EE_{\mu'_{s_1', s_2'}} [ \Exp ] \leq \gcp{c'}{\Exp}(s_1',s_2') < \infty .
      \]
      Define the following joint distribution:
      \[
	      \mu^*_{s_1,s_2}(x_1,x_2) = \dbind{(y_1,y_2)\sim\mu_{s_1, s_2}}{\mu'_{y_1, y_2}(x_1,x_2)} .
      \]
      By a routine calculation, it is not hard to show that $\mu^*$ is indeed a
      coupling in $\Gamma(\denot{c; c'}s_1, \denot{c; c'}s_2)$. Let's for instance
      compute the first marginal (the second marginal is analogous):
      \begin{align*} 
	      \pi_1(\mu^*_{s_1,s_2})(x_1) &= \sum_{x_2 \in \Mem}\mu^*_{s_1,s_2}(x_1,x_2) \\
                                    &= \sum_{x_2 \in \Mem}\dbind{(y_1, y_2) \sim \mu_{s_1, s_2}}{\mu'_{y_1, y_2}(x_1,x_2)} \\
                                    &= \dbind{(y_1, y_2) \sim \mu_{s_1, s_2}}{\sum_{x_2 \in \Mem} \mu'_{y_1, y_2}(x_1,x_2)} \\
                                    &= \dbind{(y_1, y_2) \sim \mu_{s_1, s_2}}{(\denot{c'} y_1)(x_1)} \\
                                    &= \sum_{y_1 \in\Mem}\sum_{y_2 \in\Mem}{(\denot{c'} y_1)(x_1)\cdot \mu_{s_1,s_2}(y_1,y_2)} \\
                                    &= \left( \sum_{y_1 \in\Mem}(\denot{c'} y_1)(x_1)\right) \cdot \left( \sum_{y_2 \in\Mem} \mu_{s_1,s_2}(y_1,y_2) \right) \\
                                    &= \sum_{y_1 \in\Mem}{(\denot{c'} y_1)(x_1)\cdot (\denot{c} s_1)(y_1) } \\
                                    &= (\denot{c;c'}s_1)(x_1)
      \end{align*}
      Combining inequalities and applying monotonicity of expectations yields
      \[
        \dbind{\mu^*_{s_1, s_2}}{\Exp} = \dbind{\mu_{s_1,s_2}}{\dbind{\mu'_{-, -}}{\Exp}}
        \leq \dbind{\mu_{s_1,s_2}}{\gcp{c'}{\Exp}}
        \leq \gcp{c; c'}{\Exp}(s_1, s_2) .
      \]
	
	\item 
		$\Cond{e}{c}{c'}$. Note that $e$ cannot be different between $s_1$ and $s_2$,
      otherwise 
      \[\gcp{\Cond{e}{c}{c'}}{\Exp}(s_1, s_2)\] 
      is infinite. Thus, there
      are two possible cases: either $e$ is true in both $s_1, s_2$, or $e$ is
      false in both $s_1, s_2$. In the first case, we have:
      \[
        \denot{\Cond{e}{c}{c'}} s_i = \denot{c} s_i .
      \]
      By induction, there exists a coupling $\mu_t(s_1, s_2) \in
      \Gamma(\denot{c}s_1, \denot{c}s_2)$ such that
      \[
        \dbind{\mu_t(s_1, s_2)}{\Exp} \leq \gcp{c}{\Exp}(s_1, s_2)
        = \gcp{\Cond{e}{c}{c'}}{\Exp}(s_1, s_2)
      \]
      since the right-hand side is finite by assumption.
      Similarly, if $e$ is false in both $s_1$ and $s_2$, by induction there
      exists a coupling $\mu_f(s_1, s_2) \in \Gamma(\denot{c'}s_1,
      \denot{c'}s_2)$ such that
      \[
        \dbind{\mu_f(s_1, s_2)}{\Exp} \leq \gcp{c'}{\Exp}(s_1, s_2)
        = \gcp{\Cond{e}{c}{c'}}{\Exp}(s_1, s_2)
      \]
      since the right-hand side is finite by assumption.
      Thus, we can define the desired coupling by case analysis:
      \[
        \mu(s_1, s_2) \triangleq \begin{cases}
          \mu_t(s_1, s_2) &: \denot{e}s_1 = \denot{e}s_2 = \mathit{tt} \\
          \mu_f(s_1, s_2) &: \denot{e}s_1 = \denot{e}s_2 = \mathit{ff} \\
          &: \denot{e}s_1 \neq \denot{e}s_2 \quad \text{(impossible)} 
        \end{cases}
      \]
	\item 
		$\WWhile{e}{c}$. By induction on $c$, for any pair of states $s_1',
      s_2'$ and any expectation $\Exp_c$ such that $\gcp{c}{\Exp_c}(s_1', s_2')$
      is finite, there exists a coupling $\nu_{s_1, s_2} \in
      \Gamma(\denot{c}s_1, \denot{c}s_2)$ such that
      \[
        \EE_{\nu_{s_1, s_2}} [ \Exp_c ] \leq \Exp_c(s_1', s_2') .
      \]
      
      Now, let's consider the loop. We define the following loop approximants:
      \begin{align*}
        c_0 &\triangleq \WWhile{\mathit{tt}}{\Skip} \\
        c_{i+1} &\triangleq \Cond{e}{c;c_i}{\Skip}
      \end{align*}
      Each approximant executes at most $i$ iterations of the loop; the zero-th
      approximant returns the zero distribution and does not execute any
      iterations of the loop body. By definition, the relational
      pre-expectation of $\Exp$ with respect to $c_0$ is:
      \[
	      \gcp{c_0}{\Exp} = \mathsf{lfp}\ X. \Phi_{\Exp, \Skip} (X) ,
      \]
      where the characteristic functional of a loop $\WWhile{e}{c}$ is defined as:
      \[ \Phi_{\Exp, c} (X) \triangleq \ind{e\sidel \land e\sider} \cdot \gcp{c}{X}
        + \ind{\neg e\sidel \land \neg e\sider} \cdot \Exp
      + \ind{e\sidel \neq e\sider} \cdot \]
      It is easy to show that the constant zero relational expectation is a fixed
      point for the loop $c_0$, and evidently it must be the least fixed point. So,
      $\gcp{c_0}{\Exp} = 0$. Let
      \begin{align*}
        \Exp_0 &\triangleq \gcp{c_0}{\Exp} = 0  \\
        \Exp_{i+1} &\triangleq \gcp{c_{i+1}}{\Exp} = 
        \ind{e\sidel \land e\sider} \cdot \gcp{c}{\Exp_i}
        + \ind{\neg e\sidel \land \neg e\sider} \cdot \Exp
        + \cdot \ind{e\sidel \neq e\sider} \cdot \infty 
      \end{align*} 
      By induction and definition of relational pre-expectation, $\Exp_{i} =
      \Phi^{i}_{\Exp,c}(0)$. Furthermore, by monotonicity
      (Lemma~\ref{lem:monotonic}) $\Phi^{i}_{\Exp,c}(0)$ is a monotone
      increasing sequence in $i$.

      We now need two small lemmas.

      \begin{lem} \label{claim:below-lfp}
        For every $j \in \NN$, program $c$, and relational expectation $\Exp$,
        we have:
        \[
          \Phi^{j}_{\Exp,c}(0) \leq \mathsf{lfp} X. \Phi_{\Exp,c}(X)
          = \gcp{\WWhile{e}{c}}{\Exp} .
        \]
      \end{lem}
      \begin{proof}
        By induction on $j$. The base case $j = 0$ is clear, and the inductive
        step follows by monotonicity (Lemma~\ref{lem:monotonic}):
        \[
          \Phi^{j+1}_{\Exp,c}(0) = \Phi_{\Exp,c}(\Phi^i_{\Exp,c}(0))
          \leq \Phi_{\Exp,c}(\mathsf{lfp} X. \Phi_{\Exp,c}(X))
          = \mathsf{lfp} X. \Phi_{\Exp,c}(X) .
          \qedhere
        \]
      \end{proof}

      Now, let $(s_1, s_2)$ be two given input states such that
      $\gcp{\WWhile{e}{c}}{\Exp}(s_1, s_2) < \infty$. As an immediate
      consequence, $\Phi^i_{\Exp,c}(0)(s_1, s_2)$ must be finite for all $i$, so
      $\Exp_i(s_1, s_2)$ are all finite.
      
      \begin{lem} \label{claim:couple-approx}
        For all $j \in \NN$ and $(s_1', s_2') \in \Mem \times \Mem$, if
        $\Exp_j(s_1', s_2') < \infty$ then there exists a coupling $\mu_{j,
        s_1', s_2'} \in \Gamma(\denot{c_j}s_1', \denot{c_j}s_2')$ such that
        \[
          \EE_{\mu_{j, s_1', s_2'}}[ \Exp ] \leq \Exp_j(s_1', s_2') < \infty .
        \]
      \end{lem}
      \begin{proof}
        By induction on $j$. The base case $j = 0$ is clear, taking
        the null coupling that assigns probability zero to every pair of states.
        For the inductive step, we have
        \[
          \Exp_{j + 1}
          = \ind{e\sidel \land e\sider} \cdot \gcp{c}{\Exp_j}
          + \ind{\neg e\sidel \land \neg e\sider} \cdot \Exp
          + \ind{e\sidel \neq e\sider} \cdot \infty .
        \]
        Note that $e$ must be equal in $s_1'$ and $s_2'$, since $\Exp_{j +
        1}(s_1', s_2')$ is finite. There are two cases. If $e$ is false in $s_1'$
        and $s_2'$, then $\denot{c_{j + 1}}s_1' = \dunit{s_1'}$ and $\denot{c_{j +
        1}}s_2' = \dunit{s_2'}$. We can define the coupling $\mu_{s_1', s_2'} =
        \dunit{s_1', s_2'} \in \Gamma(\denot{c_{j + 1}}s_1', \denot{c_{j +
        1}}s_2')$ and we are done, since
        \[
          \EE_{\mu_{s_1', s_2'}} [ \Exp ] = \Exp(s_1', s_2') = \Exp_{j + 1}(s_1', s_2') .
        \]
        Otherwise, suppose that $e$ is true in $s_1'$ and $s_2'$. Since $\Exp_{j +
        1}(s_1', s_2') < \infty$, we must have $\gcp{c}{\Exp_j}(s_1', s_2') <
        \infty$ as well. Hence by the (outer) induction on the structure of the
        program, there exists a coupling $\nu_{s_1', s_2'} \in
        \Gamma(\denot{c}s_1', \denot{c}s_2')$ such that
        \[
          \EE_{\nu_{s_1', s_2'}} [ \Exp_j ] \leq \Exp_j(s_1', s_2') < \infty .
        \]
        As a result, for all states $(t_1, t_2) \in \supp(\nu_{s_1', s_2'})$, we
        must have $\Exp_j(t_1, t_2)$ finite as well. By the (inner) induction
        hypothesis on $j$, there is a coupling $\mu_{j, t_1, t_2} \in
        \Gamma(\denot{c_j}t_1, \denot{c_j}t_2)$ such that 
        \[
          \EE_{\mu_{j, t_1, t_2}}[ \Exp ] \leq \Exp_j(t_1, t_2) < \infty .
        \]
        Now, we can define the coupling for the $(j + 1)$-th approximants:
        \[
          \mu_{j + 1, s_1', s_2'} \triangleq \dbind{\nu_{s_1', s_2'}}{\mu_{j, -, -}}
        \]
        We first check the distance condition. By definition, we have:
        \begin{align*}
          \EE_{\mu_{j + 1, s_1', s_2'}} [ \Exp ]
        &= \EE_{(t_1, t_2) \sim \nu_{s_1', s_2'}} [ \EE_{\mu_{j, t_1, t_2}} [ \Exp ] ] \\
        &\leq \EE_{(t_1, t_2) \sim \nu_{s_1', s_2'}} [ \Exp_j(t_1, t_2) ] \\
        &\leq \Exp_j(s_1', s_2') \\
        &\leq \Exp_{j + 1}(s_1', s_2')
        \end{align*}
        The marginal condition is not hard to show, using the marginal properties
        of $\nu_{s_1', s_2'}$ and $\mu_{j, t_1, t_2}$ combined with the
        definition of approximants: since $e$ is true in $s_1'$ and $s_2'$, we
        have $\denot{c_{j + 1}}s_1' = \denot{c; c_j}s_1'$ and $\denot{c_{j + 1}}
        s_2' = \denot{c; c_j}s_2'$. The proof follows the case for sequential
        composition.
      \end{proof}

      Since $\Exp_i(s_1, s_2) < \infty$ by assumption, we may apply
      Lemma~\ref{claim:couple-approx} with input states $s_1, s_2$ and
      expectations $\Exp_i$ to produce a sequence of couplings $\mu_{i, s_1,
      s_2} \in \Gamma(\denot{c_i}s_1, \denot{c_i}s_2)$ such that
      \[
        \EE_{\mu_{i, s_1, s_2}} [ \Exp ]
        \leq \Exp_i(s_1, s_2)
        = \gcp{c_i}{\Exp}
        = \Phi^i_{\Exp, c}(0) 
        < \infty .
      \]
      By Theorem~\ref{thm:conv-couplings}, from the sequence $\mu_{i,s_1,s_2}$
      we can extract a subsequence $\mu'_{i,s_1, s_2}$ (and a corresponding
      subsequence $c_i'$ of $c_i$) that converges monotonically to a coupling
      satisfying
      \[
        \tilde{\mu}_{s_1,s_2}
        \in \Gamma(\lim_{i \to \infty} \denot{c_i'}s_1, \lim_{i \to \infty} \denot{c_i'} s_2)
        = \Gamma(\denot{\WWhile{e}{c}} s_1, \denot{\WWhile{e}{c}} s_2) ,
      \]
      by the definition of semantics for loops. All that remains to show is:
      \[
        \EE_{(s_1',s_2')\sim \tilde{\mu}_{s_1,s_2}}[\Exp(s_1',s_2')] \leq \gcp{\WWhile{e}{c}}{\Exp}(s_1,s_2) .
      \]
      We can compute:
      \begin{align}
        \EE_{(s_1',s_2')\sim \tilde{\mu}_{s_1,s_2}}[\Exp(s_1',s_2')] 
    &= \sum_{(s_1',s_2')\in \Mem\times\Mem} \Exp(s_1',s_2') \cdot \lim_{i \to \infty} \mu'_{i,s_1,s_2}(s_1',s_2')
    \notag \\
    &\leq \sum_{(s_1',s_2')\in \Mem\times\Mem} \lim_{i \to \infty} \Exp(s_1',s_2') \cdot \mu'_{i,s_1,s_2} (s_1',s_2')
    \tag{$\Exp$ may be $\infty$} \\
    &\leq \lim_{i \to \infty} \sum_{(s_1',s_2')\in \Mem\times\Mem} \Exp(s_1',s_2') \cdot \mu'_{i,s_1,s_2} (s_1',s_2')
    \tag{by Fatou's lemma} \\
    &\leq \lim_{i \to \infty} (\gcp{c'_i}{\Exp}(s_1,s_2))
    \tag{by construction} \\
    &= (\lim_{i \to \infty} \gcp{c'_i}{\Exp})(s_1,s_2)
    \tag{definition} \\
    &= \lim_{i \to \infty} (\Phi^i_{\Exp,c}(0))(s_1,s_2)
    \tag{subsequence} \\
    &\leq ({\sf lfp} X. \Phi_{\Exp,c}(X))(s_1,s_2)
    \tag{Lemma~\ref{claim:below-lfp}} \\
    &= \gcp{\WWhile{e}{c}}{\Exp}(s_1,s_2) .
    \tag{definition}
      \end{align}
      \qedhere
  \end{itemize}
\end{proof}

\section{Embedding Relational Hoare Logics} \label{sec:embeddings}

\begin{figure*}
\begin{mathpar}
  \inferrule*[Left=Assn]
  { ~ }
  { \vdash \ehl
    {\Assn{x_1}{e_1}}{\Assn{x_2}{e_2}}
    {Q[e_1\sidel, e_2\sider / x_1\sidel, x_2\sider]; \Exp[e_1\sidel, e_2\sider / x_1\sidel, x_2\sider]}
    {Q; \Exp}
  {\text{id}}}
  \and  
  \inferrule*[Left=Rand*]
  { h : D \to D \text{ bijection} }
  { \vdash \ehl
    {\Rand{x_1}{[D]}}
    {\Rand{x_2}{[D]}}
    {\forall v \in D.\, Q[v, h(v) / x_1\sidel, x_2\sider]; \EE_{v \sim [D]} [ \Exp[v, h(v) / x\sidel, x\sider] ] }
    {Q; \Exp} 
    {\text{id}}}
  \and
  \inferrule*[Left=Seq]
  { \vdash \ehl{c_1}{c_2}{P; \Exp}{Q; \Exp'}{f} \\
  \vdash \ehl{c_1'}{c_2'}{Q; \Exp'}{R; \Exp''}{f'} }
  { \vdash \ehl{c_1; c_1'}{c_2; c_2'}{P; \Exp}{R; \Exp''}{f' \circ f} }
  \and
  \inferrule*[Left=Cond]
  { \vdash \ehl{c_1}{c_2}{P \land e_1\sidel ; \Exp}{Q; \Exp'}{f}
  \\ \vdash \ehl{c_1'}{c_2'}{P \land \neg e_1\sidel ; \Exp}{Q; \Exp'}{f}
  \\ \models P \to e_1\sidel = e_2\sider }
  { \vdash \ehl{\Cond{e_1}{c_1}{c_1'}}{\Cond{e_2}{c_2}{c_2'}}
    {P ; \Exp}{Q; \Exp'}{f} }
  \and
  \inferrule*[Left=While]
  { \vdash \ehl{c_1}{c_2}
    {P \land v\sidel = k; \Exp_k}{P \land v\sidel = k - 1; \Exp_{k - 1}}{f_k}
    \\ \models P \to e_1\sidel = e_2\sidel \land (v\sidel \leq 0 \leftrightarrow \neg e_1\sidel ) }
  { \vdash \ehl{\WWhile{e_1}{c_1}}{\WWhile{e_2}{c_2}}
    {P \land v\sidel = n; \Exp_n}{P \land v \sidel = 0; \Exp_0}
    {f_1 \circ \cdots \circ f_n} }
  \and
  \inferrule*[Left=Conseq]
  { \vdash \ehl{c_1}{c_2}{P; \Exp}{Q; \Exp'}{f}
    \\ \models P' \to (P \land f(\Exp) \leq f'(\Exp''))
    \land Q \to Q' \land (\Exp''' \leq \Exp') }
  { \vdash \ehl{c_1}{c_2}{P'; \Exp''}{Q'; \Exp'''}{f'} }
  \and
  \inferrule*[Left=Case]
  { \vdash \ehl{c_1}{c_2}{P \land R; \Exp}{Q; \Exp'}{f}
  \\ \vdash \ehl{c_1}{c_2}{P \land \neg R; \Exp}{Q; \Exp'}{f} }
  { \vdash \ehl{c_1}{c_2}{P; \Exp}{Q; \Exp'}{f} }
  \and
  \inferrule*[Left=Frame-D]
  { \vdash \ehl{c_1}{c_2}{P; \Exp}{Q; \Exp'}{f}
  \\ f(z) = k \cdot z \text{ where } k \geq 1
  \\ FV(\Exp'') \cap MV(c_1, c_2) = \emptyset }
  { \vdash \ehl{c_1}{c_2}{P; \Exp + \Exp''}{Q; \Exp' + \Exp''}{f} }
\end{mathpar}
\caption{\eprhl\ proof pules}\label{fig:eprhl}
\end{figure*}

\begin{proof}[Proof of Theorem~\ref{thm:eprhl}]
  We adopt the convention that $f(\infty) \triangleq \infty$, even if $f$ is
  constant. The proof is by induction on the derivation.
  \begin{description}
    \item[Case \textsc{Assn}:]
      By definition, we have:
      \[
        \gcp{\Assn{x}{e}}{\Exp + [\neg Q] \cdot \infty}
        = \text{id} (\Exp \{ e\sidel, e\sider / x\sidel, x\sider \})
        + [ \neg Q \{ e\sidel, e\sider / x\sidel, x\sider \} ] \cdot \infty
      \]
    \item[Case \textsc{Rand*}:]
      By the proof rule for sampling (Proposition~\ref{prop:sampling-sound})
      taking the coupling function given by the bijection coupling $M(s_1, s_2)
      = M_h$, we have:
      \begin{align*}
        \gcp{\Rand{x}{[D]}}{\Exp + [\neg Q] \cdot \infty}
        &\leq \mathbb{E}_{v \sim \denot{[D]}} [ \Exp \{ v, h(v) / x\sidel, x\sider \}
        + [ \neg Q \{ v, h(v) / x\sidel, x\sider \} ] \cdot \infty ] \\
        &\leq \mathbb{E}_{v \sim \denot{[D]}} [ \Exp \{ v, h(v) / x\sidel, x\sider \} ] 
        + [ \forall v \in D.\ \neg Q \{ v, h(v) / x\sidel, x\sider \} ] \cdot \infty
      \end{align*}
      where the second inequality is because each element $v \in D$ has
      positive probability under $\denot{[D]}$, so if $\neg Q \{ v, h(v) /
      x\sidel, x\sider \}$ for some $v \in D$ then both sides are infinite.
    \item[Case \textsc{Seq}:]
      By induction, we have:
      \begin{align}
        \gcp{c'}{\Exp'' + [\neg R] \cdot \infty}
        &\leq f'(\Exp') + [\neg Q] \cdot \infty \tag{induction} \\
        \gcp{c}{\Exp' + [\neg Q] \cdot \infty}
        &\leq f(\Exp) + [\neg P] \cdot \infty \tag{induction}
      \end{align}
      Then, we can conclude:
      \begin{align}
        \gcp{c ; c'}{\Exp'' + [\neg R] \cdot \infty}
        &= \gcp{c}{\gcp{c'}{\Exp'' + [\neg R] \cdot \infty}} \tag{definition} \\
        &= \gcp{c}{f'(\Exp') + [\neg Q] \cdot \infty} \tag{monotonicity} \\
        &\leq f'(\gcp{c}{\Exp' + [\neg Q] \cdot \infty}) \tag{affine} \\
        &\leq f'(f(\Exp) + [\neg P] \cdot \infty) \tag{monotonicity} \\
        &= f' \circ f(\Exp) + [\neg P] \cdot \infty \notag
      \end{align}
    \item[Case \textsc{Cond}:]
      By induction, we have:
      \begin{align}
        \gcp{c}{\Exp' + [\neg Q] \cdot \infty}
        &\leq f(\Exp') + [\neg (P \land e\sidel)] \cdot \infty
        \tag{induction} \\
        \gcp{c'}{\Exp' + [\neg Q] \cdot \infty}
        &\leq f(\Exp') + [\neg (P \land \neg e\sidel)] \cdot \infty
        \tag{induction}
      \end{align}
      Then, we have:
      \begin{align}
        &\gcp{\Cond{e}{c}{c'}}{\Exp' + [\neg Q] \cdot \infty}
        \notag \\
        &= [e\sidel \land e\sider] \cdot \gcp{c}{\Exp' + [\neg Q] \cdot \infty}
        + [\neg e\sidel \land \neg e\sider] \cdot \gcp{c'}{\Exp' + [\neg Q] \cdot \infty}
        + [\neg (e\sidel = e\sider)] \cdot \infty
        \tag{definition} \\
        &\leq [e\sidel \land e\sider] \cdot (f(\Exp') + [\neg (P \land e\sidel)] \cdot \infty)
        + [\neg e\sidel \land \neg e\sider] \cdot (f(\Exp') + [\neg (P \land \neg e\sidel)] \cdot \infty)
        + [\neg (e\sidel = e\sider)] \cdot \infty
        \tag{induction} \\
        &\leq [e\sidel \land e\sider] \cdot (f(\Exp') + [\neg (P \land e\sidel)] \cdot \infty)
        + [\neg e\sidel \land \neg e\sider] \cdot (f(\Exp') + [\neg (P \land \neg e\sidel)] \cdot \infty)
        + [\neg P] \cdot \infty
        \tag{$\models P \to e\sidel = e\sider$} \\
        &= ([e\sidel \land e\sider] + [\neg e\sidel \land \neg e\sider]) \cdot f(\Exp') + [\neg P] \cdot \infty
        \tag{boolean alg.} \\
        &= f(\Exp') + [\neg P] \cdot \infty
        \tag{non-negative}
      \end{align}
    \item[Case \textsc{While}:]
      Let $n$ be any natural number. For any natural number $0 < m \leq n$, we
      write $f^{(m)} = f_1 \circ \cdots \circ f_{m}$ and we define $f^{(0)} =
      \text{id}$. We will show:
      \[
        \gcp{\WWhile{e}{c}}{\Exp_0 + [\neg (P \land v\sidel = 0)] \cdot \infty}
        \leq f^{(n)}(\Exp_n) + [\neg (P \land v\sidel = n)] \cdot \infty
      \]
      We take the following loop invariant:
      \[
        \Inv_n \triangleq \sum_{j = 0}^n ([ P \land v\sidel = j ] \cdot f^{(j)}(\Exp_j))
        + [\neg (P \land v\sidel \leq n)] \cdot \infty
      \]
      We claim that:
      \[
        [e\sidel \land e\sider] \cdot \gcp{c}{\Inv_n}
        + [\neg e\sidel \land \neg e\sider] \cdot (\Exp_0 + [\neg (P \land v\sidel = 0)] \cdot \infty)
        + [e\sidel \neq e\sider] \cdot \infty
        \leq \Inv_n .
      \]
      Both sides are infinite if $e\sidel \neq e\sider$, or $\neg (P \land
      v\sidel \leq n)$, or $\neg e\sidel \land \neg e\sider \land v \neq 0$. So,
      it suffices to prove:
      \[
        [P \land e\sidel \land e\sider \land v\sidel \leq n] \cdot \gcp{c}{\Inv_n}
        + [P \land \neg e\sidel \land \neg e\sider] \cdot \Exp_0
        \leq [P \land v\sidel \leq n] \cdot \Inv_n
        = \sum_{j = 0}^n [ P \land v\sidel = j ] \cdot f^{(j)}(\Exp_j) .
      \]
      If $P \land \neg e\sidel \land \neg e\sider$ holds, then $v\sidel = 0$
      holds by assumption. So by definition of $\Inv_n$:
      \[
        [P \land \neg e\sidel \land \neg e\sider] \cdot \Exp_0
        = [P \land v\sidel = 0] \cdot f^{(0)}(\Exp_0)
        \leq \Inv_n .
      \]
      If $P \land e\sidel \land e\sider \land v\sidel \leq n$ holds, then
      suppose that $v\sidel = k$ with $0 < k \leq n$. We have:
      \begin{align}
        &[P \land e\sidel \land e\sider \land v\sidel = k] \cdot \gcp{c}{\Inv_n}
        \notag \\
        &\leq [P \land e\sidel \land e\sider \land v\sidel = k] \cdot
        \gcp{c}{\sum_{j = 0}^n [ P \land v\sidel = j ] \cdot f^{(j)}(\Exp_j) + [\neg (P \land v\sidel = k - 1)] \cdot \infty}
        \tag{boolean alg.} \\
        &= [P \land e\sidel \land e\sider \land v\sidel = k] \cdot
        \gcp{c}{[ P \land v\sidel = k - 1 ] \cdot f^{(k - 1)}(\Exp_{k - 1}) + [\neg (P \land v\sidel = k - 1)] \cdot \infty}
        \tag{boolean alg.} \\
        &\leq [P \land e\sidel \land e\sider \land v\sidel = k] \cdot
        f^{(k - 1)}(\gcp{c}{\Exp_{k - 1} + [\neg (P \land v\sidel = k - 1)] \cdot \infty})
        \tag{affine} \\
        &\leq [P \land e\sidel \land e\sider \land v\sidel = k] \cdot
        f^{(k - 1)}(f_{k}(\Exp_k) + [\neg (P \land v\sidel = k)] \cdot \infty)
        \tag{induction} \\
        &= [P \land e\sidel \land e\sider \land v\sidel = k] \cdot f^{(k)}(\Exp_k)
        \tag{boolean alg.} \\
        &\leq [P \land e\sidel \land e\sider \land v\sidel = k] \cdot \Inv_n
        \tag{definition}
      \end{align}
      Above, we have used the induction hypothesis:
      \begin{align*}
        \gcp{c}{\Exp_{k - 1} + [\neg (P \land v\sidel = k - 1)] \cdot \infty}
        &\leq f_k(\Exp_k) + [\neg (P \land v\sidel = k)] \cdot \infty .
      \end{align*}
      By the loop rule, we can conclude:
      \[
        \gcp{\WWhile{e}{c}}{\Exp + [\neg (P \land v\sidel = 0)] \cdot \infty}
        \leq \Inv_n \leq f^{(n)}(\Exp_n) + [\neg (P \land v\sidel = n)] \cdot \infty .
      \]
    \item[Case \textsc{Conseq}:]
      By induction, we have:
      \begin{align}
        \gcp{c}{\Exp''' + [\neg Q'] \cdot \infty}
        &= \gcp{c}{[Q] \cdot \Exp''' + [\neg Q'] \cdot \infty}
        \tag{boolean alg.} \\
        &\leq \gcp{c}{\Exp' + [\neg Q'] \cdot \infty}
        \tag{monotonicity} \\
        &\leq f(\Exp) + [\neg P] \cdot \infty
        \tag{induction} \\
        &\leq f(\Exp) + [\neg P'] \cdot \infty
        \tag{assumption} \\
        &= [P'] \cdot f(\Exp) + [\neg P'] \cdot \infty
        \tag{boolean alg.} \\
        &\leq [P'] \cdot f'(\Exp'') + [\neg P'] \cdot \infty
        \tag{assumption} \\
        &\leq f'(\Exp'') + [\neg P'] \cdot \infty
        \notag
      \end{align}
    \item[Case \textsc{Case}:]
      By induction, we have:
      \begin{align}
        \gcp{c}{\Exp' + [\neg Q] \cdot \infty}
        &\leq f(\Exp') + [\neg (P \land R)] \cdot \infty
        \tag{induction} \\
        \gcp{c}{\Exp' + [\neg Q] \cdot \infty}
        &\leq f(\Exp') + [\neg (P \land \neg R)] \cdot \infty
        \tag{induction}
      \end{align}
      Thus, we have:
      \begin{align}
        \gcp{c}{\Exp' + [\neg Q] \cdot \infty}
        &\leq f(\Exp') + \min([\neg (P \land R)], [\neg (P \land \neg R)]) \cdot \infty
        \tag{induction} \\
        &\leq f(\Exp') + [\neg P] \cdot \infty
        \tag{boolean alg.}
      \end{align}
    \item[Case \textsc{Frame-D}:]
      We have:
      \begin{align}
        \gcp{c}{\Exp' + \Exp'' + [\neg Q] \cdot \infty}
        &\leq \gcp{c}{\Exp' + [\neg Q] \cdot \infty} + \Exp''
        \tag{frame} \\
        &\leq f(\Exp) + [\neg P] \cdot \infty + \Exp''
        \tag{induction} \\
        &\leq f(\Exp + \Exp'') + [\neg P] \cdot \infty
        \tag{$f$ expanding}
      \end{align}
  \end{description}
\end{proof}

\section{Convergence of TD(0): Omitted details}
\label{sec:td0}

We start by analyzing the inner loop~$w_{in}$. We first show that
\[
    \gcp{w_{in}}{\| W\sidel - W\sider \|_\infty} \leq \cI_{in}
\]
for the invariant $\cI_{in}$:
\begin{multline*}
	\cI_{in} \triangleq \ind{i\sidel \neq i\sider} \cdot \infty \\
                      + \ind{i\sidel = i\sider} \cdot \max_{l < |S|} (\ind{l < i\sidel} \cdot | W\sidel[l] - W\sider[l] |
                      + \ind{i\sidel\leq l}\cdot k\cdot \| V\sidel - V\sider \|_\infty ).
\end{multline*}
Let $c_{in}$ be the body, and $c_{samp}$ be the three sampling statements.
Applying \textsc{Inv}, it suffices to show:
\[
  \ind{i\sidel < |\cS| \land i\sider < |\cS|} \cdot \gcp{c_{in}}{\Inv_{in}}
  + \ind{i\sidel \geq |\cS| \land i\sider \geq |\cS|} \cdot \| W\sidel - W\sider \|_\infty
  + \ind{i\sidel \neq i\sider} \cdot \infty \leq \Inv_{in}
\]
We describe how to bound the key part of the invariant,
$\gcp{c_{in}}{\Inv_{in}}$ in the first term; the other cases are simpler.
Computing the pre-expectation for the last two instructions gives us
\[
  \gcp{c_{in}}{\Inv_{in}} \leq \gcp{c_{samp}}{\ind{i\sidel = i\sider} \cdot \mathcal{J}} ,
\]
where $\mathcal{J}$ is the following relational expectation:
\[\small \max_{l < |S|} 
    \left( \begin{array}{l}
\ind{l < i\sidel} \cdot | W\sidel[l] - W\sider[l] | \\
+ \ind{l = i\sidel}\cdot | \, (1 - \alpha) \cdot (V[i]\sidel - V[i]\sider)
  + \alpha \cdot (r\sidel - r\sider + \gamma \cdot (V[j]\sidel - V[j]\sider) ) \, | \\
+ \ind{i\sidel + 1 \leq l}\cdot k\cdot \| V\sidel - V\sider \|_\infty
    \end{array}\right).
\]
By taking an appropriate coupling, we will show that
$\gcp{c_{samp}}{\mathcal{J}}$ is at most $\Inv_{in}$. For sampling $j$, we
take the coupling function where if $a\sidel = a\sider$, then we take the
identity coupling ensuring $j\sidel = j\sider$, otherwise we take the product
coupling. We take the same coupling for sampling $r$. Finally for sampling $a$,
we take the identity coupling ensuring $a\sidel = a\sider$. When $i\sidel =
i\sider$, $j\sidel = j\sider$, and $r\sidel = r\sider$. By applying rule
\textsc{Samp}, we can upper-bound $\gcp{c_{in}}{\Inv_{in}}$ by $\ind{i\sidel =
i\sider = i}$ times:
\begin{equation} \label{eq:td0:big}
  \rho(v_a, v_r, v_j, i)
  \cdot \max_{l < |S|} 
  \left( \begin{array}{l}
      \ind{l < i} \cdot | W\sidel[l] - W\sider[l] | \\
      + \ind{l = i}\cdot | \, (1 - \alpha) \cdot (V[i]\sidel - V[i]\sider)
      + \alpha \cdot (\gamma \cdot (V\sidel[v_j] - V\sider[v_j]) ) \, | \\
      + \ind{i + 1 \leq l}\cdot k\cdot \| V\sidel - V\sider \|_\infty
  \end{array}\right).
\end{equation}
taking a sum over triples $(v_a, v_r, v_j) \in \cA \times \cR \times \cS$ and
$\rho(v_a, v_r, v_j, i)$ is the probability of drawing $v_a, v_r, v_j$ in current
state $i$; note that for any fixed $i < |\cS|$, the coefficients sum to $1$.

Now for any $l < |\cS|$ and any input states, at most one of the three summands
in the max is non-zero. We can bound the first and last summands:
\begin{align}
  \ind{l < i} \cdot | W\sidel[l] - W\sider[l] |
  &\leq \Inv_{in}
  \tag{since $l < i$} \\
  \ind{i + 1 \leq l}\cdot k\cdot \| V\sidel - V\sider \|_\infty
  &\leq \Inv_{in}
  \tag{since $i \leq l$}
\end{align}
For the second summand, we have:
\begin{align} 
  &\ind{l = i} \cdot | \, (1 - \alpha) \cdot (V\sidel[i] - V\sider[i]) + \alpha\gamma \cdot (V\sidel[v_j] - V\sider[v_j]) \, |
  \notag \\
  &\leq \ind{l = i}  \cdot | \, (1 - \alpha) \cdot \|V\sidel - V\sider\|_\infty + \alpha\gamma \cdot \|V\sidel - V\sider\|_\infty  \, |
  \notag \\
  &\leq \ind{l = i} \cdot k \cdot \|V\sidel - V\sider\|_\infty
  \notag \\
  &\leq \Inv_{in} \tag{since $i \leq l$}~,
\end{align}
Putting everything together, we have:
\begin{align*}
  \gcp{c_{in}}{\Inv_{in}}
  &\leq \ind{i\sidel = i\sider = i} \cdot \sum_{v_a, v_r, v_j} (\text{\cref{eq:td0:big}}) \\
  &\leq \ind{i\sidel = i\sider = i} \cdot \sum_{v_a, v_r, v_j} \rho(v_a, v_r, v_j, i) \cdot \Inv_{in} \\
  &= \Inv_{in} ,
\end{align*}
establishing the inner invariant.

\section{Verifying robustness of Projected Gradient Descent (PGD)}
\label{sec:pgd}

\begin{figure*}
\[
\begin{array}{l}
  \mathbf{pgd}(w_0, \alpha, T): \\
  \quad \Assn{w}{w_0}; \\
  \quad \Assn{t}{1}; \\
  \quad \WWhile{t < T}{} \\
  \quad\quad \Assn{g}{\nabla\ell(z, -)(w)}; \\
  \quad\quad \Assn{w}{\Pi_\Omega(w - \alpha_t \cdot g)}; \\
  \quad\quad \Assn{t}{t + 1}; 
\end{array}
\]
\caption{Projected Gradient Descent (PGD)} \label{fig:pgd}
\end{figure*}

This example is inspired by an analysis by Miller and Hardt~\cite{MillerH18}.
Let $\Omega \subseteq \RR^d$ be a compact and convex set of feasible parameters,
and let $\Pi_\Omega : \RR^d \to \Omega$ be the Euclidean projection sending a
point from $\RR^d$ to the closest point in $\Omega$ under the Euclidean
distance.  Given a loss function $\ell$, initial parameters $w_0$, and a
sequence of step sizes $\{ \alpha_t \}_t$, the program \textbf{pgd} in
Figure~\ref{fig:pgd} runs projected gradient descent for $T$ iterations.

Consider running this algorithm with two different loss functions
$\ell\sidel$ and $\ell\sider$, satisfying the following conditions:
\begin{enumerate}
  \item Gradients are close. For any parameter $w \in \RR^d$,
    \[
      \| \nabla \ell\sidel(z, -)(w) - \nabla \ell\sider(z, -)(w) \| \leq \gamma .
    \]
  \item Gradient of loss function is Lipschitz. For any two parameters $w, w'
    \in \RR^d$,
    \[
      \| \nabla \ell\sidel(z, -)(w) - \nabla\ell \sidel(z, -)(w') \| \leq \beta \| w - w' \| .
    \]
\end{enumerate}
Taking the step sizes $\alpha_t \leq \alpha / t$, we can bound the distance
between final weights $\| w\sidel - w\sider \|$ from running projected gradient
descent on the loss functions $\ell\sidel$ and $\ell\sider$ by showing the
following bound on the relational pre-expectation, which matches the analysis of \citet{MillerH18}:
\[
  \gcp{\mathbf{pgd}(w_0, \alpha, T)}{\| w\sidel - w\sider \|} \leq \alpha \gamma T^{\alpha \beta + 1}
\]
Intuitively, this property means that small changes to the loss function in PGD
do not lead to large changes in the learned parameters.

To start the proof, we take the following loop invariant:
\begin{align*}
  \Inv &\triangleq \ind{t\sidel \neq t\sider} \cdot \infty \\
       &+ \ind{t\sidel = t\sider} \cdot \|w\sidel - w\sider\| \prod_{j = t\sidel}^{T} (1 + \alpha_j \beta) \\
       &+ \ind{t\sidel = t\sider} \cdot \sum_{s = t\sidel}^{T} \alpha_s \gamma \prod_{j = s + 1}^{T} \exp(1 + \alpha_j \beta)
\end{align*}
To apply the loop rule, we need to check
\[
  \ind{(t < T)\sidel \land (t < T)\sider} \gcp{c}{\Inv}
  + \ind{(t \geq T)\sidel \land (t \geq T)\sider} \Exp
  + \ind{(t < T)\sidel \neq (t < T)\sider} \cdot \infty
  \leq \Inv .
\]
The main case is when both loop guards are true and when both loop counters are
equal. Taking the relational pre-expectation for the loop body in this case, we
have:
\begin{align*}
&\gcp{c}{\Inv}= \\ &= \| \Pi_\Omega(w - \alpha_t \cdot \nabla \ell(z, -)(w)) \sidel - \Pi_\Omega(w - \alpha_t \cdot \nabla \ell(z, -)(w)) \sider \|
	\hspace{-0.5em}  \prod_{j = t\sidel + 1}^{T} \!\!(1 + \alpha_j \beta)
                +\hspace{-0.5em} \sum_{s = t\sidel + 1}^{T} \!\!\alpha_s \gamma \prod_{j = s + 1}^{T} (1 + \alpha_j \beta) \\
                &\leq (\| w\sidel - w\sider \| + \alpha_t \|\nabla \ell(z, -)(w)\sidel - \nabla \ell(z, -)(w)\sider \|)
                \prod_{j = t\sidel + 1}^{T} (1 + \alpha_j \beta)
                + \sum_{s = t\sidel + 1}^{T} \alpha_s \gamma \prod_{j = s + 1}^{T} (1 + \alpha_j \beta) \\
                &\leq (\| w\sidel - w\sider \| + \alpha_t \beta \| w\sidel - w\sider \| + \alpha_t \gamma)
                \prod_{j = t\sidel + 1}^{T} (1 + \alpha_j \beta)
                + \sum_{s = t\sidel + 1}^{T} \alpha_s \gamma \prod_{j = s + 1}^{T} (1 + \alpha_j \beta) \\
                &= \Inv
\end{align*}
Pushing the invariant past the initial assignment instructions and taking the
same step sizes $\alpha_t \leq \alpha / t$ as Miller and Hardt~\cite{MillerH18},
we conclude:
\begin{align*}
  \gcp{\mathbf{pgd}(w_0, \alpha, T)}{\| w \sidel - w\sider \|}
  &\leq \sum_{s = 1}^{T} \alpha_s \gamma \prod_{j = s + 1}^{T} (1 + \alpha_j \beta) \\
  &\leq \sum_{s = 1}^{T} \alpha_s \gamma \prod_{j = s + 1}^{T} \exp(\alpha_j \beta) \\
  &\leq \sum_{s = 1}^{T} \frac{\alpha \gamma}{s} \prod_{j = s + 1}^{T} \exp \left( \frac{\alpha \beta}{j} \right) \\
  &= \sum_{s = 1}^{T} \frac{\alpha \gamma}{s} \exp \left( \alpha \beta\sum_{j = s + 1}^{T} \frac{1}{j} \right) \\
  &\leq \sum_{s = 1}^{T} \frac{\alpha \gamma}{s} \exp \left( \alpha \beta \log(T/s) \right) \\
  &\leq \alpha \gamma T^{\alpha \beta} \sum_{s = 1}^{T} \frac{1}{s^{\alpha \beta + 1}} \\
  &\leq \alpha \gamma T^{\alpha \beta + 1} .
\end{align*}

\section{Random-to-top: Omitted details}

\paragraph*{Axioms}
We assume a few axioms about the ${\sf shiftR}$ operation.  Let $a_1, a_2$ be
two decks and $J$ such that $\forall i. (0\leq i \leq J) \Rightarrow a_1[i] =
a_2[i]$. Then,
\begin{itemize}
	\item If $j \leq J$ and $a_i' =  {\sf shiftR}(a_i, j)$, then 
$\forall i. (0\leq i \leq J) \Rightarrow a_1'[i] = a_2'[i]$
	\item If $j_1,j_2 > J$, $a_i' =  {\sf shiftR}(a_i, j_i)$, and $a_1[j_1] = a_2[j_2]$, then
$\forall i. (0\leq i \leq J+1) \Rightarrow a_1'[i] = a_2'[i]$
\end{itemize}
Additionally, if $a$ is a permutation of $[N]$, then, for all $i<N$, so is ${\sf shiftR}(a,i)$.

\paragraph*{Establishing the invariant}
Let $C \triangleq (N-1)/N$. Recall the loop invariant:
\[ \Inv \triangleq \ind{k\sidel \neq k\sider} \cdot \infty + 
\ind{k\sidel = k\sider} \cdot d_M \cdot C^{\max(0, K-k\sidel)}  \]
We check that it satisfies the loop rule:
\[
  \ind{k\sidel < K \land k\sider < K} \cdot \gcp{c}{\Inv}
  + \ind{k\sidel \geq K \land k\sider \geq K} \cdot F
  + \ind{(k\sidel < K) \neq (k\sider < K)} \cdot \infty 
  \leq \Inv ,
\]
If $\ind{k\sidel \neq k\sider} \infty$ then the right-hand side of the
inequality is $\infty$, and it is satisfied.  Otherwise, if $\ind{k\sidel \geq K
\land k\sider \geq K}$ then we need to check that, indeed,
\[ 
	F \leq d_M \cdot C^{\max(0, K-k\sidel)} = d_M
\]
Finally, if $\ind{k\sidel < K \land k\sider < K}$, we compute the pre-expectation
of the loop body with respect to $\Inv$. Let $\Inv'$ be the pre-expectation of the
loop body without the sampling, i.e.,
\begin{align*}
	\Inv' &\triangleq \ind{k\sidel+1 \neq k\sider+1} \cdot \infty \\ 
	&+ \ind{k\sidel+1 = k\sider+1}
  \cdot (1/N) \cdot \left(N - \max_{i} \left(\forall j<i. a'\sidel[j] = a'\sider[j]\right) \cdot
  C^{\max(0, K-k\sidel-1)}\right)
\end{align*}
where $a'\sidel = {\sf shiftR}(a\sidel, y\sidel)$ and $a'\sider = {\sf
shiftR}(a\sider, y\sider)$.  In the following, let $l'$ denote $\max_{i} \left(\forall
j<i a'\sidel[j] = a'\sider[j]\right)$.  We pick a coupling induced by a bijection
$\pi$ such that, for all $z$, $a\sidel[z\sidel] = a\sider[\pi(z\sidel)]$. The
pre-expectation induced by this assignment is:
\begin{align*}
	\Inv'' & \triangleq \ind{k\sidel+1 \neq k\sider+1} \cdot \infty \\
	       & + \ind{k\sidel+1 = k\sider+1}
  \cdot (1/N) \cdot \left(N - \max_{i} (\forall j<i. a''\sidel[j] = a''\sider[j]) \cdot
C^{\max(0, K-k\sidel-1)}\right)
\end{align*}
where $a''\sidel = {\sf shiftR}(a\sidel, y)$ and $a'\sider = {\sf shiftR}(a\sider, \pi(y))$.

Now we have to compute the expected value of $\Inv''$ when we sample $y$
uniformly from $\unif{[N]}$.  There are two cases. If $y < l'$, then $\pi(y) = y$, and
$a\sidel[y] = a\sider[y]$, and
\[
\max_{i} (\forall j<i. a''\sidel[j] = a''\sider[j]) = \max_{i}(\forall j<i. a\sidel[j] = a\sider[j])
\]
where we use the first axiom of \textsf{shiftR}.  The probability of this
happening is precisely $l'/N = 1 - d_M$. In the other case, by the second axiom
of \textsf{shiftR}
\[
        \max_{i} (\forall j<i. a''\sidel[j] = a''\sider[j]) + 1 \leq \max_{i} (\forall j<i. a\sidel[j] = a\sider[j])
\]
This case happens with probability $d_M$. The inequality arises from the fact
that we may have matches below $l'$. From the expression above we derive:
\begin{align*}
	(1/ N)\left(N - \max_{i} (\forall j<i. a''\sidel[j] = a''\sider[j])\right) 
	& \leq (1/ N)\left(N - \max_{i} (\forall j<i. a\sidel[j] = a\sider[j]) - 1\right) \\
 	&= d_M - 1/N
\end{align*}
Using this inequality, we can bound the pre-expectation of the loop invariant
(simplifying under the assumptions $\ind{k\sidel \geq K \land k\sider \geq K}$
and $\ind{k\sidel = k\sider}$):
\begin{align*}
  \EE_{\Rand{y}{\unif{[N]}}}[\Inv''] &\leq
	  (1-d_M) \cdot d_M \cdot C^{K-k\sidel-1} + d_M \cdot (d_M-1/N) \cdot C^{K-k\sidel-1} \\
	  &= C^{K-\sidel - 1} \cdot d_M \cdot ( (1-d_M) + (d_M-1/N)) \\
	  &= C^{K-\sidel - 1} \cdot d_M \cdot C \\
	  &= C^{K-\sidel} \cdot d_M = \Inv
\end{align*}
This finishes the proof of the premise of the loop rule. Note that we did not
explicitly compute the pre-expectation of the loop invariant, we just found an
upper bound which is enough to apply the loop rule.

\section{Uniform riffle: Omitted details}

\paragraph*{Axioms}
We use some axioms about permutations, filtering, and concatenation.
\begin{itemize}
  \item Let $\mathsf{perm}(a_1, a_2)$ be the predicate that $a_1$ and $a_2$ are
    permutations of $C$. Then if we split a deck into two pieces and concatenate
    them, the result is a permutation of the original. Formally, for any
    bit-vector $b$ we have:
    \[
      \mathsf{perm}(a, \mathsf{cat}(a(\bar{b}), a(b)))
    \]
  \item Let $a_1, a_2$ be permutations, $b_1, b_2$ be bitstrings, and $a_1',
    a_2'$ be
    \[
      a_i' = \mathsf{cat}(a_i(\bar{b_i}), a_i(b_i)) .
    \]
    Then if $b_1, b_2$ match cards in $a_1, a_2$, i.e., $b_1 \circ a_1^{-1} =
    b_2 \circ a_2^{-1}$, then we can bound the size of blocks in the block
    decomposition of $a_1', a_2'$ as:
    \[
      \forall c \in [C].\, |BD(a_1', a_2')(c)|
      \leq \bar{b}(a_1^{-1}(c)) (\bar{b}(BD(a_1, a_2)(c))) + b(a_1^{-1}(c)) (b(BD(a_1, a_2)(c)))
    \]
    where we write $b(P)$ and $\bar{b}(P)$ to mean the total number of ones in
    $b$ and $\bar{b}$ at the positions $P$.
  \item Summing the previous bound over all cards gives:
    \[
      \sum_{c \in C} |BD(a_1', a_2')(c)|
      \leq \sum_{[c] \in BD(a_1, a_2)} \bar{b}(BD(a_1, a_2)(c))^2 + b(BD(a_1, a_2)(c))^2
    \]
    where the right-hand side sums over the equivalence classes of
    cards/positions induced by the block decomposition.
\end{itemize}

\paragraph*{Defining the distance}

Defining the distance between decks requires some care.
Consider the following distance based on positions:
\[
  d_P(\mathit{deck}_1, \mathit{deck}_2) \triangleq (1/N^2) \sum_{c \in C} |\mathit{deck}_1^{-1}(c) - \mathit{deck}_2^{-1}(c)|
\]
This distance measures the total difference between the positions of each card
in $\mathit{deck}_1$ and its counterpart in $\mathit{deck}_2$, normalized to be
in $[0, 1]$; and $d_P = 0$ holds only when
$\mathit{deck}_1 = \mathit{deck}_2$.  However, it is not easy to directly show
that this distance is monotonically decreasing in expectation---indeed, some
terms in the sum may actually increase. Instead, we define an upper bound
$d_c$ on $|\mathit{deck}_1^{-1}(c) - \mathit{deck}_2^{-1}(c)|$ for every card.
The sum $d_M \triangleq 1/N^2 \sum_{c \in C} d_c$ will be an upper bound of
$d_P$, and $d_M$ decreases monotonically to zero.

We will define $d_c$ in terms of a few concepts from the theory of permutations.
Given two decks $\mathit{deck}_1, \mathit{deck}_2$ and a permutation $\pi$ on
positions taking $\mathit{deck}_1$ to $\mathit{deck}_2$, there is a unique
\emph{cyclical decomposition} of $\pi$, i.e., we can partition the positions
into $P_1, \dots, P_k$ such that $\pi$ moves positions in $P_i$ as a single
cycle. We define a \emph{block decomposition} of $\pi$ to be a partition of
the positions $B_1, \dots, B_j$ such that each block is contiguous, and $\pi$
acts as a permutation on each $B_i$. A block decomposition is \emph{minimal} if
no block can be further decomposed; it is not hard to show that a minimal block
decomposition must be unique. When $\mathit{deck}_1, \mathit{deck}_2$ are
permutations, we write $BD(\mathit{deck}_1, \mathit{deck}_2)$ for the block
decomposition induced by two decks $\mathit{deck}_1$ and $\mathit{deck}_2$.
Finally, to define the distance, for every card $c \in C$ we let:
\[
  d_c \triangleq |BD(\mathit{deck}_1, \mathit{deck}_2)(c)| - 1
\]
where $|BD(\mathit{deck}_1, \mathit{deck}_2)(c)|$ is the size of the block
containing card $c$ in $\mathit{deck}_1$ and $\mathit{deck}_2$; both positions
must be in the same block. The size of each block is at least $1$, and if the
distance $d_c$ is zero then $c$ must be at the same position in
$\mathit{deck}_1$ and $\mathit{deck}_2$. It is not hard to show that the size of
the $c$'s block is at least the difference in $c$'s position across
$\mathit{deck}_1$ and $\mathit{deck}_2$:
\[
  |\mathit{deck}_1^{-1}(c) - \mathit{deck}_2^{-1}(c)| \leq d_c
\]
so $d_c = 0$ implies that $c$ is at the same position in $\mathit{deck}_1$ and
$a_2$.  (However, the reverse implication may not hold.) As a result, we can
upper bound our target distance
\[
  d_P \leq \frac{1}{N^2} \sum_{c \in C} d_c = d_M .
\]
Now, we turn to the loop. Let $\Phi$ be the binary invariant
\[
  \Phi \triangleq
    \mathsf{perm}(deck\sidel, deck\sider)
    \wedge k\sidel = k\sider
    \wedge (b \circ deck^{-1})\sidel = (b\circ deck^{-1})\sider
\]
and take the following invariant expectation:
\[
  \Inv = \ind{\neg\Phi} \cdot \infty + \ind{\Phi} \cdot d_M \cdot (1/2)^{(K - k\sidel)_+}
\]
We want to verify that:
\[
\begin{array}{rl}
  &\ind{(k < K)\sidel \land (k < K)\sider} \cdot \gcp{\mathit{bd}}{\Inv} \\
  + &\ind{(k \geq K)\sidel \land (k \geq K)\sider} \cdot d_P \\
  + &\ind{(k < K)\sidel \neq (k < K)\sider} \cdot \infty \qquad \qquad \leq \quad \Inv,
\end{array}
\]
where $\mathit{bd}$ is the loop body.
The cases $\ind{(k \geq K)\sidel \land (k \geq K)\sider}$ and $\ind{(k <
K)\sidel \neq (k < K)\sider}$ are almost immediate. The main case is when
$\ind{(k < K)\sidel \land (k < K)\sider}$. Focusing on the case where $\Phi$
holds (otherwise there is nothing to show), this boils down to:
\[
\begin{array}{l}
  \EE_b [d_M(\mathsf{cat}(deck(\bar{b}), deck(b))\sidel, \mathsf{cat}(deck(\bar{b}), deck(b))\sider)]
  \leq \frac{1}{2} d_M ,
\end{array}
\]
i.e., each iteration of the loop halves the invariant, where the expected value
is taken over $b\sidel \sim \{ 0, 1 \}^N$ and $b\sider$ is coupled so that $(b
\circ deck^{-1}) \sidel = (b \circ deck^{-1})\sider$.  Above, we write $d_M(x_1,
x_2)$ as shorthand for $d_M[x_1, x_2 / \mathit{deck}\sidel,
\mathit{deck}\sider]$.

The inequality follows from the permutation axioms, and from the mean and
variance of the binomial distribution---for $\mathit{deck}_1, \mathit{deck}_2$
fixed, $\bar{b}(BD(\mathit{deck}_1, \mathit{deck}_2))$ and
$b(BD(\mathit{deck}_1, \mathit{deck}_2))$ each follow the binomial distribution
with $|BD(\mathit{deck}_1, \mathit{deck}_2)(c)|$ trials and parameter $1/2$.
This completes the proof for the body of the loop. Finally, we push the
invariant past the initialization of the procedure, and we have the bound:
\[
  \gcp{\mathbf{riffle}(deck, N, K)}{d_P} \leq \ind{\neg\Phi} + \ind{\Phi} \cdot d_M \cdot (1/2)^K
                                         \leq \ind{\neg\Phi} + \ind{\Phi} \cdot (1/2)^K .
\]
since the initial distance $d_M$ is at most $1$. Given that $d_P$ assigns
different decks a distance of at least $1/N^2$, Theorem~\ref{thm:scaled-tv}
implies that the TV distance between the deck distributions is at most
\[
	v(K,N) = \max_{d_1,d_2 \in [N]} TV(\denot{\mathbf{riffle}}(d_1,N,K), \denot{\mathbf{riffle}}(d_2,N,K) ) 
	\leq N^2 \left( \frac{1}{2} \right)^K ,
\]
so the distributions converge to one another and to the uniform distribution
exponentially quick. If we take $K \geq \log_2 (N^2\rho) $, $v(K)$ is
asymptotically bounded by $O(1/\rho)$ for large $N$. When setting $\rho = N$,
we establish the following guarantee.

\begin{theorem}
	Let $K = 3 \log N$, and $\mathit{Perm}([N])$ be the set of permutations over $N$. 
	For any initial permutation of $\mathit{deck}$,
	\[TV(\mathbf{riffle}(\mathit{deck},N,K), \mathbf{Unif}\{\mathit{Perm}([N])\}) \in \mathcal{O}(1/N) \] 
\end{theorem}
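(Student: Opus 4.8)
The plan is to combine the quantitative mixing bound for $\mathbf{riffle}$ already derived in this appendix with a separate uniformity argument for the limiting distribution in the style of \cref{sec:extensions}, and then substitute $K = 3\log N$.

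\textbf{Step 1 (mixing).} The analysis preceding this theorem---the coupling assigning each card the same bit on both sides, the block--decomposition distance $d_M$ upper bounding the position distance $d_P$, the loop invariant $\Inv$, \cref{thm:loop-sound}, and \cref{thm:scaled-tv}---already yields
\[
  v(K,N) = \max_{d_1,d_2 \in \mathit{Perm}([N])} TV\bigl(\denot{\mathbf{riffle}}(d_1,N,K),\, \denot{\mathbf{riffle}}(d_2,N,K)\bigr) \;\le\; N^2\,(1/2)^K .
\]
Since the right--hand side tends to $0$, the Markov chain of $\mathbf{riffle}$ has a unique stationary distribution $\eta$ and, by the facts recalled in \cref{sec:ex-shuffle}, $\max_\sigma TV(\denot{\mathbf{riffle}}(\sigma,N,K),\eta) \le v(K,N)$.

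\textbf{Step 2 (the limit is uniform).} Following \cref{sec:extensions}, fix targets $R_1,R_2 \in \mathit{Perm}([N])$, set $S_i(\mathit{deck}) \triangleq [\mathit{deck}=R_i]$, and let $\pi$ be the permutation on card names matching $R_1$ and $R_2$ position by position. Define a $\pi$--relative variant $d$ of the block distance $d_M$ (matched positions must carry $\pi$--related cards); as before $|S_1(\mathit{deck}\sidel)-S_2(\mathit{deck}\sider)| \le d \le 1$. Using the same coupling as in the main riffle proof but now assigning card $c$ on the left the same bit as the $\pi$--matched card on the right, the invariant computation goes through essentially verbatim (the $\mathsf{cat}$/partition/block--decomposition axioms apply to the $\pi$--shifted decks), giving $\gcp{\mathbf{riffle}(\mathit{deck},N,K)}{d} \le (1/2)^K$ on states where both decks agree. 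Starting both runs from $\mathit{deck}\sidel = \mathit{deck}\sider = \mathit{deck}$ and applying \textsc{Mono}, \cref{thm:sound}, and \cref{thm:abs-diff} with $\mu_1 = \mu_2 = \denot{\mathbf{riffle}}(\mathit{deck},N,K)$,
\[
  \bigl| \EE_{\mu_1}[S_1] - \EE_{\mu_2}[S_2] \bigr| \;\le\; \Kant{|S_1 - S_2|}(\mu_1,\mu_2) \;\le\; \gcp{\mathbf{riffle}(\mathit{deck},N,K)}{d} \;\le\; (1/2)^K .
\]
Letting $K\to\infty$ gives $\eta(R_1)=\eta(R_2)$; since $R_1,R_2$ were arbitrary, $\eta = \mathbf{Unif}\{\mathit{Perm}([N])\}$.

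\textbf{Step 3 (arithmetic).} Combining Steps 1 and 2, $TV(\mathbf{riffle}(\mathit{deck},N,K),\mathbf{Unif}\{\mathit{Perm}([N])\}) \le v(K,N) \le N^2(1/2)^K$, and taking $K = 3\log_2 N$ gives $N^2\cdot 2^{-3\log_2 N} = N^2\cdot N^{-3} = 1/N \in \mathcal{O}(1/N)$.

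The step I expect to be the main obstacle is Step 2: pinning down the $\pi$--relative distance so that it simultaneously upper bounds $|S_1-S_2|$ and still contracts by a factor $1/2$ per round under the shifted bit--coupling, and re-verifying that the permutation and block--decomposition axioms used in the main riffle proof remain valid once ``matched'' means ``$\pi$--related'' rather than ``identical''. Steps 1 and 3 are, respectively, already carried out in this appendix and elementary.
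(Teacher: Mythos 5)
Your proposal is correct and follows the same high-level path the paper takes: derive the mixing bound $v(K,N) \le N^2 (1/2)^K$ via the bit-coupling and the block-distance invariant, argue that the stationary distribution is uniform, and then substitute $K = 3\log_2 N$ to get $\mathcal{O}(1/N)$. The paper's own text simply asserts that the chain ``converges \dots to the uniform distribution'' and leans on the Section 7.1 sketch for random-to-top; your Step~2 is a reasonable elaboration of exactly that idea, transposed to the riffle coupling.

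One concrete slip in Step~2: you cannot simultaneously have $|S_1(\mathit{deck}\sidel) - S_2(\mathit{deck}\sider)| \le d$ and $d \le 1$ if $d$ is a \emph{normalized} ($[0,1]$-valued) $\pi$-relative variant of $d_M$. For instance, if $\mathit{deck}\sidel = R_1$ and $\mathit{deck}\sider$ differs from $R_2$ by a single adjacent transposition, then $|S_1 - S_2| = 1$ while the normalized distance is only $2/N^2$. The fix is the same one used for random-to-top in Section~7.1: take $d$ to be the \emph{unnormalized} $\pi$-relative block distance $\sum_{c} d_c^\pi$ (integer-valued, zero iff $\mathit{deck}\sider = \pi\circ\mathit{deck}\sidel$, bounded above by $N(N-1)$). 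Then $|S_1-S_2|\le d$ holds, the per-round contraction gives $\gcp{\mathbf{riffle}(\mathit{deck},N,K)}{d} \le N^2 (1/2)^K$ rather than $(1/2)^K$, and the conclusion $\eta = \mathbf{Unif}\{\mathit{Perm}([N])\}$ still follows since the bound tends to $0$ as $K\to\infty$. Since Step~3 only uses the Step~1 bound, this correction does not affect the arithmetic or the final $\mathcal{O}(1/N)$ claim.
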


\paragraph*{Establishing the invariant}
Recall that we need to show:
\[
  \EE_b [d_M(\mathsf{cat}(deck(\bar{b}), deck(b))\sidel, \mathsf{cat}(deck(\bar{b}), deck(b))\sider)]
  \leq \frac{1}{2} d_M(deck\sidel, deck\sider) ,
\]
i.e., each iteration of the loop halves the invariant, where the expected value
is taken over $b\sidel \sim \{ 0, 1 \}^N$ and $b\sider$ is coupled so that $(b
\circ deck^{-1}) \sidel = (b \circ deck^{-1})\sider$.  Writing $a_1, a_2 =
deck\sidel, \sider$, and $a_1', a_2' = \mathsf{cat}(deck(\bar{b}),
deck(b))\sidel, \sider$, and $b_1, b_2 = b\sidel, \sider$, the permutation
axioms give:
\begin{align*}
  \EE_{b} \left[ d_M(a_1', a_2') \right]
  &=  \frac{1}{N^2} \sum_{c \in C} \EE_{b} [ |BD(a_1', a_2')(c)| - 1] \\ 
  &\leq \frac{1}{N^2}
  \sum_{[c] \in BD(a_1, a_2)} \EE_{b} [ \bar{b}(BD(a_1, a_2)(c))^2 ] + \EE_b[
  b(BD(a_1, a_2)(c))^2 ] - |BD(a_1, a_2)(c)| \\
  &= \frac{1}{2 N^2}\sum_{[c] \in BD(a_1, a_2)} |BD(a_1, a_2)(c)|^2 - |BD(a_1, a_2)(c)| \\
  &= \frac{1}{2 N^2}\sum_{c \in C } (|BD(a_1, a_2)(c)| - 1) \\
  &= \frac{1}{2} d_M(a_1, a_2) .
\end{align*}

\section{Asynchronous rules: Omitted details} \label{sec:async-extra}

We prove soundness of the asynchronous rules.

\begin{proof}[Proof of \cref{thm:async}]
  We start with the rule for conditionals. Let $c$ be a program that is almost
  surely terminating, let $\Exp$ be a relational pre-expectation, and let $s_1,
  s_2 \in \Mem$ be two states. If $s_1(e) = s_2(e)$, then the bound follows from
  soundness of synchronous case (\cref{thm:sound}):
  \begin{align*}
    \sgcp{\Condt{e}{c}}{\Exp}(s_1, s_2)
    &\leq \gcp{\Condt{e}{c}}{\Exp}(s_1, s_2) \\
    &= (\ind{e\sidel \land e\sider} \cdot \gcp{c}{\Exp} + \ind{\neg e\sidel \land \neg e\sider}\cdot \Exp)(s_1, s_2) .
  \end{align*}
  Otherwise if $e$ is true in $s_1$ and false in $s_2$, then:
  \[
    \sgcp{\Condt{e}{c}}{\Exp}(s_1, s_2)
    = \inf_{\mu \in \Gamma(\denot{c}s_1, \dunit{s_2})} \EE_{\mu}[ \Exp ] 
    \leq \wpel{c}{\Exp}(s_1, s_2)
  \]
  by \cref{lem:wpe:sound}. The case where $e$ is false in $s_1$ and true in
  $s_2$ is almost identical.

  Next, we consider the asynchronous rule for loops. Let $\WWhile{e}{c}$ be
  almost surely terminating. We define a sequence of loop approximants:
  \begin{align*}
    c_0 &\triangleq \Skip \\
    c_{i+1} &\triangleq (\Condt{e}{c}) ; c_i
  \end{align*}
  When the loop is almost surely terminating, we have the following equivalence:
  \[
    \denot{\WWhile{e}{c}}s = \lim_{i \to \infty} (\denot{c_i}s)
  \]
  for any input state $s$, and the limit of distributions exists.

  Our overall argument proceeds much like the proof for the synchronous case. We
  first show that the least-fixed point of a characteristic function of the loop
  is an upper bound on pre-expectation. Then, we argue that the asynchronous
  loop rule shows that $\Inv$ is a fixed point with respect to the
  characteristic function, so it must also be an upper bound.
  We work with the following characteristic function:
  \begin{multline*}
    \Psi_{\Exp, c}(\Exp')
    \triangleq \ind{e\sidel \land e\sider} \cdot \gcp{c}{\Exp'}
    + \ind{\neg e\sidel \land \neg e\sider} \cdot \Exp \\
    + \ind{e\sidel \land \neg e\sider} \cdot \wpel{c}{\Exp'}
    + \ind{\neg e\sidel \land e\sider} \cdot \wper{c}{\Exp'} .
  \end{multline*}
  By \cref{lem:monotonic} and monotonicity of the weakest pre-expectation
  operator, the operator $\Psi_{\Exp, c}$ is monotone. Thus, the least
  fixed-point exists:
  \[
    \mathcal{L}_{\Exp, c} = {\sf lfp} X.\, \Psi_{\Exp, c}(X) .
  \]
  We can inductively define:
  \begin{align*}
    \Exp_0 &\triangleq \ind{\neg e\sidel \land \neg \sider} \cdot \Exp \\
    \Exp_{i + 1} &\triangleq \ind{e\sidel \land e\sider} \cdot \gcp{c}{\Exp_i}
                 + \ind{\neg e\sidel \land \neg e\sider} \cdot \Exp \\
                 &+ \ind{e\sidel \land \neg e\sider} \cdot \wpel{c}{\Exp_i}
                 + \ind{\neg e\sidel \land e\sider} \cdot \wper{c}{\Exp_i} .
  \end{align*}
  By definition $\Exp_i = \Psi^i_{\Exp, c}(\Exp_0)$, and by monotonicity
  $\Exp_i$ is a monotone increasing sequence. Furthermore, for any expectation
  $\Exp'$ we have $\Exp_0 \leq \Psi_{\Exp, c}(\Exp')$, hence $\Exp_0 \leq
  \mathcal{L}_{\Exp, c}$. By monotonicity of $\Psi_{\Exp, c}$, we have $\Exp_i
  \leq \mathcal{L}_{\Exp, c}$ for every $i$.

  We now prove an analogue of \cref{claim:couple-approx}.
  
  \begin{lem} \label{claim:couple-approx-async}
    For all $j \in \NN$ and $(s_1', s_2') \in \Mem \times \Mem$, there exists
    $\mu_{j, s_1', s_2'} \in \Gamma(\denot{c_j}s_1', \denot{c_j}s_2')$ such that
    \[
      \EE_{\mu_{j, s_1', s_2'}}[ \Exp ]
      \leq \Exp_j(s_1', s_2') + (\rho_j(s_1') + \rho_j(s_2')) \cdot M_j(\Exp, s_1', s_2')
    \]
    where $\rho_j(s)$ is the probability of $e$ being true in $\denot{c_j}s$,
    and:
    \[
      M_j(\Exp, s_1', s_2') =
      \max \{ \Exp(t_1, t_2) \mid t_1 \in \supp(\denot{c_j}s_1'), t_2 \in \supp(\denot{c_j}s_2') \}
      .
    \]
  \end{lem}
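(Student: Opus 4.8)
The plan is to prove Lemma~\ref{claim:couple-approx-async} by induction on $j$, closely mirroring the proof of Lemma~\ref{claim:couple-approx} but tracking the extra error term that arises because the loop approximant $c_j$ can be ``interrupted'' while the guard is still true. For the base case $j = 0$ we have $c_0 = \Skip$, so $\denot{c_0}s_i' = \dunit{s_i'}$ and we take $\mu_{0, s_1', s_2'} = \dunit{s_1', s_2'}$; then $\EE_{\mu_{0,s_1',s_2'}}[\Exp] = \Exp(s_1', s_2')$. If $e$ is false in both $s_1'$ and $s_2'$, then $\Exp_0(s_1', s_2') = \Exp(s_1', s_2')$ and we are done with zero error; if $e$ is true in at least one of them, then $\rho_0(s_1') + \rho_0(s_2') \geq 1$ and $M_0(\Exp, s_1', s_2') \geq \Exp(s_1', s_2')$, so the right-hand side dominates even though $\Exp_0(s_1', s_2') = \ind{\neg e\sidel \land \neg e\sider} \cdot \Exp$ may be zero. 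This is exactly why the approximant-based argument needs the error term: unlike the synchronous case where $c_0$ is the everywhere-divergent loop (contributing the zero distribution), here $c_0 = \Skip$ returns a genuine distribution that may not satisfy the guard-negation condition built into $\Exp_0$.

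For the inductive step, fix $(s_1', s_2')$ and do a case analysis on the values of $e$ in the two states. If $e$ is false in both, then $\denot{c_{j+1}}s_i' = \dunit{s_i'}$, $\rho_{j+1}(s_i') = 0$, and the claim reduces to the base case. If $e$ is true in both, we use the synchronous soundness of $\gcpsymbol$ (\cref{thm:sound}): since $\Exp_{j+1}$ contains the term $\ind{e\sidel \land e\sider} \cdot \gcp{c}{\Exp_j}$, we get a coupling $\nu \in \Gamma(\denot{c}s_1', \denot{c}s_2')$ with $\EE_\nu[\Exp_j] \leq \gcp{c}{\Exp_j}(s_1', s_2')$; then we compose with the inductively-supplied couplings $\mu_{j, t_1, t_2}$ for $(t_1, t_2) \in \supp(\nu)$, exactly as in the sequential-composition case, and bound the accumulated error using $\rho_{j+1}(s_i') = \EE_{t_i \sim \denot{c}s_i'}[\rho_j(t_i)]$ together with the fact that $M_j(\Exp, t_1, t_2) \leq M_{j+1}(\Exp, s_1', s_2')$ since $\supp(\denot{c_j}t_i) \subseteq \supp(\denot{c_{j+1}}s_i')$. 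The genuinely asynchronous cases are when $e$ is true in $s_1'$ and false in $s_2'$ (and symmetrically): here $\denot{c_{j+1}}s_2' = \dunit{s_2'}$ while $\denot{c_{j+1}}s_1' = \denot{(c; c_j)}s_1'$. We have no coupling bound from $\gcpsymbol$, so instead we appeal to \cref{lem:wpe:sound}: the term $\ind{e\sidel \land \neg e\sider} \cdot \wpel{c}{\Exp_j}$ in $\Exp_{j+1}$ accounts for one step of the left execution, and then we must still couple the remaining $c_j$ steps on the left against $\Skip = c_0$ on the right, i.e., against a point mass. We bound this by holding the right state fixed and pushing the left execution through, incurring at most $M_{j}(\Exp, \cdot)$-sized error weighted by the left non-termination probability; the bookkeeping here is where the $\rho_j(s_1')$ contribution enters.

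The main obstacle will be the asynchronous case: constructing the coupling between $\denot{(c; c_j)}s_1'$ and $\dunit{s_2'}$ and carefully accounting for why the error is controlled by $(\rho_{j+1}(s_1') + \rho_{j+1}(s_2')) \cdot M_{j+1}(\Exp, s_1', s_2')$ rather than something larger. The key insight is that once the right execution has exited the loop (guard false), it stays put, so the only source of mismatch on the right is ``already terminated and waiting'', which contributes nothing; meanwhile the left execution, after its one guaranteed step via $\wpel{c}{-}$, may still be running, and the probability that it has not terminated after the remaining $j$ iterations is exactly what $\rho_{j+1}(s_1')$ (on this branch) measures. I would combine the one-sided soundness lemma with the inductive hypothesis applied ``with the right side frozen'', and bound the tail contribution $\EE_{\mu}[\Exp \cdot \ind{\text{left not done}}]$ by $\rho_{j+1}(s_1') \cdot M_{j+1}(\Exp, s_1', s_2')$. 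Finally, after the lemma is established, the theorem follows as in the synchronous proof: apply the lemma with expectations $\Exp_i$ and states $s_1, s_2$, extract a convergent subsequence of couplings via \cref{thm:conv-couplings} to obtain $\tilde\mu \in \Gamma(\denot{\WWhile{e}{c}}s_1, \denot{\WWhile{e}{c}}s_2)$, use the hypothesis $\lim_{i\to\infty}(\rho_i(s_1)+\rho_i(s_2)) \cdot M_i(\Exp, s_1, s_2) = 0$ together with Fatou's lemma to pass the error term to zero in the limit, and conclude $\EE_{\tilde\mu}[\Exp] \leq \sup_i \Exp_i(s_1, s_2) \leq \mathcal{L}_{\Exp, c}(s_1, s_2) \leq \Inv(s_1, s_2)$ by the invariant (prefixed-point) condition and Park induction.
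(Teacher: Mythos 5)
Your proposal takes essentially the same route as the paper: induction on $j$, case analysis on the guard values $(e\sidel, e\sider)$, using the synchronous soundness theorem (\cref{thm:sound}) and the coupling composition $\dbind{\nu}{\mu_j(-,-)}$ when both guards agree, and \cref{lem:wpe:sound} when they disagree, with the error term propagating via the identity $\rho_{j+1}(s_i') = \EE_{t_i \sim \denot{c}s_i'}[\rho_j(t_i)]$ and the support inclusion $\supp(\denot{c_j}t_i) \subseteq \supp(\denot{c_{j+1}}s_i')$ giving $M_j \leq M_{j+1}$. Your base-case elaboration and the observation about why the slack is unavoidable when $c_0 = \Skip$ are correct; the one small thing to make explicit in the asynchronous case is that $\denot{c_j}s_2' = \dunit{s_2'}$ whenever $e$ is false at $s_2'$, which is what lets the inductive hypothesis at $(t_1, s_2')$ (which couples against $\denot{c_j}s_2'$) compose against the single step of $\wpel{c}{-}$ to yield a coupling with $\dunit{s_2'}$ as its right marginal.
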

  \begin{proof}
    By induction on $j$. The base case $j = 0$ is clear, taking the coupling
    $\dunit{s_1', s_2'}$. For the inductive step, we have
    \begin{multline}
      \Exp_{j + 1} \triangleq \ind{e\sidel \land e\sider} \cdot \gcp{c}{\Exp_j}
      + \ind{\neg e\sidel \land \neg e\sider} \cdot \Exp \\
      + \ind{e\sidel \land \neg e\sider} \cdot \wpel{c}{\Exp_j}
      + \ind{\neg e\sidel \land e\sider} \cdot \wper{c}{\Exp_j} .
    \end{multline}
    There are four cases.

    \begin{description}
      \item[Case: $s_1'(e) = \mathit{ff}$ and $s_2'(e) = \mathit{ff}$.]
        In this case, $\denot{c_{j + 1}}s_1' = \dunit{s_1'}$ and $\denot{c_{j +
        1}}s_2' = \dunit{s_2'}$. We can define the coupling $\mu_{s_1', s_2'} =
        \dunit{s_1', s_2'} \in \Gamma(\denot{c_{j + 1}}s_1', \denot{c_{j +
        1}}s_2')$ and we are done, since
        \[
          \EE_{\mu_{s_1', s_2'}} [ \Exp ] = \Exp(s_1', s_2') = \Exp_{j + 1}(s_1', s_2') .
        \]
      \item[Case: $s_1'(e) = \mathit{tt}$ and $s_2'(e) = \mathit{tt}$.]
        In this case, $\denot{c_{j + 1}}s_1' = \denot{c; c_j}s_1'$ and
        $\denot{c_{j + 1}}s_2' = \denot{c; c_j}s_2'$. If $\Exp_{j + 1}(s_1',
        s_2')$ is infinite we are done, so suppose that it is finite. By
        \cref{thm:sound}, there exists a coupling $\nu_{s_1',
        s_2'} \in \Gamma(\denot{c}s_1', \denot{c}s_2')$ such that
        \[
          \EE_{\nu_{s_1', s_2'}} [ \Exp_j ] \leq \Exp_j(s_1', s_2') .
        \]
        By the induction hypothesis, there is a coupling $\mu_{j, t_1, t_2} \in
        \Gamma(\denot{c_j}t_1, \denot{c_j}t_2)$ such that 
        \[
          \EE_{\mu_{j, t_1, t_2}}[ \Exp ]
          \leq \Exp_j(t_1, t_2) + (\rho_j(t_1) + \rho_j(t_2)) \cdot M_j(\Exp, t_1, t_2) .
        \]
        Now, we can define the coupling for the $(j + 1)$-th approximants:
        \[
          \mu_{j + 1, s_1', s_2'} \triangleq \dbind{\nu_{s_1', s_2'}}{\mu_{j, -, -}}
        \]
        We first check the distance condition. By definition, we have:
        \begin{align*}
          \EE_{\mu_{j + 1, s_1', s_2'}} [ \Exp ]
    &= \EE_{(t_1, t_2) \sim \nu_{s_1', s_2'}} [ \EE_{\mu_{j, t_1, t_2}} [ \Exp ] ] \\
    &\leq \EE_{(t_1, t_2) \sim \nu_{s_1', s_2'}} [ \Exp_j(t_1, t_2) + (\rho_j(t_1) + \rho_j(t_2)) \cdot M_j(\Exp, t_1, t_2) ] \\
    &\leq \Exp_j(s_1', s_2') + (\rho_{j + 1}(s_1') + \rho_{j + 1}(s_2')) \cdot M_{j + 1}(\Exp, s_1', s_2') \\
    &\leq \Exp_{j + 1}(s_1', s_2') + (\rho_{j + 1}(s_1') + \rho_{j + 1}(s_2')) \cdot M_{j + 1}(\Exp, s_1', s_2') .
        \end{align*}
        The marginal condition is not hard to show, using the marginal properties
        of $\nu_{s_1', s_2'}$ and $\mu_{j, t_1, t_2}$ combined with the
        definition of approximants: since $e$ is true in $s_1'$ and $s_2'$, we
        have $\denot{c_{j + 1}}s_1' = \denot{c; c_j}s_1'$ and $\denot{c_{j + 1}}
        s_2' = \denot{c; c_j}s_2'$. The proof follows the case for sequential
        composition.
      \item[Case: $s_1'(e) = \mathit{tt}$ and $s_2'(e) = \mathit{ff}$.]
        In this case, $\denot{c_{j + 1}}s_1' = \denot{c; c_j}s_1'$ and
        $\denot{c_{j + 1}}s_2' = \denot{\Skip}s_2'$. By \cref{lem:wpe:sound},
        there exists a coupling $\nu_{s_1', s_2'} \in \Gamma(\denot{c}s_1',
        \denot{\Skip}s_2')$ such that
        \[
          \EE_{\nu_{s_1', s_2'}} [ \Exp_j ]
          \leq \wpel{c}{\Exp}(s_1', s_2')
          = \Exp_j(s_1', s_2') .
        \]
        By the induction hypothesis, there is a coupling $\mu_{j, t_1, t_2} \in
        \Gamma(\denot{c_j}t_1, \denot{c_j}t_2)$ such that 
        \[
          \EE_{\mu_{j, t_1, t_2}}[ \Exp ]
          \leq \Exp_j(t_1, t_2) + (\rho_j(t_1) + \rho_j(t_2)) \cdot M_j(\Exp, t_1, t_2) .
        \]
        Now, we can define the coupling for the $(j + 1)$-th approximants:
        \[
          \mu_{j + 1, s_1', s_2'} \triangleq \dbind{\nu_{s_1', s_2'}}{\mu_{j, -, -}}
        \]
        The distance and marginal conditions follow as in the previous case.
      \item[Case: $s_1'(e) = \mathit{ff}$ and $s_2'(e) = \mathit{tt}$.]
        In this case, $\denot{c_{j + 1}}s_1' = \denot{\Skip}s_1'$ and
        $\denot{c_{j + 1}}s_2' = \denot{c; c_j}s_2'$. By \cref{lem:wpe:sound},
        there exists a coupling $\nu_{s_1', s_2'} \in \Gamma(\denot{\Skip}s_1',
        \denot{c}s_2')$ such that
        \[
          \EE_{\nu_{s_1', s_2'}} [ \Exp_j ]
          \leq \wper{c}{\Exp}(s_1', s_2')
          = \Exp_j(s_1', s_2') .
        \]
        By the induction hypothesis, there is a coupling $\mu_{j, t_1, t_2} \in
        \Gamma(\denot{c_j}t_1, \denot{c_j}t_2)$ such that 
        \[
          \EE_{\mu_{j, t_1, t_2}}[ \Exp ]
          \leq \Exp_j(t_1, t_2) + (\rho_j(t_1) + \rho_j(t_2)) \cdot M_j(\Exp, t_1, t_2) .
        \]
        Now, we can define the coupling for the $(j + 1)$-th approximants:
        \[
          \mu_{j + 1, s_1', s_2'} \triangleq \dbind{\nu_{s_1', s_2'}}{\mu_{j, -, -}}
        \]
        The distance and marginal conditions follow as in the previous case.
        \qedhere
    \end{description}
  \end{proof}

  Thus, we may apply Lemma~\ref{claim:couple-approx-async} with input states
  $s_1, s_2$ and expectations $\Exp_i$ to produce a sequence of couplings
  $\mu_{i, s_1, s_2} \in \Gamma(\denot{c_i}s_1, \denot{c_i}s_2)$ such that
  \begin{align*}
    \EE_{\mu_{i, s_1, s_2}} [ \Exp ]
    &\leq \Exp_i(s_1, s_2) + (\rho_i(s_1) + \rho_i(s_2)) \cdot M_i(\Exp, s_1, s_2)
    \\
    &= \Psi^i_{\Exp, c}(\Exp_0)(s_1, s_2) + (\rho_i(s_1) + \rho_i(s_2)) \cdot M_i(\Exp, s_1, s_2) .
  \end{align*}
  By Theorem~\ref{thm:conv-couplings}, we can extract a subsequence
  $\mu'_{i,s_1, s_2}$ (with a corresponding subsequence $c_i'$ of $c_i$) from
  the sequence $\mu_{i, s_1, s_2}$ that converges monotonically to a coupling
  satisfying
  \[
    \tilde{\mu}_{s_1,s_2}
    \in \Gamma(\lim_{i \to \infty} \denot{c_i'}s_1, \lim_{i \to \infty} \denot{c_i'} s_2)
    = \Gamma(\denot{\WWhile{e}{c}} s_1, \denot{\WWhile{e}{c}} s_2) ,
  \]
  where the equality holds because the loop is almost surely terminating. All
  that remains to show is:
  \[
    \EE_{(s_1',s_2')\sim \tilde{\mu}_{s_1,s_2}}[\Exp(s_1',s_2')] \leq \gcp{\WWhile{e}{c}}{\Exp}(s_1,s_2) .
  \]
  We can compute:
  \begin{align}
    \EE_{(s_1',s_2')\sim \tilde{\mu}_{s_1,s_2}}[\Exp(s_1',s_2')] 
    &= \sum_{(s_1',s_2')\in \Mem\times\Mem} \Exp(s_1',s_2') \cdot \lim_{i \to \infty} \mu'_{i,s_1,s_2}(s_1',s_2')
    \notag \\
    &\leq \sum_{(s_1',s_2')\in \Mem\times\Mem} \lim_{i \to \infty} \Exp(s_1',s_2') \cdot \mu'_{i,s_1,s_2} (s_1',s_2')
    \tag{$\Exp$ may be $\infty$} \\
    &\leq \lim_{i \to \infty} \sum_{(s_1',s_2')\in \Mem\times\Mem} \Exp(s_1',s_2') \cdot \mu'_{i,s_1,s_2} (s_1',s_2')
    \tag{by Fatou's lemma} \\
    &= (\lim_{i \to \infty} \Psi^i_{\Exp,c}(\Exp_0))(s_1,s_2)
    + \lim_{i \to \infty} (\rho_i'(s_1) + \rho_i'(s_2)) \cdot M_i'(\Exp, s_1, s_2)
    \tag{subsequence} \\
    &\leq \mathcal{L}_{\Exp, c}(s_1,s_2) .
    \tag{bounded assumption} 
  \end{align}
  Finally, the premise of the asynchronous loop rule implies that $\Psi_{\Exp,
  c}(\Inv) \leq \Inv$, i.e., $\Inv$ is a pre-fixed-point of $\Psi_{\Exp, c}$.
  Since $\mathcal{L}_{\Exp, c}$ is the least fixed point, we have:
  \[
    \sgcp{\WWhile{e}{c}}{\Exp}(s_1, s_2)
    \leq \EE_{(s_1',s_2')\sim \tilde{\mu}_{s_1,s_2}}[\Exp(s_1',s_2')] 
    \leq \mathcal{L}_{\Exp, c}(s_1,s_2)
    \leq \Inv(s_1, s_2) .
    \qedhere
  \]
\end{proof}

\end{document}